\numberwithin{equation}{section}
\newtheorem{thm}{Theorem}[section]
\newtheorem{prop}{Proposition}[section]
\newtheorem{lem}{Lemma}[section]
\newtheorem{coro}{Corollary}[section]
\newtheorem{rem}{Remark}[section]
\newtheorem{definition}{Definition}[section]
\newcommand{\bR}{\mathbb{R}}
\newcommand{\bE}{\mathbb{E}}
\newcommand{\bP}{\mathbb{P}}
\def\<{\left<}\def\>{\right>}
\def\({\left(}\def\){\right)}
\newcommand{\dd}{\ensuremath{\operatorname{d}\! }}
\newcommand{\ds}{\ensuremath{\operatorname{d}\! s}}
\begin{document}

\title[De Finetti's problem with fixed transaction costs and regime switching]{De Finetti's problem with fixed transaction costs\\ and regime switching}
 
\author{Wenyuan Wang$^1$}
\address{$^1$School of Mathematics and Statistics, Fujian Normal University, Fuzhou, Fujian, 350117, People's Republic of China; School of Mathematical Sciences, Xiamen University, Fujian 361005, People's Republic of China. Email: \url{wwywang@xmu.edu.cn}}
\author{Zuo Quan Xu$^2$}
\address{$^2$Department of Applied Mathematics, The Hong Kong Polytechnic University, Hong Kong, China. Email: \url{maxu@polyu.edu.hk}}
\author{Kazutoshi Yamazaki$^3$}
\address{$^3$The University of Queensland, School of Mathematics and Physics, St Lucia, Brisbane, 4072, QLD, Australia. Email: \url{k.yamazaki@uq.edu.au}}
\author{Kaixin Yan$^4$}
\address{$^4$School of Mathematical Sciences, Xiamen University, Xiamen, Fujian, 361005, P.R. China. Email: \url{kaixinyan@stu.xmu.edu.cn}}
\author{Xiaowen Zhou$^5$}
\address{$^5$Department of Mathematics and Statistics, Concordia University, Montreal, Quebec, Canada H3G 1M8. Email: \url{xiaowen.zhou@concordia.ca}} 

\subjclass[2020]{Primary: 93E20; Secondary: 91G50, 91G05
}
\keywords{De Finetti's problem, dividend payout, transaction cost, regime switching, two-barrier strategy}

\begin{abstract} 
In this paper, we examine a modified version of de Finetti's optimal dividend problem, incorporating fixed transaction costs and altering the surplus process by introducing two-valued drift and two-valued volatility coefficients. This modification aims to capture the transitions or adjustments in the company's financial status. We identify the optimal dividend strategy, which maximizes the expected total net dividend payments (after accounting for transaction costs) until ruin, as a two-barrier impulsive dividend strategy. Notably, the optimal strategy can be explicitly determined for almost all scenarios involving different drifts and volatility coefficients. Our primary focus is on exploring how changes in drift and volatility coefficients influence the optimal dividend strategy.
\end{abstract}

\maketitle

\section{Introduction }

De Finetti's optimal dividend problem is a classic stochastic control problem that seeks to determine the optimal timing for paying dividends to maximize the total expected dividends until the point of ruin. Due to discounting, dividends should be paid as soon as possible. However, these decisions must be made carefully to avoid increasing the risk of ruin. Despite its long history since de Finetti's original work \cite{deFinetti}, research in this area remains active at the intersection of control theory and financial/actuarial mathematics. Recent advances in the theory of stochastic processes and stochastic control have enabled the development of more realistic models and their solutions. Various stochastic processes, both Gaussian and non-Gaussian, have been employed as alternatives to the classical Brownian motion and Cramér-Lundberg processes. For a comprehensive review, we refer to \cite{AT09} and the references therein.

In this paper, we examine a diffusion model described by the stochastic differential equation (SDE) \eqref{def.X}, where the drift and volatility each take on two different values depending on the state. In the stochastic control literature, SDE \eqref{def.X} is associated with what is known as bang-bang control of diffusion. In \cite{Benes1980}, the optimally controlled process for the bounded velocity follower problem is the solution to \eqref{def.X} with $\mu_+=\mu_-$. In \cite{McNamara}, the optimal state equation is given by \eqref{def.X}. The solution to \eqref{def.X} with $\sigma_+=\sigma_-$ is proposed in \cite{Gerber} as a refracted risk model. Additionally, SDE \eqref{def.X} is used in local volatility models in mathematical finance; see, for example, \cite{Gairat}.

As a prototype for SDEs with discontinuous coefficients, it 
is also interesting to investigate various properties of the solution to (\ref{def.X}).
The transition density of solution $X$ to (\ref{def.X}) with $\sigma_+=\sigma_-$ is found in \cite{Karatzas1984} and applied to compute the optimal expected costs in the classical control problem treated in \cite{Benes1980}. The above transition density is also used in \cite{Chen22} to identify limiting distribution arising from the central limit theorem, under nonlinear expectation, for random variables with the same conditional variance but ambiguous means. Similarly, the transition density of solution $X$ to (\ref{def.X}) with $\mu_+=0=\mu_-$, called oscillating Brownian motion, is found in \cite{Keilson1978}, and used in \cite{Chen23} for computations concerning the central limit theorem, again under nonlinear expectation, for random variables with mean $0$ and varying conditional variance. More recently, an explicit expression for transition density of process $X$ in (\ref{def.X}) is obtained in \cite{Chen} using solution to the exit problem together with a perturbation approach, which can be applied to express the value function for the control problem in \cite{McNamara}.

We focus on a Brownian motion-driven surplus process, as defined in (\ref{def.X}), where the drift and diffusion coefficients vary depending on whether the process is above or below a fixed threshold. In most of the literature, spatially homogeneous processes, such as Brownian motion and Lévy processes, are used, often resulting in the optimality of a simple barrier-type strategy. In contrast, models involving processes with dynamics dependent on their current values are limited and typically yield non-analytical results. Despite the straightforward dynamics of our surplus process, it has practical applications: it offers a way to model ``regime-switching", where the regime changes depending on whether the surplus process is above or below the threshold. This approach is particularly suitable for modeling non-stationary premium rates and volatility that can vary based on the company's financial status. Our regime-switching model differs from the classical regime-switching models in the literature, as it is caused endogenously, while the latter is driven by exogenous factors due to transitions or adjustments in the economic system; see \cite{AngBeK02b}, \cite{AngTim}, \cite{Pelle} and \cite{So}.
Works studying exogenous regime-switching involved optimal dividend problems can be found in \cite{Azcue15}, \cite{JiaPis2012} and \cite{WeiWY}. A work that studies an endogenous regime-switching involved optimal dividend problem appears in \cite{WY21}.

Another extension we consider is the inclusion of fixed transaction costs, which transforms de Finetti's optimal dividend problem (from a regular/singular control problem) into an impulsive control problem. This extension makes the problem more practical but at the same time significantly more challenging. Typically, unlike the barrier strategy used in the absence of fixed costs, the objective becomes demonstrating the optimality of two-barrier strategies,
which we call $(z_1, z_2)$-strategy (also known as $(s, S)$-policy in the inventory control literature). According to such a strategy in the insurance/financial context, a dividend is paid immediately after the surplus reaches above the upper level $z_2$.
In the insurance/financial context, this strategy involves paying a dividend immediately after the surplus exceeds the upper level $z_2$, reducing the surplus to $z_1$. Identifying these two barriers and proving that the value function satisfies the associated quasi-variational inequality (QVI) is mathematically challenging. Although impulsive control is popular in inventory control problems for infinite-time horizon scenarios, the inclusion of fixed costs is relatively rare in de Finetti's problem, which is terminated at the time of ruin. 

Among these works, several different uncontrolled state processes have been considered: in \cite{Cade06} the surplus is governed by a Brownian motion with drift; in \cite{Bai10} and \cite{Paulsen} the income process follows the dynamics of a general diffusion process;
in \cite{BaiGuo} 
the Crem\'er-Lunderberg risk process is considered; in \cite{Hunting} the surplus process is a jump diffusion; 
and, in \cite{BK14} the surplus process is a spectrally positive L\'evy process. 

It is important to note that, to the best of the authors' knowledge, all existing contributions on de Finetti’s optimal dividend problem utilize spatially homogeneous processes or diffusion processes with regular drift and volatility coefficients as their uncontrolled reserve processes. The combination of non-regular drift and volatility coefficients with fixed transaction costs makes the problem particularly complex and demanding.
In such a setting, the optimal selections of $z_1$ and $z_2$ are interdependent and, in this case, also influenced by the drift/diffusion change-trigger barrier $a$. There are scenarios where both $z_1$ and $z_2$ are either above or below $a$, as well as cases where $z_2$ is above $a$ while $z_1$ is below $a$. Each scenario requires a different analysis and approach. To tackle this problem, we first solve the exit problem for \eqref{def.X} using a martingale approach, which allows us to derive explicit expressions for the expected dividend function under each $(z_1, z_2)$-strategy. Subsequently, we establish sufficient conditions for optimality. This is followed by a case-by-case analysis, demonstrating that by appropriately selecting the barriers, the candidate value function solves the quasi-variational inequality (QVI). As a result, we obtain explicit optimal strategies for all different parameter choices of $\mu_{\pm}$ and $\sigma_{\pm}$ in the model, which aids in better understanding and analyzing the connections between the optimal strategies and these parameters.

 The contributions of this paper can be summarized from two perspectives: one pertains to the problem formulation, and the other to the solution methodology.
We subsequently provide a detailed elaboration of these contributions.

From a problem formulation perspective, our main contribution is to solve, for the first time, an optimal dividend problem by simultaneously incorporating two critical features: (1) an endogenous state-triggered
switching mechanism, and (2) fixed transaction costs, thereby framing the problem as a challenging impulse control problem. While each feature has been studied in isolation, their combination has not been previously addressed. On the one hand, most existing studies model regime switching exogenously using a Markov chain (e.g., \cite{Azcue15}, \cite{JiaPis2012}, \cite{Pelle}, \cite{WeiWY}). A recent work by \cite{WY21} considers endogenous regime switching. However, none of these works incorporate fixed transaction costs. On the other hand, many works include fixed costs (e.g., \cite{BaiGuo}, \cite{BK14}, \cite{Cade06}), but their models are spatially homogeneous and exclude regime switching. Others (\cite{Bai10}, \cite{Paulsen}, \cite{Hunting}) allow state-dependent coefficients but impose strong regularity assumptions (e.g., Lipschitz continuity, differentiability). 
In contrast, our model allows for discontinuous drift and volatility coefficients, which change abruptly at a threshold and take distinct values on either side.

From a methodological perspective, existing methods for dividend problems with regime switching (e.g., \cite{JiaPis2012}, \cite{WY21}, \cite{WeiWY}) rely on regular or singular control and are not applicable to our impulse control setting with fixed transaction costs. Similarly, methods for fixed-cost problems (e.g., \cite{BaiGuo}, \cite{Bai10}, \cite{BK14}, \cite{Cade06}, \cite{Hunting}, \cite{Paulsen}) generally assume either spatial homogeneity, constant coefficients, or regular (smooth) state-dependence coefficients, none of which hold in our framework of discontinuous coefficients. 
To address these challenges, we develop a novel analytical methodology for solving the associated QVI, which includes the following four technical contributions:

First, through extensive technical analysis, we provide a complete and unified characterization of the piecewise convexity/concavity of the function $g$ (Propositions \ref{P.0.3}–\ref{prop.4.5}), identifying 11 mutually exclusive and collectively exhaustive cases (with 21 sub-cases) that fully describe the geometry of $g$ and facilitate the explicit construction of candidate optimal strategies.

Second, we introduce a novel construction-based method to characterize the set of candidate optimal impulse dividend strategies $\mathcal{M}_{\zeta}$ (Theorems \ref{thm2.1}–\ref{thm4.3}), yielding closed-form expressions for all relevant quantities and enabling rigorous sensitivity analysis and numerical implementation.

Third, in the proof of Theorem \ref{thm3.3.vr.}, we overcome the challenges posed by discontinuous coefficients by constructing new inequalities (see \eqref{30.add.new.x} and \eqref{mu-+}, etc.) and proposing three calibrated sufficient (of which the union of the first two is also necessary) conditions to verify optimality. 

Finally, to establish the monotonicity result in Proposition \ref{prop.2.5.add}, we employ a novel re-parameterization technique, transforming the analysis from the ``cost (i.e., $\beta$)" space to a more tractable ``slope of the value function" space. This idea may offer useful insights for other control problems.

The rest of the paper is organized as follows. Section \ref{sec2} formulates the problem, provides preliminary results, and introduces the two-barrier $(z_{1},z_{2})$-impulsive dividend strategy. In Section \ref{sec:negative}, we present a complete and explicit characterization of the optimal strategy among the class of $(z_{1},z_{2})$-impulsive dividend strategies. Section \ref{sec:main} is devoted to characterizing the optimal strategy to the targeted dividend control problem.
Some lengthy and technical proofs are provided in Appendix \ref{sec.4}.

\setcounter{section}{1}

\medskip
\section{Problem formulation and preliminary results} \label{sec2}
\setcounter{section}{2}

We fix a complete filtered probability space $(\Omega,\mathcal{F},\mathbb{F},\mathbb{P})$ throughout the paper, where $\mathbb{F}=(\mathcal{F}_t)_{t\geq0}$ is the filtration generated by a standard one-dimensional Brownian motion $B=({B}_t)_{t\geq0}$ defined in the space and satisfies the usual conditions. 
Fix constants $a\in(0,\infty)$, $\mu_{\pm}\in\bR$, and assume without loss of generality that $\sigma_{\pm}\in(0,\infty)$. 

Consider a stochastic differential equation (SDE):
\begin{align}
\label{def.X}
\dd X_t=(\sigma_+\mathbf{1}_{\{X_t>a\}}+\sigma_-\mathbf{1}_{\{X_t\leq a\}})\dd B_t+(\mu_+ \mathbf{1}_{\{X_t>a\}}+\mu_- \mathbf{1}_{\{X_t\leq a\}})\dd t,\quad t\geq0.
\end{align}
The existence and uniqueness of a strong solution to SDE \eqref{def.X} are guaranteed by Theorem 1.3 in Page 55 of \cite{LeGall84}. We use it to describe the surplus process of a company before paying dividend. 
{Since we are interested in the process until it ruins, we assume $a>0$, for otherwise, the process before its ruin time follows the classical drifted Brownian motion model.}
Clearly, one can deal with the case $a\leq0$ by shifting the process $X$ properly, so our model covers all values of $a\in\bR$. Of course, the corresponding ruin time shall be redefined. 

We will investigate an optimal impulsive dividend payout problem. To this end, we first introduce impulsive dividend payout strategies. An impulsive dividend strategy $\pi=(L^\pi_t)_{t\geq 0}$ is an $\mathbb{F}$-adapted non-decreasing, right continuous pure jump process such that $L^\pi_t=\sum_{0\leq s\leq t} \Delta L^\pi_s$ where $\Delta L_s^{\pi}=L_s^{\pi}-L_{s-}^{\pi}\geq 0$ with $L_{0-}^{\pi}=0$.

Applying an impulsive dividend payout strategy $\pi$ to the process \eqref{def.X}, we see the surplus process $U^{\pi}$ after paying dividend becomes
\begin{align}
\label{def.contr.proc.}
\dd U^\pi_t=(\sigma_+\mathbf{1}_{\{U^\pi_t>a\}}+\sigma_-\mathbf{1}_{\{U^\pi_t\leq a\}})\dd B_t+(\mu_+ \mathbf{1}_{\{U^\pi_t>a\}}+\mu_- \mathbf{1}_{\{U^\pi_t\leq a\}})\dd t-\dd L^\pi_t.
\end{align}
Define the ruin time of $U^\pi$ as $$T^{\pi}:=\inf\{t\geq0\colon U_t^{\pi}<0\},$$
where $\inf\emptyset=\infty$.
We fix a constant $q\in(0,\infty)$ to represent the discount rate. 
\begin{definition}
An impulsive dividend payout strategy $\pi=(L^\pi_t)_{t\geq 0}$ is called admissible if $$0\leq \Delta L_{t}^{\pi}=L_{t}^{\pi}-L_{t-}^{\pi} \leq U_{t-}^{\pi}\vee0$$ for any $t\geq0$ (i.e., the amount of a lump sum of dividend payout is not allowed to make the company ruin),
$$\mathbb{E}_x\Bigg[\sum_{0\leq s\leq T^{\pi}}e^{-qs}\Delta L^{\pi}_s\mathbf{1}_{\{\Delta L^{\pi}_s>0\}}\Bigg]<\infty,$$
and SDE \eqref{def.contr.proc.} with any initial value $U^{\pi}_{0-}=x$ admits a unique strong solution. We use $\Pi$ to denote the set of all admissible impulsive dividend payout strategies.
\end{definition}

Let $\beta$ be a positive constant, which can be interpreted as a fixed transaction costs or penalty parameter. The reward function for an admissible impulsive dividend payout strategy $\pi\in\Pi$ is defined as
\begin{align}
\label{value_func}
V_\pi(x)&:=\mathbb{E}_x\Big[\sum_{0\leq s\leq T^{\pi}}e^{-qs}(\Delta L_s^{\pi}-\beta)\mathbf{1}_{\{\Delta L^\pi_s>0\}}\Big], \quad x\geq 0.
\end{align}
Our aim is to determine the optimal value function associated with the impulsive dividend control problem \eqref{value_func}: 
\[\sup_{\pi\in\Pi}V_\pi(x).\] 
An impulsive strategy $\pi^{*}\in\Pi$ is called an optimal impulsive strategy to the problem \eqref{value_func} if it satisfies
\begin{align}\label{6}
V_{\pi^{*}}(x)=\sup_{\pi\in\Pi}V_\pi(x)<\infty.
\end{align}

Clearly, when $x\leq 0$, $T^{\pi}=0$ for any admissible strategy $\pi$, so $\sup_{\pi\in\Pi}V_\pi(x)=0$.
From now on, we only need to study the case $x>0$.

\begin{rem}
By the definition of $V_\pi(x)$, one can easily verify that the value function $\sup_{\pi\in\Pi}V_{\pi}(x)$ is non-increasing and convex with respect to $\beta\in(0,\infty)$ for any fixed $x\in(0,\infty)$. 
\end{rem}

We now provide a lower and an upper bound for $\sup_{\pi\in\Pi}V_{\pi}(x)$ in the following Lemma \ref{lem2.1.add.new.x}, whose proof is presented in Appendix \ref{sec5.1}.

\begin{lem}
\label{lem2.1.add.new.x}
{We have 
\begin{align}
\label{5.add.new.y}
0\leq \sup_{\pi\in\Pi}V_\pi(x)\leq x+\frac{\sqrt{\mu_{+}^{2}
+2q\sigma_{+}^{2}}+\mu_{+}}{2q}
+\frac{\sqrt{\mu_{-}^{2}
+2q\sigma_{-}^{2}}+\mu_{-}}{2q}, \quad x\geq 0.
\end{align}
}
\end{lem}

\subsection{Verification lemma}
We now attempt to characterize the optimal impulsive strategies
to the control problem \eqref{6}. Such a characterization will be given in Lemma \ref{lem2.1}.

The following Lemma \ref{lem2.1} gives a sufficient condition for an admissible impulsive strategy $\hat{\pi}\in\Pi$ to be the optimal dividend strategy for the control problem \eqref{6}. 
Indeed, it is shown that the optimal strategy must belong to a subset of $\Pi$: 
\begin{align}
\label{def.Pi0}
\Pi_0:=\big\{\pi=(L^\pi_t)_{t\geq 0}\in\Pi; \text{ for any } t\geq 0, \triangle L_{t}^{\pi}\geq \beta \text{ if and only if } \triangle L_{t}^{\pi}>0\big\},
\end{align} 
which consists of those admissible strategies in $\Pi$ that only has jump size no less than $\beta$. Intuitively speaking, one shall not pay dividend less than $\beta$, since it will not only give a negative impact on the reward functional but also lead to an earlier ruin time.

Let $\mathcal{A}$ be the infinitesimal generator associated with the process $X$ defined as
$$\mathcal{A}f(x) :=\frac{1}{2}(\sigma^2_+ \mathbf{1}_{\{x>a\}} +\sigma^2_- \mathbf{1}_{\{x\leq a\}})f^{\prime\prime}(x)+(\mu_+ \mathbf{1}_{\{x>a\}} +\mu_- \mathbf{1}_{\{x\leq a\}})f^{\prime}(x),$$
for any function $f$ that is piecewise $C^2$ on $\bR_{+}$.
\begin{lem}[Verification lemma]
\label{lem2.1}
Suppose, for some $\hat{\pi}\in\Pi$, one has 
$V_{\hat{\pi}}$ is piecewise $C^2$ and satisfies $(\mathcal{A}-q)V_{\hat{\pi}}\leq 0$ on
$\bR_{+}$ except finite many points. Assume further that $V_{\hat{\pi}}(x)\geq 0$ and $V_{\hat{\pi}}(x)-V_{\hat{\pi}}(y)\geq x-y-\beta$ for all $x>y\geq 0$. Then 
$\hat{\pi}$ is an optimal impulsive strategy to the problem \eqref{value_func}. Moreover, $\hat{\pi}\in\Pi_{0}$.
\end{lem}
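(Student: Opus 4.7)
The plan is to apply a generalized It\^o formula in order to produce a pathwise supermartingale-type bound, which upon expectation and limiting yields $V_\pi(x)\le V_{\hat\pi}(x)$ for every $\pi\in\Pi$. I fix an arbitrary $\pi\in\Pi$ and initial surplus $x>0$, and abbreviate $W:=V_{\hat\pi}$. Applying the Meyer--It\^o formula to $t\mapsto e^{-qt}W(U^\pi_t)$, using that $W$ is piecewise $C^2$ and that $U^\pi$ is a continuous semimartingale between the jumps of $L^\pi$, gives the decomposition
\begin{align*}
e^{-qt}W(U^\pi_t)=W(x)+\int_0^t e^{-qs}(\mathcal{A}-q)W(U^\pi_s)\,\ds+N_t+\!\!\sum_{0\le s\le t,\,\Delta L^\pi_s>0}\!\!e^{-qs}\bigl[W(U^\pi_s)-W(U^\pi_{s-})\bigr],
\end{align*}
where $N$ is a local martingale, provided the local-time contributions at the finitely many non-smooth points of $W$ and at the threshold $a$ vanish.

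The two standing hypotheses now act in tandem. The generator inequality $(\mathcal{A}-q)W\le0$ makes the Lebesgue integral non-positive, while for each jump of $\pi$ the inequality $W(x)-W(y)\ge x-y-\beta$, applied with $x=U^\pi_{s-}$ and $y=U^\pi_{s-}-\Delta L^\pi_s$, yields $W(U^\pi_s)-W(U^\pi_{s-})\le -(\Delta L^\pi_s-\beta)$. Rearranging the It\^o decomposition then produces the pathwise bound
\begin{align*}
\sum_{0\le s\le t,\,\Delta L^\pi_s>0}e^{-qs}(\Delta L^\pi_s-\beta)\le W(x)-e^{-qt}W(U^\pi_t)+N_t.
\end{align*}
I next localize $N$ by a sequence $\tau_n\uparrow T^\pi$ such that $N^{\tau_n}$ is a true martingale (via standard truncation that exploits the linear-in-$x$ growth bound on $W$ from Lemma \ref{lem2.1.add.new.x}), take $\bE_x$, invoke $W\ge0$ to discard the $e^{-qt}W(U^\pi_t)$ term, and pass $n\to\infty$ using the admissibility integrability together with dominated convergence. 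This yields $V_\pi(x)\le W(x)$ and hence the optimality of $\hat\pi$.

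For the claim $\hat\pi\in\Pi_0$, I argue by contradiction: if $\hat\pi$ had a jump of size $\Delta\in(0,\beta)$ at some time $s_0$, the modified strategy $\tilde\pi$ obtained by deleting this single jump is admissible (since $U^{\tilde\pi}\ge U^{\hat\pi}$ a.s. after $s_0$ by pathwise comparison, relaxing the feasibility constraint $\Delta L_t\le U^\pi_{t-}\vee 0$), and its reward strictly exceeds $V_{\hat\pi}(x)$ by at least $e^{-qs_0}(\beta-\Delta)>0$ plus a non-negative contribution from the later ruin time---contradicting the just-proved optimality. The main technical obstacle is the Meyer--It\^o step: since the diffusion coefficient of $U^\pi$ is discontinuous at $a$ and $W$ has finitely many kinks, showing that the local-time corrections vanish essentially requires $W\in C^1$ at its singular points (smooth fit). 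This is not assumed abstractly in the statement but will be verifiable directly for the explicit candidate value functions constructed in the subsequent sections. A secondary concern---the interchange of limit and expectation in the martingale argument---is resolved by the a priori upper bound from Lemma \ref{lem2.1.add.new.x} together with the integrability built into the definition of $\Pi$.
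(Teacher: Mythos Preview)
Your approach is essentially the same as the paper's --- apply a generalized It\^o formula to $e^{-qt}V_{\hat\pi}(U^\pi_t)$, use $(\mathcal A-q)V_{\hat\pi}\le 0$ for the drift, use the increment inequality for the jumps, localize, and pass to the limit. The difference lies in the ordering and in how the limit is taken.

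The paper first proves the reduction to $\Pi_0$: for any $\pi\in\Pi\setminus\Pi_0$ it constructs $\pi_0\in\Pi_0$ by discarding \emph{all} jumps of size below $\beta$, and shows $V_{\pi_0}(x)>V_\pi(x)$ directly (the deleted jumps contribute negatively and delay ruin). Only then does it run the It\^o argument, restricted to $\pi\in\Pi_0$. The payoff is that every summand $e^{-qs}(\Delta L^\pi_s-\beta)\mathbf 1_{\{\Delta L^\pi_s>0\}}$ is nonnegative, so the passage to the limit is handled by Fatou's lemma with no integrability worries. Your dominated-convergence step for general $\pi$ is not fully justified as written: the admissibility condition controls $\sum e^{-qs}\Delta L^\pi_s$ but not $\sum e^{-qs}\beta\,\mathbf 1_{\{\Delta L^\pi_s>0\}}$, which can be infinite if $\pi$ has infinitely many small jumps. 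One can patch this by treating the case $V_\pi(x)=-\infty$ separately, but the paper's route via $\Pi_0$ is cleaner.

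For $\hat\pi\in\Pi_0$, the paper obtains this as an immediate byproduct of the $\Pi_0$ reduction (were $\hat\pi\notin\Pi_0$, the associated $\pi_0$ would strictly dominate it). Your single-jump deletion argument has the same spirit but is less rigorous: the jump time $s_0$ and size $\Delta$ are random, and defining ``delete this one jump'' as an adapted strategy requires care. Replacing it with the paper's ``delete all sub-$\beta$ jumps'' construction fixes this at no cost.
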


\begin{proof}
For any admissible impulsive strategy $\pi=(L_t^{\pi})_{t\geq0}\in\Pi\setminus\Pi_0$, we define a new admissible impulsive strategy $\pi_0=(L^{\pi_0}_t)_{t\geq 0}\in\Pi_{0}$ where 
$$L_t^{\pi_0}:=\sum_{0\leq s\leq t} \Delta L^\pi_s\mathbf{1}_{\{\Delta L^\pi_s\geq \beta\}}.$$
By definition, it holds that $\triangle L_t^{\pi_0}\leq \triangle L_t^{\pi}$ for all $t\geq0$, so $U_t^{\pi_0}\geq U_t^{\pi}$ for all $t\geq0$ and consequently, $T^{\pi_0}\geq T^{\pi}$.
Therefore, for any $x\in(0,\infty)$,
\begin{align}
V_{\pi_0}(x)
&=\bE_x\bigg[\sum_{0\leq s\leq T^{\pi_0}} e^{-qs}( \Delta L^{\pi}_{s} -\beta)\mathbf{1}_{\{\Delta L^{\pi}_{s}\geq\beta\}} \bigg]
\nonumber\\
&\geq\bE_x\bigg[\sum_{0\leq s\leq T^{\pi}} e^{-qs}(\Delta L^{\pi}_{s} -\beta)\mathbf{1}_{\{\Delta L^{\pi}_{s}\geq\beta\}} \bigg]
\nonumber\\
&>
\bE_x\bigg[\sum_{0\leq s\leq T^{\pi}} e^{-qs}(\Delta L^{\pi}_{s} -\beta)\left(\mathbf{1}_{\{\Delta L^{\pi}_{s}\geq\beta\}}
+\mathbf{1}_{\{0<\Delta L^{\pi}_{s}<\beta\}} \right)\bigg]
\nonumber\\
&
=V_\pi(x).\nonumber
\end{align}
Hence, we only need to prove that $V_{\hat{\pi}}(x)\geq V_\pi(x)$ for any $x> 0$ and admissible strategy $\pi\in\Pi_0$. 

Fix any $x> 0$ and $\pi\in\Pi_0$. Let $\theta_n:=\inf\{t\geq 0: U^\pi_{t}>n\text{ or }U^\pi_{t}<0\}$.	By a version of Ito's formula (see Theorem 4.57 in page 57 of \cite{Jacod03}, or Theorem 70 in Chapter IV of \cite{PL}) we have, for any constant $t>0$,
\begin{equation*}
	\begin{split}
	&\;\quad e^{-q(t\wedge \theta_n\wedge T^\pi)}V_{\hat{\pi}}(U^\pi_{t\wedge \theta_n\wedge T^\pi})\\
	&=V_{\hat{\pi}}(x)
	+\int_0^{t\wedge \theta_n\wedge T^\pi}(\mathcal{A}-q)V_{\hat{\pi}}(U^\pi_{s})\dd s +\sum_{0\leq s\leq t\wedge \theta_n\wedge T^\pi} e^{-qs}\Delta V_{\hat{\pi}}(U^\pi_{s})+M_{t\wedge \theta_n\wedge T^\pi}\\
	&\leq V_{\hat{\pi}}(x)-\sum_{0\leq s\leq t\wedge \theta_n\wedge T^\pi} e^{-qs}( U^\pi_{s-}-U^\pi_{s}-\beta )\mathbf{1}_{\{\Delta U^\pi_{s}\neq 0\}}+M_{t\wedge \theta_n\wedge T^\pi}\\
	&=V_{\hat{\pi}}(x)-\sum_{0\leq s\leq t\wedge \theta_n\wedge T^\pi} e^{-qs}( \Delta L^\pi_{s} -\beta)\mathbf{1}_{\{\Delta L^\pi_{s}>0\}}+M_{t\wedge \theta_n\wedge T^\pi},
	\end{split}
\end{equation*}
where $(M_{t})_{t\geq 0}$ is a continuous local martingale. Since $V_{\hat{\pi}}(y)\geq0$ for any $y\geq 0$, it follows
$$V_{\hat{\pi}}(x)\geq \sum_{0\leq s\leq t\wedge \theta_n\wedge T^{\pi}}e^{-qs}(\Delta L^\pi_{s}-\beta)\mathbf{1}_{\{\Delta L_s^{\pi}>0\}}-M_{t\wedge \theta_n\wedge T^{\pi}}.$$
Let $\tau_m$ be an increasing localizing sequence of stopping times of $M$ such that $\lim_{m\rightarrow\infty}\tau_m=\infty.$ Since $\pi\in\Pi_0$, we have $(\Delta L^\pi_{s}-\beta)\mathbf{1}_{\{\Delta L_s^{\pi}>0\}}\geq0$ for any $s\geq0$. Using Fatou's lemma, we have
\begin{equation*}
\begin{split}
V_{\hat{\pi}}(x) 
&\geq \liminf_{n\to\infty}\bE_x\Big[\sum_{0\leq s\leq n\wedge \tau_n\wedge \theta_n\wedge T^\pi} e^{-qs}( \Delta L^\pi_{s} -\beta)\mathbf{1}_{\{\Delta L^\pi_{s}>0\}} \Big]
\\
&= \bE_x\Big[\sum_{0\leq s\leq T^\pi} e^{-qs}(\Delta L^\pi_{s} -\beta)\mathbf{1}_{\{\Delta L^\pi_{s}>0\}} \Big]
\\
&= V_\pi(x),
	\end{split}
\end{equation*}
which is the desired result. As a byproduct, $\hat{\pi}\in\Pi_{0}$.
\end{proof}

By this result, our problem now reduces to finding an impulsive strategy $\hat{\pi}\in\Pi_0$ (see \eqref{def.Pi0}) that fulfills the requirement of Lemma \ref{lem2.1}.
We conjecture that the optimal strategy solving the optimal control problem \eqref{6} shall be some two-barrier impulsive strategy. To show this, we introduce this kind of dividend payout strategy in the subsequent section.

\subsection{Two-barrier strategies and preliminary results}

For each pair $0<z_1<z_2$, the corresponding two-barrier impulsive dividend strategy, denoted by $(L_t^{z_1,z_2})_{t\geq0}$, is the strategy under which a lump sum of dividends is paid out to bring the surplus process down to the level $z_1$ once the surplus process is greater than or attempts to up-cross the level $z_2$, and no dividend payout happens if the surplus process is below $z_2$. For convenience, we also call the two-barrier impulsive strategy $(L_t^{z_1,z_2})_{t\geq0}$ a $(z_1,z_2)$-strategy.

Mathematically, the $(z_1,z_2)$-strategy and its corresponding surplus process $(U_t^{z_1,z_2})_{t\geq0}$ can be jointly determined by
\begin{align}
\label{def.zz2.Uzz2}
\begin{cases}
\quad L_{t}^{z_1,z_2}=\sum_{0\leq s\leq t}(U_{s-}^{z_1,z_2}-z_1)\mathbf{1}_{\{U_{s-}^{z_1,z_2}\geq z_2\}},\quad t\geq 0,\\
\; \dd U^{z_1,z_2}_t
=
(\mu_+ \mathbf{1}_{\{U^{z_1,z_2}_t>a\}}+\mu_- \mathbf{1}_{\{U^{z_1,z_2}_t\leq a\}})\dd t-(U_{t-}^{z_1,z_2}-z_1)\mathbf{1}_{\{U_{t-}^{z_1,z_2}\geq z_2\}}
\\
\qquad\qquad \quad+(\sigma_+\mathbf{1}_{\{U^{z_1,z_2}_t>a\}}+\sigma_-\mathbf{1}_{\{U^{z_1,z_2}_t\leq a\}})\dd B_t,\quad t\geq 0,
\\
\quad\! U^{z_1,z_2}_{0-}=x.
\end{cases}
\end{align}
Write the ruin time of $U^{z_1,z_2}$ as $$T^{z_1,z_2}:=\inf\{t\geq0:U^{z_1,z_2}_t<0\},$$
and denote the value function of the two-barrier impulsive strategy $(L_t^{z_1,z_2})_{t\geq0}$ by
\[V^{z_2}_{z_1}(x):=
\mathbb{E}_x\Big[\sum_{0\leq s\leq T^{z_1,z_2}}e^{-qs}(\Delta L_s^{z_1,z_2}-\beta)\mathbf{1}_{\{\Delta L^{z_1,z_2}_s>0\}}\Big], ~~ x\geq 0.\]
Thanks to Lemma \ref{lem2.1.add.new.x}, $V^{z_2}_{z_1}(x)$ is finite for any $x\geq0$. 

In the following, we aim to find an explicit expression for $V^{z_2}_{z_1}(x)$ so that we can apply Lemma \ref{lem2.1} to derive an optimal strategy to \eqref{6}. To this end, we first define two functions $g^{\pm}\in C^1(\bR)\cap C^2(\bR\backslash \{a\}) $ that satisfy the following ordinary differential equation (ODE) except at $a$: 
\begin{align}
\label{dis.generator.}
\frac{1}{2}(\sigma^2_+ \mathbf{1}_{\{x>a\}} +\sigma^2_- \mathbf{1}_{\{x\leq a\}})g^{\prime\prime}(x)+(\mu_+ \mathbf{1}_{\{x>a\}} +\mu_- \mathbf{1}_{\{x\leq a\}})g^{\prime}(x)=qg(x).
\end{align}
Put
\begin{align}
\label{theta12}
\theta^{\pm}_1:=\frac{\sqrt{\mu^2_{\pm}+2q\sigma^2_{\pm}}+\mu_{\pm}}{\sigma^2_{\pm}}>0,\quad \theta_2^{\pm}:=\frac{\sqrt{\mu^2_{\pm}+2q\sigma^2_{\pm}}-\mu_{\pm}}{\sigma^2_{\pm}}>0,
\end{align}
\begin{align}
\label{sign.c-}
c_-
=
\frac{\theta_1^--\theta_1^+}{\theta_2^-+\theta_1^-},
\quad
1-c_-=\frac{\theta_2^-+\theta_1^+}{\theta_2^-+\theta_1^-}>0,
\end{align}
and
\begin{align}
\label{sign.c+}
c_+=\frac{\theta_2^+-\theta_2^-}{\theta_2^++\theta_1^+},
\quad
1-c_+=\frac{\theta_1^++\theta_2^-}{\theta_2^++\theta_1^+}>0.
\end{align}
Then, the following two functions $g^{\pm}\in C^1(\bR)\cap C^2(\bR\backslash \{a\})$ satisfy ODE \eqref{dis.generator.} (except at a): 
\begin{align}
\begin{cases}
g^-(x)
=e^{-\theta_1^+{(x-a)}} \mathbf{1}_{\{x>a\}}+\left(c_-e^{\theta_2^-{(x-a)}}+(1-c_-) e^{-\theta_1^-{(x-a)}} \right)\mathbf{1}_{\{x\leq a\}},\\
g^+(x)
=\left((1-c_{+})e^{\theta_2^+{(x-a)}}+c_+ e^{-\theta_1^+{(x-a)}} \right)\mathbf{1}_{\{x>a\}}+e^{\theta_2^-{(x-a)}}\mathbf{1}_{\{x\leq a\}}.
\end{cases}
\label{scale.fun.}
\end{align}
It is easy to check $g^{\pm}(a)=1$ and $g^{\pm\prime}(a-)=g^{\pm\prime}(a+)$.

Define 
\begin{align}
\label{10}
g(x)&:=g^+(x)g^-(0)-g^-(x)g^+(0)\\
&= \left[(1-c_{+})g^-(0)e^{\theta_2^+{(x-a)}}-(g^+(0)-c_+g^-(0))e^{-\theta_1^+{(x-a)}} \right]1_{\{x>a\}}\nonumber\\
&\quad\;+\left[(g^-(0)-c_{-}g^+(0))e^{\theta_2^-{(x-a)}}-(1-c_-)g^+(0)e^{-\theta_1^-{(x-a)}}\right]1_{\{x\leq a\}}.\nonumber
\end{align}
Then, one has $g\in C^1(\bR)\cap C^2(\bR\backslash \{a\})$ and it satisfies ODE \eqref{dis.generator.} (except at a). 
\begin{lem}
\label{prop2.2.v2}
The function $g$ defined by \eqref{10} satisfies $g'>0$ and $g(0)=0$.
\end{lem}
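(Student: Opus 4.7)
The identity $g(0)=0$ is immediate: substituting $x=0$ into $g(x)=g^+(x)g^-(0)-g^-(x)g^+(0)$ gives $g^+(0)g^-(0)-g^-(0)g^+(0)=0$. The real content is the assertion $g'>0$ on $\bR$, and the plan is to exploit the factorization
\begin{equation*}
g'(x)=g^-(0)\,g^{+\prime}(x)-g^+(0)\,g^{-\prime}(x),
\end{equation*}
reducing the problem to the following three sub-claims: (a) $g^\pm(0)>0$, (b) $g^{+\prime}(x)>0$ for all $x\in\bR$, and (c) $g^{-\prime}(x)<0$ for all $x\in\bR$. Given (a)--(c), the displayed expression is a strictly positive number minus a strictly negative number, so $g'(x)>0$ everywhere.

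For (a) (and more generally $g^{\pm}>0$ on $\bR$), I will read off the signs from the exponential representations \eqref{scale.fun.}, using that $g^\pm(a)=1$, that $g^-(x)=e^{-\theta_1^+(x-a)}>0$ on $[a,\infty)$, and that $g^+(x)=e^{\theta_2^-(x-a)}>0$ on $(-\infty,a]$. In the remaining "mixed" regions each $g^\pm$ is a linear combination of two exponentials with positive leading coefficients (even when $c_\pm$ has the wrong sign, the companion coefficient $1-c_\pm$ is strictly positive by \eqref{sign.c-}--\eqref{sign.c+}); the dominant-exponential analysis as $x\to\pm\infty$, together with $g^\pm(a)=1$, rules out zeros.

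For (b) and (c), monotonicity is immediate in the single-exponential regions: $g^+(x)=e^{\theta_2^-(x-a)}$ is strictly increasing on $(-\infty,a]$, and $g^-(x)=e^{-\theta_1^+(x-a)}$ is strictly decreasing on $[a,\infty)$. At the interface, the $C^1$-matching built into \eqref{scale.fun.} gives $g^{+\prime}(a)=\theta_2^->0$ and $g^{-\prime}(a)=-\theta_1^+<0$. For the other region (where $c_\pm$ may have either sign and one has a genuine mixture of two exponentials), the key idea is an ODE-based contradiction: any positive solution $h$ of \eqref{dis.generator.} satisfies $h''(x_0)=\tfrac{2q}{\sigma_\pm^2}h(x_0)>0$ at any critical point $x_0\neq a$, so $x_0$ is a strict local minimum. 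If $g^{+\prime}$ vanished at some first $x_0>a$, then $g^{+\prime}>0$ on $(a,x_0)$ with $g^{+\prime}(x_0)=0$, yet $g^{+\prime\prime}(x_0)>0$ would force $g^{+\prime}$ strictly increasing through $x_0$, a contradiction. The symmetric argument rules out a zero of $g^{-\prime}$ on $(-\infty,a)$.

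The only delicate point is the mixed region: because $c_\pm$ can have either sign, one cannot simply inspect signs of the two exponential coefficients, and I expect this is where the ODE/critical-point argument is essential. Everything else (evaluating $g^\pm(0)$, checking $C^1$-matching at $a$, and reading off the derivative signs from the single-exponential pieces) is a routine calculation from the formulas \eqref{scale.fun.}, \eqref{sign.c-}, and \eqref{sign.c+}.
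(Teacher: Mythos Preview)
Your proposal is correct, and the core engine---the ODE-based critical-point contradiction---is exactly the device the paper uses. The routes differ, however, in what that engine is applied to. The paper works directly with $g$: it first checks $g'>0$ on $(-\infty,a]$ by an explicit coefficient computation (the key simplification being $g^-(0)-c_-g^+(0)=(1-c_-)e^{\theta_1^-a}>0$, which makes both exponential coefficients in $g'|_{(-\infty,a]}$ manifestly positive), and then runs the ODE contradiction once, for $g$ itself, to exclude a smallest root $x^*>a$ of $g'$ (using $g(x^*)>g(0)=0$). Your decomposition $g'=g^-(0)\,g^{+\prime}-g^+(0)\,g^{-\prime}$ instead establishes the stronger structural facts that $g^+$ is strictly increasing and $g^-$ strictly decreasing on all of $\bR$; you then run the ODE contradiction twice, once for $g^+$ on $(a,\infty)$ and once for $g^-$ on $(-\infty,a)$. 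This buys you more (monotonicity of the two fundamental solutions, which is the natural scale-function picture) at the cost of a second pass. One small point worth tightening when you write it out: your positivity claim (a) for $g^\pm$ in the mixed regions via ``dominant-exponential analysis'' is not sufficient by itself---knowing $g^-(a)=1$ and $g^-(x)\to+\infty$ as $x\to-\infty$ does not preclude a zero in between. But you do not actually need it: in your ODE step the positivity of $g^\pm$ at the extremal critical point $x_0$ follows from the already-established monotonicity on the interval between $x_0$ and $a$ together with $g^\pm(a)=1$, so there is no circularity.
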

Its proof is provided in Appendix \ref{sec5.2}.

We next introduce the first hitting time of level $y\in\bR$ for the process $(X_t)_{t\geq 0}$ given by \eqref{def.X} as
$$T_y:=\inf\{t\geq0:\,X_t=y\}.$$
For $y\leq x\leq z$ with $y\neq z$, applying the generalized Ito's formula (see Theorem 70 of Chapter IV in \cite{PL} for more details) we know that the two processes
$(e^{-qt}g^{\pm}(X_t))_{t\geq 0}$ are local martingales. Then, it follows from Doob's optional stopping theorem that
\[g^{\pm}(x)=\bE_x [e^{-q(T_y\wedge T_z)}g^{\pm}(X_{T_y\wedge T_z})]
=g^{\pm}(y)\bE_x [e^{-qT_y}\mathbf{1}_{\{T_y<T_z\}}]+g^{\pm}(z)\bE_x [e^{-qT_z}\mathbf{1}_{\{T_z<T_y\}}].\] 
Solving the above two equations, we get the following solutions of two-sided exit problem.
\begin{lem} 
For any $y\leq x\leq z$ with $y\neq z$, we have 
\begin{align}
\bE_{x} [e^{-qT_y}\mathbf{1}_{\{T_y<T_z\}}]=\frac{g^+(z)g^-(x)-g^-(z)g^+(x)}{g^-(y)g^+(z)-g^-(z)g^+(y)},
\end{align}
and
\begin{align}
\bE_{x} [e^{-qT_z}\mathbf{1}_{\{T_z<T_y\}}]=\frac{g^+(y)g^-(x)-g^-(y)g^+(x)}{g^-(z)g^+(y)-g^-(y)g^+(z)}.
\end{align}
\end{lem}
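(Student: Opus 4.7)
The strategy is essentially the one already sketched in the excerpt: set up a $2 \times 2$ linear system for the two unknowns and invert it, with the bulk of the work being to legitimize the martingale/optional stopping argument that produces the system.

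The first step is to confirm that $(e^{-qt}g^{\pm}(X_t))_{t\ge 0}$ are genuine local martingales. Both $g^{\pm}$ are $C^1$ on $\bR$ and $C^2$ on $\bR\setminus\{a\}$, so the generalized Ito formula of Theorem 70 in Chapter IV of \cite{PL} applies. The dt-drift equals $(\mathcal{A}-q)g^{\pm}(X_t)$, which vanishes off $\{a\}$ by construction \eqref{dis.generator.}. Because $\sigma_{\pm}>0$, the occupation time of $X$ at the single point $a$ has zero Lebesgue measure, so the drift vanishes identically and only the stochastic integral remains, giving the local martingale property.

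Second, I would argue $T_y\wedge T_z<\infty$ a.s.\ for $x\in[y,z]$. Since the diffusion coefficient is bounded below by $\min(\sigma_-,\sigma_+)>0$ and the drift is bounded on $[y,z]$, a standard comparison with a scaled Brownian motion (or a direct Dynkin-type exit-time estimate) gives finite expected exit time. On the stochastic interval $[\![0, T_y\wedge T_z]\!]$, the process $X$ stays in $[y,z]$, hence $g^{\pm}(X_{\cdot})$ is uniformly bounded. Combining the local martingale property with a localizing sequence and bounded convergence, Doob's optional stopping theorem produces, for $\bullet\in\{+,-\}$,
\begin{align*}
g^{\bullet}(x) = g^{\bullet}(y)\,\bE_x[e^{-qT_y}\mathbf{1}_{\{T_y<T_z\}}] + g^{\bullet}(z)\,\bE_x[e^{-qT_z}\mathbf{1}_{\{T_z<T_y\}}],
\end{align*}
where I have used continuity of $X$ to conclude $X_{T_y\wedge T_z}\in\{y,z\}$.

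Third, writing the two unknown exit expectations as $p_y,p_z$, the above display is a $2\times2$ linear system with coefficient matrix
\begin{align*}
A=\begin{pmatrix} g^-(y) & g^-(z)\\ g^+(y) & g^+(z)\end{pmatrix}.
\end{align*}
Cramer's rule immediately yields the two stated formulas, once I know $\det A = g^-(y)g^+(z) - g^-(z)g^+(y)\neq 0$. This non-degeneracy is the main technical point. On each side of $a$ separately, $g^+$ and $g^-$ are linearly independent solutions of the constant-coefficient ODE $\frac12\sigma^2 g''+\mu g'-qg=0$ (their basis being $e^{\theta_2^{\pm}(\cdot-a)}$ and $e^{-\theta_1^{\pm}(\cdot-a)}$), and $C^1$ matching at $a$ preserves this independence globally. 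I would verify $\det A>0$ by considering separately the three cases $y\le z\le a$, $a\le y\le z$, and $y<a<z$: in the first two the determinant reduces to a classical Wronskian-type expression $(e^{\theta_2(z-y)}-e^{-\theta_1(z-y)})$ up to a positive factor, hence positive for $y\neq z$; in the mixed case, a direct substitution from \eqref{scale.fun.} together with $\theta_j^{\pm}>0$ and the positivity of $1-c_{\pm}$ from \eqref{sign.c-}--\eqref{sign.c+} yields the required inequality. Once the determinant is positive, the formulas follow at once.
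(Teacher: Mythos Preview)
Your proposal is correct and follows exactly the same approach as the paper: apply the generalized It\^o formula to obtain that $(e^{-qt}g^{\pm}(X_t))_{t\ge0}$ are local martingales, use Doob's optional stopping at $T_y\wedge T_z$ to produce the two linear equations, and solve. The paper's own argument is in fact just this, stated in the paragraph immediately preceding the lemma and without the additional justifications (finiteness of the exit time, boundedness for optional stopping, nonvanishing of the determinant) that you supply.
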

Now we are ready to give the explicit expression for $V^{z_2}_{z_1}$.
\begin{prop}\label{prop.2.1}
Given $\beta\leq z_1+\beta\leq z_2$, we have 
\begin{align}\label{Vx.3}
V^{z_2}_{z_1}(x)=
\begin{cases}
\frac{g(z_1)(z_2-z_1-\beta)}{g(z_2)-g(z_1)}+x-z_1-\beta, & x\geq z_2,\bigskip
\\
\frac{g(x)(z_2-z_1-\beta)}{g(z_2)-g(z_1)}, & 0\leq x< z_2.
\end{cases}
\end{align}
Thanks to Lemma \ref{prop2.2.v2}, $V^{z_2}_{z_1}(\cdot)$ is continuous and strictly increasing on $\bR_{+}$. 
\end{prop}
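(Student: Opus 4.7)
The plan is to decompose the value $V^{z_2}_{z_1}(x)$ according to whether the initial state lies above or below the upper barrier $z_2$ and then exploit the strong Markov property together with the two-sided exit identity from the preceding lemma.

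First, for $x\geq z_2$, the strategy immediately pays a lump sum $x-z_1$ at time $0$, yielding net reward $x-z_1-\beta$; after this jump the surplus is exactly $z_1$ and the strategy restarts. Hence $V^{z_2}_{z_1}(x)=(x-z_1-\beta)+V^{z_2}_{z_1}(z_1)$, so it suffices to determine the constant $V^{z_2}_{z_1}(z_1)$ and more generally $V^{z_2}_{z_1}(x)$ for $0\leq x< z_2$.

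Second, for $0\leq x<z_2$, no dividend is paid until the controlled surplus hits $z_2$ or $0$. Up to the random time $T_0\wedge T_{z_2}$ the process $U^{z_1,z_2}$ coincides with the uncontrolled process $X$ started at $x$, so by the strong Markov property
\begin{align*}
V^{z_2}_{z_1}(x)=\bE_x\!\left[e^{-qT_{z_2}}\mathbf{1}_{\{T_{z_2}<T_0\}}\right]\bigl((z_2-z_1-\beta)+V^{z_2}_{z_1}(z_1)\bigr).
\end{align*}
Applying the second identity of the exit-problem lemma with $y=0$ and $z=z_2$, and using the definition $g(x)=g^+(x)g^-(0)-g^-(x)g^+(0)$ from \eqref{10} to simplify both numerator and denominator, gives $\bE_x[e^{-qT_{z_2}}\mathbf{1}_{\{T_{z_2}<T_0\}}]=g(x)/g(z_2)$. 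Setting $x=z_1$ yields a self-consistent linear equation for $V^{z_2}_{z_1}(z_1)$, whose solution is
\begin{align*}
V^{z_2}_{z_1}(z_1)=\frac{g(z_1)(z_2-z_1-\beta)}{g(z_2)-g(z_1)},
\end{align*}
and substituting back reproduces the two branches of \eqref{Vx.3}.

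The one step that requires Lemma \ref{prop2.2.v2} is the well-definedness of the constant above: strict positivity of $g'$ (together with $g(0)=0$) guarantees $g(z_2)>g(z_1)>0$, so that the denominator does not vanish and the value function is non-negative. Continuity at $z_2$ follows by matching the two branches, while strict monotonicity on $[0,z_2)$ follows from $g'>0$ and on $[z_2,\infty)$ from the linear term $x$. The main (minor) obstacle is justifying the strong Markov step carefully at the hitting time $T_{z_2}$ (in particular, that $T_{z_2}<\infty$ on the event $\{T_{z_2}<T_0\}$ and that $U^{z_1,z_2}_{T_{z_2}}=z_2$ by path-continuity of the diffusion before any jump), but this is standard for the continuous SDE \eqref{def.X}.
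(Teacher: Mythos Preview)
Your proof is correct and follows essentially the same approach as the paper: both arguments use the strong Markov property, apply the two-sided exit identity with $y=0$ and $z=z_2$ to obtain $\bE_x[e^{-qT_{z_2}}\mathbf{1}_{\{T_{z_2}<T_0\}}]=g(x)/g(z_2)$, set $x=z_1$ to solve a self-consistent linear equation for $V^{z_2}_{z_1}(z_1)$, and handle $x\geq z_2$ via the immediate lump-sum payment. The only minor point the paper makes explicit that you leave implicit is that finiteness of $V^{z_2}_{z_1}(z_1)$ (from Lemma~\ref{lem2.1.add.new.x}) is what permits solving the self-consistent equation uniquely.
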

\begin{proof}
{Recall that $V^{z_2}_{z_1}(x)$ is finite for any $x\geq0$.}
Since both $V^{z_2}_{z_1}(0)$ and $g(0)$ are zero, the claim holds when $x=0$.
When $x\in(0,z_2)$, 
\begin{align*} 
V^{z_2}_{z_1}(x)
&=\bE_x\left[e^{-qT_{z_2}}\mathbf{1}_{\{T_{z_2}<T_0\}}\right]V^{z_2}_{z_1}(z_2)
\nonumber\\
&=\frac{g^+(0)g^-(x)-g^-(0)g^+(x)}{g^+(0)g^-(z_2)-g^-(0)g^+(z_2)}(V^{z_2}_{z_1}(z_1)+z_2-z_1-\beta)\nonumber\\
&=\frac{g(x)}{g(z_2)}(V^{z_2}_{z_1}(z_1)+z_2-z_1-\beta).
\end{align*}
Setting $x=z_1$ in the above equation and using the finiteness of $V_{z_1}^{z_2}$, we have 
\begin{align*} 
V^{z_2}_{z_1}(z_1)&= 
\frac{g(z_1)}{g(z_2)-g(z_1)}(z_2-z_1-\beta).
\end{align*}
Combining above two proves the claim when $x\in(0,z_{2})$. When $x\geq z_2$, by the strong Markov property of the process $(U^{z_1,z_2}_{t})_{t\geq 0}$, we have
$$V^{z_2}_{z_1}(x)=V^{z_2}_{z_1}(z_1)+x-z_1-\beta,$$ and 
the claim follows by combining the above two equations. 
\end{proof}

To address our targeting impulsive dividend control problem \eqref{6}, we conjecture that the optimal strategy is a $(z_{1},z_{2})$-strategy for some $(z_1,z_2)$ satisfying $\beta\leq z_{1}+\beta\leq z_{2}<\infty$. To verify our conjecture, we shall first find the optimal strategy among the class of $(z_{1},z_{2})$-strategies. From the above result, we see that we need to maximize $\frac{z_2-z_1-\beta}{g(z_2)-g(z_1)}$ 
if we want to maximize $V^{z_2}_{z_1}(\cdot)$. This motivates us to define the following. 

Let 
\begin{align*}
\mathcal{D}_{\zeta}&:=\{(x,y)\in[0,\infty)^2: x+\beta\leq y \},\\
\zeta(z_{1},z_{2}) &:=\frac{z_2-z_1-\beta}{g(z_2)-g(z_1)}\geq 0,~~ (z_{1},z_{2})\in \mathcal{D}_{\zeta},
\end{align*}
and
\begin{align}
\label{M}
\mathcal{M}_{\zeta}:=\big\{(z_{1},z_{2})\in\mathcal{D}_{\zeta}: \zeta(z_{1},z_{2})\geq \zeta(x,y) \text{ for all } (x,y)\in \mathcal{D}_{\zeta}\big\}.
\end{align} 
Hence, $\mathcal{M}_{\zeta}$ denotes the set of global maximizers of the function $\zeta(z_{1},z_{2})$ defined on its domain $\mathcal{D}_{\zeta}$.
The following result states that $\mathcal{M}_{\zeta}$ is a non-empty and bounded set, and its proof is provided in Appendix \ref{sec5.11}.

\begin{prop}
\label{P.0.4}
The set $\mathcal{M}_{\zeta}$ is not empty. 
In addition, there exists a finite $z_{0}\in(0,\infty)$ such that
$\mathcal{M}_{\zeta}\subseteq \{(x,y)\in[0,\infty)^2: x+\beta<y< z_{0}\}.$
\end{prop}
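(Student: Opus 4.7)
\medskip

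The plan is to combine continuity with an a priori decay estimate of $\zeta$ at infinity, and then invoke the Weierstrass extreme value theorem on a compact sub-region of $\mathcal{D}_{\zeta}$.

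First I would verify that $\zeta$ is continuous on $\mathcal{D}_{\zeta}$. The defining inequality $z_{1}+\beta\leq z_{2}$ with $\beta>0$ forces $z_{1}<z_{2}$, so by Lemma \ref{prop2.2.v2} the denominator $g(z_{2})-g(z_{1})$ is strictly positive on all of $\mathcal{D}_{\zeta}$, and hence $\zeta$ is continuous there. Next I would check that the supremum $M:=\sup_{\mathcal{D}_{\zeta}}\zeta$ is strictly positive and finite-producing: simply evaluate at $(0,2\beta)$, which gives $\zeta(0,2\beta)=\beta/g(2\beta)>0$ by $g(0)=0$ and the strict monotonicity of $g$.

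The crux is to show that
\[
\lim_{z_{2}\to\infty}\,\sup_{0\leq z_{1}\leq z_{2}-\beta}\zeta(z_{1},z_{2})=0.
\]
Since $z_{1}\leq z_{2}-\beta$ and $g$ is increasing, I would estimate
\[
\zeta(z_{1},z_{2})\;\leq\;\frac{z_{2}-\beta}{g(z_{2})-g(z_{2}-\beta)}.
\]
From the explicit representation \eqref{10}, for $x>a$,
\[
g(x)=(1-c_{+})g^{-}(0)\,e^{\theta_{2}^{+}(x-a)}+\bigl[c_{+}g^{-}(0)-g^{+}(0)\bigr]e^{-\theta_{1}^{+}(x-a)}.
\]
I would argue that the leading coefficient $(1-c_{+})g^{-}(0)$ is strictly positive: indeed $1-c_{+}>0$ by \eqref{sign.c+}, and $g^{-}(0)>0$ because $g$ is strictly increasing with $g(0)=0$ and must tend to $+\infty$ as $x\to\infty$ (if $g$ were bounded, the ODE \eqref{dis.generator.} together with the vanishing of $g',g''$ at infinity would force $qg(\infty)=0$, contradicting $g>0$ on $(0,\infty)$), and this forces the coefficient of $e^{\theta_{2}^{+}(x-a)}$ to be positive. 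Consequently, for all sufficiently large $z_{2}$,
\[
g(z_{2})-g(z_{2}-\beta)=(1-c_{+})g^{-}(0)\bigl(1-e^{-\theta_{2}^{+}\beta}\bigr)e^{\theta_{2}^{+}(z_{2}-a)}+O\bigl(e^{-\theta_{1}^{+}z_{2}}\bigr),
\]
which grows exponentially in $z_{2}$ and therefore dwarfs the linear numerator $z_{2}-\beta$. This establishes the desired uniform decay.

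From the decay estimate I pick $z_{0}\in(0,\infty)$ large enough that $\zeta(z_{1},z_{2})<M/2$ whenever $z_{2}\geq z_{0}$, so any maximizer satisfies $z_{2}<z_{0}$ and, automatically, $z_{1}\leq z_{2}-\beta<z_{0}-\beta$. The inclusion $\mathcal{M}_{\zeta}\subseteq\{(x,y):x+\beta<y<z_{0}\}$ then follows, the strict inequality $x+\beta<y$ being immediate since $\zeta\equiv 0$ on the boundary $\{x+\beta=y\}$ while $M>0$. Finally, applying Weierstrass to the continuous function $\zeta$ on the compact triangle $\{0\leq z_{1},\,z_{1}+\beta\leq z_{2}\leq z_{0}\}$ yields a maximizer, proving $\mathcal{M}_{\zeta}\neq\emptyset$. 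The main obstacle in this plan is Step 3, specifically the verification that $(1-c_{+})g^{-}(0)>0$ (so that $g$ genuinely grows exponentially at rate $\theta_{2}^{+}$); once this sign is pinned down via the indirect argument above, the remaining estimates are routine.
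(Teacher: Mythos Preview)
Your argument is correct and takes a genuinely different route from the paper. The paper computes the partial derivatives
\[
\frac{\partial\zeta}{\partial z_{1}},\qquad \frac{\partial\zeta}{\partial z_{2}},
\]
and then argues, via the eventual monotonicity of $g'$, that $\partial\zeta/\partial z_{1}<0$ for $z_{1}$ beyond some $z_{1}^{0}$ and $\partial\zeta/\partial z_{2}<0$ for $z_{2}$ beyond some $z_{2}^{0}$ uniformly in $z_{1}\in[0,z_{1}^{0}]$; this traps all candidate maximizers inside a compact triangle, after which the boundary $z_{2}=z_{1}+\beta$ is discarded because $\zeta$ vanishes there. Your approach bypasses derivatives entirely: you bound $\zeta(z_{1},z_{2})$ above by $(z_{2}-\beta)/\bigl(g(z_{2})-g(z_{2}-\beta)\bigr)$, read off exponential growth of the denominator from the explicit formula for $g$ on $(a,\infty)$, and conclude uniform decay, then apply Weierstrass. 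This is more elementary and robust, since it needs only the monotonicity of $g$ and the positivity of the leading exponential coefficient. The paper's approach, in turn, yields slightly sharper information (actual monotonicity of $\zeta$ in each variable past a threshold), which is natural given that the partial derivative conditions \eqref{z2.int.}--\eqref{z1.int.} are exploited later anyway.

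One small comment: your justification that $(1-c_{+})g^{-}(0)>0$ via ``boundedness forces $g',g''\to 0$'' is a bit loose as stated, since boundedness alone does not imply decay of derivatives. Given the explicit form $g(x)=A e^{\theta_{2}^{+}(x-a)}+B e^{-\theta_{1}^{+}(x-a)}$ for $x>a$, it is cleaner to argue directly: $A<0$ forces $g\to-\infty$ and $A=0$ forces $g'=-\theta_{1}^{+}B e^{-\theta_{1}^{+}(x-a)}$, so $g'>0$ (Lemma \ref{prop2.2.v2}) would require $B<0$, making $g<0$ on $(a,\infty)$. Either way contradicts $g>0$ there, so $A>0$. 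Alternatively you can simply cite \eqref{15.v2}, where $g^{-}(0)>0$ is established.
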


\begin{rem}
\label{rem2.1}
For any $(z_{1},z_{2})\in\mathcal{M}_{\zeta}$, by Proposition \ref{P.0.4}, we have $\frac{\partial}{\partial z_{2}}\zeta(z_{1},z_{2})=0$, which is equivalent to
\begin{align}
\label{z2.int.}
g(z_2)-g(z_1)=(z_2-z_1-\beta)g^{\prime}(z_2).
\end{align}
Substituting \eqref{z2.int.} into \eqref{Vx.3} yields
\begin{align}\label{Vx.4}
V^{z_2}_{z_1}(x)=
\begin{cases}
\frac{g(z_2)}{g^{\prime}(z_2)}+x-z_2, & x\geq z_2,\bigskip\\
\frac{g(x)}{g^{\prime}(z_2)}, & 0\leq x< z_2.
\end{cases}
\end{align}
This indicates $V^{z_2}_{z_1}(\cdot) \in C^{1}(\bR_{+})$.
In addition, if $(z_{1},z_{2})\in\mathcal{M}_{\zeta}$ is such that $z_{1}>0$, then $\frac{\partial}{\partial z_{1}}\zeta(z_{1},z_{2})=0$, that is
\begin{align}
\label{z1.int.}
g(z_2)-g(z_1)=(z_2-z_1-\beta)g^{\prime}(z_1).
\end{align}
Thanks to $z_2>z_1$, $g'>0$ by Lemma \ref{prop2.2.v2}, it follows from \eqref{z2.int.} and \eqref{z1.int.} that $g^{\prime}(z_1)=g^{\prime}(z_2)$ for any $(z_{1},z_{2})\in\mathcal{M}_{\zeta}$ with $z_{1}>0$. 
Put
\begin{align}
\label{psi}
\psi(x,y):=\int_x^y\left(1-\frac{g^{\prime}(s)}{g^{\prime}(y)}\right)\ds,\quad x,y\in(0,\infty).
\end{align}
Then, {equation} \eqref{z2.int.} is equivalent to
\begin{align}
\label{psi=beta}
\psi(z_1,z_2)=\beta.
\end{align}
Therefore, we have $\mathcal{M}_{\zeta}\subseteq \mathcal{N}:=\mathcal{N}^{+}\cup \mathcal{N}^{-}$, where 
\begin{align}
\mathcal{N}^{+} &:= \{(z_{1},z_{2}): 0< z_{1}<z_{2}<\infty, \,\psi(z_1,z_2)=\beta, g^{\prime}(z_1)=g^{\prime}(z_2)\},\nonumber\\
\mathcal{N}^{-} &:= \{(0,z_{2}): 0< z_{2}<\infty, \,\psi(0,z_2)=\beta\}.
\end{align}
\end{rem}

\section{Explicit characterization of $\mathcal{M}_{\zeta}$} \label{sec:negative}
\setcounter{section}{3}
This section is devoted to the characterization of the explicit form of the set $\mathcal{M}_{\zeta}$ corresponding to four mutually exclusive and collectively exhaustive cases: (1) $\mu_{\pm}>0$; (2) $\mu_{\pm}\leq 0$; (3) $\mu_{+}\leq 0$ and $\mu_{-}>0$; and (4) $\mu_{+}>0$ and $\mu_{-}\leq 0$. 
We also provide, in Subsection \ref{sec3.5}, several general properties of $\mathcal{M}_{\zeta}$ that can help to better understand and analyze the connections between the optimal dividend strategy and the model parameters.

To proceed, define five constants $x_0$, $\Theta$, $a_{1}$, $a_2$ and $a_{3}$ as
\begin{align}\label{def.x0}
x_0:=\frac{\ln\frac{(g^+(0)-c_+g^-(0))(\theta_1^+)^2}{(1-c_+)g^-(0)(\theta^+_2)^2}}{\theta^+_2+\theta_1^+}+a, \quad \Theta:=c_+(\theta_1^+)^2+(1-c_+)(\theta_2^+)^2,
\end{align}
and
\begin{align}\label{def.a}
a_1:=\frac{2\ln\frac{\theta_1^-}{\theta_2^-}}{\theta_2^-+\theta_1^-},\quad a_2:=\frac{\ln\frac{\theta^+_2+\theta_1^-}{\theta_2^+-\theta_2^-}}{\theta_1^-+\theta_2^-},\quad 
a_3:=\frac{\ln\frac{(1-c_-c_+)(\theta_1^+)^2-(1-c_+)c_-(\theta_2^+)^2}{(1-c_-)\Theta}}{\theta_1^-+\theta_2^-},
\end{align}
whenever they are well-defined (note that $\ln x$ is not well-defined for $x\leq 0$). The constants defined in \eqref{def.x0} and \eqref{def.a} are instrumental in establishing the piecewise convexity (concavity) of $g$ (see, Propositions \ref{P.0.3}-\ref{prop.4.5}), a property that is fundamental to the analysis in this paper. 

\subsection{Explicit characterization of $\mathcal{M}_{\zeta}$ in the case $\mu_{\pm}>0$.}\label{sec3.1}
When $\mu_{\pm}>0$, by considering all scenarios of settings of the parameters, we distinguish the following mutually exclusive and collectively
exhaustive Cases (i)-(iv).

\begin{description}
\item[Case (i)] one of the following conditions holds
\begin{itemize}
\item $0<a_2\leq a\leq a_1$ and $c_+>0$,
\item $0<a_3\leq a\leq a_1\wedge a_2$ and $c_+>0$,
\item $0<a_3\leq a\leq a_1$, $c_+\leq0$ and $\Theta>0$.
\end{itemize}

\item[Case (ii)] one of the following conditions holds
\begin{itemize}
\item $0<a\leq a_1\wedge a_2\wedge a_3$ and $c_+>0$,
\item $0<a\leq a_1$, $c_+\leq0$ and $\Theta\leq0$,
\item $0<a\leq(a_1\wedge a_3)$, $c_+\leq0$ and $\Theta>0$.
\end{itemize}

\item[Case (iii)] one of the following conditions holds
\begin{itemize}
\item $(a_1\vee a_2)\leq a$ and $c_+>0$,
\item $(a_1\vee a_3)\leq a<a_2$ and $c_+>0$,
\item $(a_1\vee a_3)\leq a$, $c_+\leq0$ and $\Theta>0$.
\end{itemize}

\item[Case (iv)] one of the following conditions holds
\begin{itemize}
\item $a_1<a< (a_2\wedge a_3)$ and $c_+>0$,
\item $a_1<a$, $c_+\leq0$ and $\Theta\leq0$,
\item $a_1<a<a_3$, $c_+\leq0$ and $\Theta>0$.
\end{itemize}
\end{description}

With the Cases (i)-(iv) described above and notations given by \eqref{def.x0} and \eqref{def.a}, the following Proposition \ref{P.0.3} gives a complete characterization of the piece-wise concavity or convexity of the function $g(x)$ on $(0,\infty)$. Its proof is provided in Appendix \ref{sec5.3}. 

\begin{prop}
\label{P.0.3} Suppose that $\mu_{\pm}>0$. Under Case (i), $g(x)$ is concave on $(0,a)$ and convex on $(a,\infty)$. Under Case (ii), $g(x)$ is concave on $(0,x_0)$ and convex on $(x_0,\infty)$. Under Case (iii), $g(x)$ is concave on $(0,a_1)$ and convex on $(a_1,\infty)$. Under Case (iv), $g(x)$ is concave on $(0,a_1)$, convex on $(a_1,a)$, concave on $(a,x_0)$ and convex on $(x_0,\infty)$.
\end{prop}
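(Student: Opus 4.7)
The plan is to compute $g''$ separately on $(0,a)$ and on $(a,\infty)$, show that on each interval $g''$ either has a unique zero or keeps a constant sign, and then match the four listed cases to the possible combinations of these behaviours.

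First I would differentiate \eqref{10} on $(0,a)$ and obtain
\[
g''(x)=\big(g^-(0)-c_-g^+(0)\big)(\theta_2^-)^2 e^{\theta_2^-(x-a)}-(1-c_-)g^+(0)(\theta_1^-)^2 e^{-\theta_1^-(x-a)}.
\]
Plugging in $g^+(0)=e^{-\theta_2^- a}$ and $g^-(0)=c_-e^{-\theta_2^- a}+(1-c_-)e^{\theta_1^- a}$, the coefficient of the first exponential collapses to $(1-c_-)e^{\theta_1^- a}$, so
\[
g''(x)=(1-c_-)\,e^{(\theta_1^--\theta_2^-)a}\Big[(\theta_2^-)^2 e^{\theta_2^- x}-(\theta_1^-)^2 e^{-\theta_1^- x}\Big].
\]
Since $1-c_->0$ by \eqref{sign.c-} and $\mu_->0$ gives $\theta_1^->\theta_2^-$, the bracket has a unique zero at $x=a_1>0$, is negative to the left and positive to the right. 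Hence $g$ is concave on $(0,a\wedge a_1)$ and, when $a>a_1$, convex on $(a_1,a)$; this already separates cases (i)-(ii) (where $a\le a_1$) from cases (iii)-(iv) (where $a>a_1$).

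Next I would perform the analogous differentiation on $(a,\infty)$, obtaining
\[
g''(x)=A_+(\theta_2^+)^2 e^{\theta_2^+(x-a)}-B_+(\theta_1^+)^2 e^{-\theta_1^+(x-a)},\ A_+=(1-c_+)g^-(0),\ B_+=g^+(0)-c_+g^-(0).
\]
Since $g^-(0)>0$ (immediate if $c_-\ge 0$, and by the strict decrease of $g^-$ on $(-\infty,a]$ when $c_-<0$), we have $A_+>0$. Consequently $g''(x)>0$ on $(a,\infty)$ whenever $B_+\le 0$; when $B_+>0$, $g''$ is strictly increasing on $(a,\infty)$ and vanishes only at the point $x_0$ in \eqref{def.x0}, which actually belongs to $(a,\infty)$ iff $g''(a+)<0$. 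A direct computation using \eqref{sign.c-}-\eqref{sign.c+} yields
\[
B_+=(1-c_+c_-)e^{-\theta_2^- a}-c_+(1-c_-)e^{\theta_1^- a},\qquad g''(a+)=g^-(0)\Theta-g^+(0)(\theta_1^+)^2,
\]
and combining these with the identities $(1-c_+c_-)/[c_+(1-c_-)]=(\theta_2^++\theta_1^-)/(\theta_2^+-\theta_2^-)$ (valid when $c_+>0$) and $(\theta_1^+)^2-c_-\Theta=(1-c_+c_-)(\theta_1^+)^2-(1-c_+)c_-(\theta_2^+)^2$ would give, after routine algebra, (a) $B_+>0\Leftrightarrow a<a_2$ when $c_+>0$ and $1-c_+c_->0$, while $B_+>0$ unconditionally when $c_+\le 0$, and (b) $g''(a+)<0\Leftrightarrow a<a_3$ whenever $\Theta>0$, with $g''(a+)<0$ automatic when $\Theta\le 0$.

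Finally I would synthesize the two analyses to recover the four cases. Case (i) collects the parameter ranges with $a\le a_1$ and $g''\ge 0$ throughout $(a,\infty)$, i.e., either $B_+\le 0$ (when $c_+>0$ and $a\ge a_2$) or $B_+>0$ with $g''(a+)\ge 0$ (when $a\ge a_3$ and $\Theta>0$). Case (ii) is $a\le a_1$ with $B_+>0$ and $g''(a+)<0$, so that $g$ is concave-then-convex on $(a,\infty)$ with split at $x_0>a$; the concavity on $(0,a)$ joins continuously with that on $(a,x_0)$, yielding a single split at $x_0$. Cases (iii) and (iv) are the analogues for $a>a_1$. The hard part will be the bookkeeping that matches each of the three sub-conditions listed under cases (i)-(iv) to exactly the correct sign pattern of $B_+$ and $g''(a+)$, together with checking that the thresholds $a_1,a_2,a_3$ are well-defined (logarithms of positive numbers) and correctly ordered under the stated hypotheses; the boundary equalities $a=a_1,a_2,a_3$ are then handled by continuity or by direct inspection.
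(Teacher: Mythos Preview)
Your proposal is correct and follows essentially the same route as the paper's proof: compute $g''$ piecewise, simplify on $(0,a)$ to get the single zero at $a_1$, and on $(a,\infty)$ analyse the sign of $B_+=g^+(0)-c_+g^-(0)$ (the paper calls this $h_1(a)$) together with the sign of $g''(a+)$ (the paper works with $h_2(a):=-g''(a+)$), obtaining the thresholds $a_2$ and $a_3$ exactly as you describe. The final synthesis into the four cases is likewise the same case-by-case bookkeeping; your remark that concavity on $(0,a)$ and $(a,x_0)$ merges into concavity on $(0,x_0)$ is justified by the continuity of $g'$ at $a$.
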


Recall that, under Case (i), $g^{\prime}(x)$ is continuous, strictly decreasing and continuously differentiable on $(0,a)$ and strictly increasing and continuously differentiable on $(a,\infty)$ {by Proposition \ref{P.0.3}}. {Let $(g^{\prime})_{-}^{-1}(x):=\inf\{z\in[0,a];g'(z)\leq x\}$ for $x\in[g'(a),\infty)$ and $(g^{\prime})_{+}^{-1}(x)$ be the inverse function of $[a,\infty)\ni x\mapsto g^{\prime}(x)\in[g^{\prime}(a),\infty)$.} Define $a_4:=(g')^{-1}_+(g'(0))$. Further, define the unary function $\phi$ as
\begin{align}
\label{def.phi.}
\phi(x) &:= \psi((g^{\prime})_-^{-1}(g^{\prime}(x)),x)= 
\int_{(g^{\prime})_-^{-1}(g^{\prime}(x))}^{x}\left(1-\frac{g^{\prime}(s)}{g^{\prime}(x)}\right)\ds, \quad x\in[a,\infty).
\end{align}

Under Case (ii), $g^{\prime}(x)$ is continuous, strictly decreasing and continuously differentiable on $(0,x_0)$ and strictly increasing and continuously differentiable on $(x_0,\infty)$ {by Proposition \ref{P.0.3}}. Define inverse function $(\Bar{g}^{\prime})_{-}^{-1}$ (resp., $(\Bar{g}^{\prime})_+^{-1}$) the same as $(g^{\prime})_{-}^{-1}$ (resp., $(g^{\prime})_{+}^{-1}$) but with $x_0$ in place of $a$.
We further define a unary function $\Bar{\phi}$ the same as \eqref{def.phi.} but with $(g^{\prime})_{-}^{-1}$ replaced by $(\Bar{g}^{\prime})_{-}^{-1}$ and $x_0$ in place of $a$.
Denote by $\phi^{-1}$ and $\Bar{\phi}^{-1}$ the inverse functions of $\phi$ and $\Bar{\phi}$, respectively. 
The well-definedness of these three inverse functions will be confirmed in the proof of the upcoming Theorem \ref{thm2.1}, which 
explicitly characterizes the set $\mathcal{M}_{\zeta}$.
 
Under Case (iii), $g^{\prime}(x)$ is continuous, strictly decreasing and continuously differentiable on $(0,a_1)$ and strictly increasing and continuously differentiable on $(a_1,\infty)$. Let $(\Tilde{g}^{\prime})_{-}^{-1}$ (resp., $(\Tilde{g}^{\prime})_{+}^{-1}$) be defined the same as $(g^{\prime})_{-}^{-1}$ (resp., $(g^{\prime})_{+}^{-1}$) but with $a_1$ in place of $a$. Define further a unary function $\Tilde{\phi}$ in the same manner as \eqref{def.phi.} but with $(g^{\prime})^{-1}$ replaced by $(\Tilde{g}^{\prime})^{-1}$ and $a_1$ in place of $a$. The inverse functions of $\Tilde{\phi}$ is denoted as $\Tilde{\phi}^{-1}$.

Under Case (iv), $g^{\prime}(x)$ is continuous, strictly decreasing (resp., increasing) and continuously differentiable on $(0,a_1)$ and $(a,x_0)$ (resp., $(a_1,a)$ and $(x_0,\infty)$). Let $(g^{\prime})_1^{-1}$
be defined the same as $(g^{\prime})_{-}^{-1}$ but with $a_1$ in place of $a$; 
$(g^{\prime})_2^{-1}$ be the inverse function of $[a_1,a]\ni x\mapsto g^{\prime}(x)\in [g^{\prime}(a_1),g^{\prime}(a)]$; $(g^{\prime})_3^{-1}$ be the inverse function of $[a,x_0]\ni x\mapsto g^{\prime}(x)\in[g^{\prime}(x_0),g^{\prime}(a)]$; and, $(g^{\prime})_4^{-1}$ be the inverse function of $[x_0,\infty)\ni x\mapsto g^{\prime}(x)\in[g^{\prime}(x_0),\infty)$. Denote $a_5:=\inf\{x\geq x_0; g^{\prime}(x)\geq g^{\prime}(a_1)\}$ and $a_6:=(g^{\prime})^{-1}_4(g^{\prime}(a))$. 
Furthermore, put
\begin{align}
\label{x1.def}
x_1&:=\inf\bigg\{x\in[a_5,a_{6}]: \int_{(g^{\prime})_1^{-1}(g^{\prime}(x))}^{(g^{\prime})_3^{-1}(g^{\prime}(x))}\left(1-\frac{g^{\prime}(s)}{g^{\prime}(x)}\right)\ds\geq 0\bigg\},
\\
\label{x2.def}
x_2&:=\inf\bigg\{x\in[a_5,a_{6}]: \int_{(g^{\prime})_2^{-1}(g^{\prime}(x))}^x\left(1-\frac{g^{\prime}(s)}{g^{\prime}(x)}\right)\ds\geq 0\bigg\},
\\
\label{addnewomega1}
\omega_1(x)&:=
\int_{(g^{\prime})_1^{-1}(g^{\prime}(x))}^x\left(1-\frac{g^{\prime}(s)}{g^{\prime}(x)}\right)\ds,\quad x\in[a_1,(g^{\prime})_{2}^{-1}( g^{\prime}(x_2))]\cup [x_2,\infty),
\\
\label{addnewomega2}
\omega_{2}(x)&:=
\begin{cases}
\int_{(g^{\prime})_3^{-1}(g^{\prime}(x))}^x\left(1-\frac{g^{\prime}(s)}{g^{\prime}(x)}\right)\ds,& x\in[x_0,x_1),
\\
\int_{(g^{\prime})_1^{-1}(g^{\prime}(x))}^x\left(1-\frac{g^{\prime}(s)}{g^{\prime}(x)}\right)\ds, & x\in [x_1,\infty).
\end{cases}
\end{align}
With the above notations, we are now ready to provide an explicit characterization of $\mathcal{M}_{\zeta}$ in the following Theorem \ref{thm2.1}, whose proof will be presented in Appendix \ref{sec5.4}.
\begin{thm} \label{thm2.1}
Suppose that $\mu_{\pm}>0$.
\begin{itemize}
\item
Under Case (i), we have
$\mathcal{M}_{\zeta}=\{((g^{\prime})_-^{-1}(g^{\prime}(\phi^{-1}(\beta))),\phi^{-1}(\beta))\}.$
\item Under Case (ii), we have
$\mathcal{M}_{\zeta}=\{((\Bar{g}^{\prime})_-^{-1}(g^{\prime}(\Bar{\phi}^{-1}(\beta))),\Bar{\phi}^{-1}(\beta))\}.$ 
\item Under Case (iii), we have
$\mathcal{M}_{\zeta}=\{((\Tilde{g}^{\prime})_-^{-1}(g^{\prime}(\Tilde{\phi}^{-1}(\beta))),\Tilde{\phi}^{-1}(\beta))\}.$ 
\item Under Case (iv), we have
\begin{equation*}
\mathcal{M}_{\zeta}=
\begin{cases}
\{
(\tilde{z}_1,\tilde{z}_2)\}, & \text{if $\beta\in A_{1}\cup A_{3}$,
}\\
\{(\Bar{z}_1,\Bar{z}_2)
\}, & \text{if $\beta\in A_{2}\cap \overline{A}_3$,}
\\
\{(\tilde{z}_1,\tilde{z}_2)\}\cup\{(\bar{z}_1,\bar{z}_2)
\}, &\text{if $\beta\in \overline{A_{1}\cup A_{2}\cup A_{3}}$,}
\end{cases}
\end{equation*}
where $(\tilde{z}_1,\tilde{z}_2):=((g^{\prime})_{1}^{-1}(g^{\prime}(\omega_{1}^{-1}(\beta))),\omega_{1}^{-1}(\beta))$, $(\Bar{z}_1,\Bar{z}_2):=((g^{\prime})_{3}^{-1}(g^{\prime}(\omega_{2}^{-1}(\beta))),\omega_{2}^{-1}(\beta))$, $A_{1}:=\{\beta>0: g^{\prime}(\omega_1^{-1}(\beta))<g^{\prime}(\omega_{2}^{-1}(\beta))\}$, $A_{2}:=\{\beta>0: g^{\prime}(\omega_1^{-1}(\beta))>g^{\prime}(\omega_{2}^{-1}(\beta))\}$, $A_{3}:=(\omega_{2}(x_1),\infty)$, 
and $\overline{A}:=(0,\infty)\setminus A$. It holds that $ A_{2}\cap A_3=\emptyset$. We mention that, when $\beta=\omega_{1}(x_2)$, one has $\omega_{1}^{-1}(\beta)=\{(g^{\prime})_{2}^{-1}( g^{\prime}(x_2)),x_2\}$; in this case, $\{(\tilde{z}_1,\tilde{z}_2)\}$ is understood as $\{(\tilde{z}_1,(g^{\prime})_{2}^{-1}( g^{\prime}(x_2))),(\tilde{z}_1,x_2)\}$.
\end{itemize}
\end{thm}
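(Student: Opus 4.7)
The plan is to apply the first-order characterization from Remark \ref{rem2.1} together with the piecewise concavity/convexity of $g$ established in Proposition \ref{P.0.3}. Proposition \ref{P.0.4} guarantees that $\mathcal{M}_{\zeta}$ is nonempty and bounded, and any critical pair $(z_1,z_2)$ with $z_1>0$ satisfies both $g^{\prime}(z_1)=g^{\prime}(z_2)$ and $\psi(z_1,z_2)=\beta$. A crucial simplification is the identity $\zeta(z_1,z_2)=1/g^{\prime}(z_2)$ at any such critical pair, obtained by substituting \eqref{z2.int.} into the definition of $\zeta$; this reduces the comparison between candidates to minimizing $g^{\prime}(z_2)$.

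For Cases (i)-(iii), Proposition \ref{P.0.3} shows that $g^{\prime}$ has a unique global minimum at $a$, $x_0$, or $a_1$ respectively, is strictly decreasing on the left and strictly increasing on the right. The condition $g^{\prime}(z_1)=g^{\prime}(z_2)$ with $z_1<z_2$ therefore determines $z_1$ uniquely from $z_2$ via the inverse of the decreasing branch, and the remaining scalar equation $\phi(z_2)=\beta$ (respectively $\Bar{\phi}(z_2)=\beta$ or $\Tilde{\phi}(z_2)=\beta$) has a unique solution once one verifies that the relevant function is continuous and strictly increasing on its domain with range $(0,\infty)$. Differentiating under the integral sign and using strict monotonicity of $g^{\prime}$ on each branch yields a strictly positive derivative, giving well-definedness of the inverse. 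To rule out boundary maximizers $(0,z_2)\in\mathcal{N}^{-}$, I would check that an interior critical pair always yields strictly smaller $g^{\prime}(z_2)$ than any boundary $z_2$ solving $\psi(0,z_2)=\beta$, so by $\zeta=1/g^{\prime}(z_2)$ the interior candidate dominates.

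Case (iv) is the principal obstacle. Since $g^{\prime}$ has two local minima (at $a_1$ and $x_0$) and a local maximum (at $a$), the equation $g^{\prime}(z_1)=g^{\prime}(z_2)$ with $z_1<z_2$ admits pairings across multiple monotone branches. The functions $\omega_1$ and $\omega_2$ in \eqref{addnewomega1}-\eqref{addnewomega2} encode $\psi(z_1,z_2)$ along these branch families, with the thresholds $x_1$ and $x_2$ marking the points where the defining integral transitions from negative (infeasible, since $\beta>0$) to nonnegative. I would prove that each $\omega_i$ is continuous and strictly increasing on its effective domain with range covering $(0,\infty)$, so that $\omega_1^{-1}(\beta)$ and $\omega_2^{-1}(\beta)$ are well-defined, apart from the degenerate value $\beta=\omega_1(x_2)$ noted in the theorem, at which $\omega_1^{-1}$ contains two points. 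The two candidate pairs $(\tilde{z}_1,\tilde{z}_2)$ and $(\Bar{z}_1,\Bar{z}_2)$ are then compared via $\zeta=1/g^{\prime}(z_2)$: the sets $A_1$ and $A_2$ distinguish the $\beta$-values for which $\omega_1$ or $\omega_2$ yields the smaller $g^{\prime}(z_2)$, while $A_3$ records the range of $\beta$ for which the $\omega_2$-candidate leaves its valid domain and only the $\omega_1$-candidate survives; in the borderline situation both candidates coexist and both belong to $\mathcal{M}_{\zeta}$.

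The principal technical difficulty lies in verifying the monotonicity and range properties of $\omega_1$ and $\omega_2$ across the transitions at $x_1$ and $x_2$ where their defining inverse branch switches (involving a careful sign analysis of the integrand $1-g^{\prime}(s)/g^{\prime}(x)$ on each of the four monotone pieces of $g^{\prime}$), and in systematically excluding other conceivable pairings — for instance, $z_1$ on the interior increasing branch $(a_1,a)$ paired with $z_2$ on $(x_0,\infty)$, or $z_1=0$ paired with $z_2$ on any branch — so that no optimal candidate is overlooked. Once these monotonicity, range, and exhaustiveness statements are in place, the stated characterization of $\mathcal{M}_{\zeta}$ in each case follows by direct comparison.
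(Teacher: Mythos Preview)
Your proposal is correct and follows the paper's approach: both reduce to the containment $\mathcal{M}_\zeta \subseteq \mathcal{N}$ from Remark~\ref{rem2.1}, parametrize candidates via the monotone branches of $g'$ given by Proposition~\ref{P.0.3}, exploit the identity $\zeta=1/g'(z_2)$ at critical points to compare candidates, and in Case~(iv) systematically eliminate suboptimal branch pairings (the paper formalizes this step as six explicit ``claims'' cutting $\mathcal{N}$ down to four candidate sets $\mathcal{R}_1,\dots,\mathcal{R}_4$). One minor correction: in Cases~(i)--(iii) you should not expect to rule out the boundary candidates $(0,z_2)\in\mathcal{N}^-$ outright, since for $\beta$ large no interior pair with $g'(z_1)=g'(z_2)$ exists and the boundary is then optimal; the paper handles this uniformly by defining $(g')^{-1}_-$ as an infimum so that it returns $0$ once $g'(z_2)\geq g'(0)$, making the displayed formula cover both interior and boundary cases without a separate argument.
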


\subsection{Explicit characterization of $\mathcal{M}_{\zeta}$ in the case $\mu_{\pm}\leq 0$.}

In this subsection, we discuss the case of $\mu_{\pm}\leq0$. We intend to offer merely the main results while most of their proofs are omitted because they require no new techniques in comparison to that of Subsection \ref{sec3.1}.

When $\mu_{\pm}\leq0$, one can conclude that 
{$\theta^{-}_1\leq \theta_2^{-}$ and $\theta^{+}_1\leq \theta_2^{+}$} using \eqref{theta12}. Recall that the function $g(x)$ defined by \eqref{10} is strictly increasing with $g(0)=0$.
The following Proposition \ref{prop.4.1} gives the convexity of the function $g(x)$ on $(0,\infty)$. The proof is deferred to Appendix \ref{sec5.5}.

\begin{prop}\label{prop.4.1}
Suppose that $\mu_{\pm}\leq0$. The function $g(x)$ is convex on $(0,\infty)$.
\end{prop}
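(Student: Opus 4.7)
The plan is to read convexity directly from the defining ODE \eqref{dis.generator.}. On each of the intervals $(0,a)$ and $(a,\infty)$ the function $g$ is $C^2$ and satisfies $\frac{1}{2}\sigma_\pm^2 g''(x)+\mu_\pm g'(x)=q\, g(x)$, so
\[
g''(x)=\frac{2}{\sigma_\pm^2}\big(q\,g(x)-\mu_\pm g'(x)\big),\qquad x\in(0,a)\cup(a,\infty).
\]
First, I would invoke Lemma~\ref{prop2.2.v2}, which gives $g'>0$ on $\bR$ and $g(0)=0$; together these imply $g(x)>0$ for all $x>0$. Since $q>0$, $\sigma_\pm^2>0$ and by assumption $\mu_\pm\leq 0$, both summands in $q\,g(x)-\mu_\pm g'(x)$ are nonnegative on $(0,\infty)$, hence $g''(x)\geq 0$ on $(0,a)\cup(a,\infty)$.

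Next, I would promote this piecewise second-derivative information to global convexity. By construction $g\in C^1(\bR)\cap C^2(\bR\setminus\{a\})$, so $g'$ is continuous on $(0,\infty)$ and nondecreasing on each of the two subintervals $(0,a]$ and $[a,\infty)$ by the sign of $g''$. Continuity of $g'$ at $a$ (which holds because $g^{\pm\prime}(a-)=g^{\pm\prime}(a+)$, hence $g'(a-)=g'(a+)$) then yields that $g'$ is nondecreasing on the whole of $(0,\infty)$, which is precisely the convexity of $g$ on $(0,\infty)$.

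The argument uses nothing beyond the ODE, the sign information $g,g'>0$ from Lemma~\ref{prop2.2.v2}, the hypothesis $\mu_\pm\leq 0$, and the $C^1$-regularity of $g$ at the interface $x=a$; no case-splitting on the signs of $c_\pm$ is required, in stark contrast with the $\mu_\pm>0$ analysis of Subsection~\ref{sec3.1}. The only mildly delicate point is the gluing at $x=a$, but since $g''\geq 0$ on each side and $g'$ is continuous across $a$, this causes no difficulty.
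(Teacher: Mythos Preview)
Your proof is correct and is a genuinely different, more conceptual route than the paper's. The paper proceeds by brute force: it writes out $g''$ explicitly in terms of the exponentials $e^{\theta_i^\pm(x-a)}$ on each of the two intervals, uses the sign relation $\theta_1^-\le\theta_2^-$ (equivalent to $\mu_-\le 0$) to show $g''(x)>0$ on $(0,a)$, and then bounds $g''(a+)\ge[g^-(0)-g^+(0)](\theta_1^+)^2>0$ together with monotonicity of the explicit expression to conclude $g''>0$ on $(a,\infty)$. Your argument bypasses all of this by reading $g''=\tfrac{2}{\sigma_\pm^2}(qg-\mu_\pm g')$ directly from the ODE and using only the qualitative information $g>0$, $g'>0$ from Lemma~\ref{prop2.2.v2} together with the hypothesis $\mu_\pm\le 0$; the $C^1$-gluing at $a$ then upgrades piecewise convexity to global convexity. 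What your approach buys is brevity and robustness (no dependence on the explicit form of $g$, no case analysis on $c_\pm$); what the paper's computation buys is the finer information that $g''$ is strictly increasing on each piece, which is not needed here but is in the spirit of the more detailed structural analysis carried out for the other sign configurations of $\mu_\pm$.
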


The following Theorem \ref{thm4.1} explicitly characterizes the set $\mathcal{M}_{\zeta}$ defined by \eqref{M} as a singleton. 
The proof of Theorem \ref{thm4.1} is similar to that of Theorem \ref{thm2.1} and hence omitted. 

\begin{thm}
\label{thm4.1}
Suppose that $\mu_{\pm}\leq0$. The set $\mathcal{M}_{\zeta}$ is a singleton set as
$$\mathcal{M}_{\zeta}=\{(0,\phi_0^{-1}(\beta))\},$$ where the function $\phi^{-1}_0$ is the well-defined inverse function of $\phi_0$ given by 
\begin{align}\label{def.phi0}
\phi_0(x):=\psi(0,x)
,\quad x\in[0,\infty).
\end{align}
\end{thm}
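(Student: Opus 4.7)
The plan is to use Proposition \ref{prop.4.1} and the necessary optimality conditions collected in Remark \ref{rem2.1} to reduce the characterization of $\mathcal{M}_\zeta$ to solving a single-variable equation $\phi_0(z_2)=\beta$, and then to show this equation has a unique positive solution.

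Step one: rule out $\mathcal{N}^+$. By Remark \ref{rem2.1}, $\mathcal{M}_\zeta\subseteq\mathcal{N}^+\cup\mathcal{N}^-$. Suppose for contradiction $(z_1,z_2)\in\mathcal{N}^+$; then $0<z_1<z_2$ and $g'(z_1)=g'(z_2)$. By Proposition \ref{prop.4.1}, $g$ is convex on $(0,\infty)$, so $g'$ is non-decreasing and is therefore constant on $[z_1,z_2]$; equivalently, $g$ is affine on $[z_1,z_2]$. But by \eqref{10}, on each of the intervals $(0,a]$ and $(a,\infty)$ the function $g$ is a non-trivial linear combination of two distinct exponentials, which cannot coincide with any affine function on a subinterval of positive length. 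This contradicts $g$ being affine on $[z_1,z_2]$, forcing $\mathcal{N}^+=\varnothing$ and hence $\mathcal{M}_\zeta\subseteq\mathcal{N}^-$.

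Step two: show that $\phi_0:[0,\infty)\to[0,\infty)$ is a continuous strictly increasing bijection. Since $g(0)=0$, an elementary manipulation of \eqref{psi} gives the closed form
\[
\phi_0(x)=x-\frac{g(x)}{g'(x)},\qquad x\geq 0,
\]
so $\phi_0(0)=0$, and away from $a$ one has $\phi_0'(x)=g(x)g''(x)/g'(x)^2\geq 0$ by Lemma \ref{prop2.2.v2} and Proposition \ref{prop.4.1}. Strict monotonicity then follows from the same ``exponentials cannot be affine'' argument as in Step one, since $\phi_0'\equiv 0$ on an interval would force $g''\equiv 0$ there, making $g$ affine. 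For the behaviour at infinity, the dominant term of $g$ on $(a,\infty)$ in \eqref{10} is $(1-c_+)g^-(0)e^{\theta_2^+(x-a)}$, which yields $g(x)/g'(x)\to 1/\theta_2^+$ and hence $\phi_0(x)\to\infty$. Continuity of $\phi_0$ is clear from its closed form and $g\in C^1(\bR)$. Therefore $\phi_0^{-1}$ is well-defined on $[0,\infty)$ and $\mathcal{N}^-=\{(0,\phi_0^{-1}(\beta))\}$.

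Step three: conclude. Steps one and two give $\mathcal{M}_\zeta\subseteq\{(0,\phi_0^{-1}(\beta))\}$, and Proposition \ref{P.0.4} guarantees $\mathcal{M}_\zeta\neq\varnothing$, so equality holds.

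The main obstacle I anticipate is verifying that the leading exponential coefficients in \eqref{10} are strictly positive, namely $(1-c_+)g^-(0)>0$ on $(a,\infty)$ and $(g^-(0)-c_-g^+(0))>0$ on $(0,a]$, which is what prevents $g$ from degenerating to an affine function on any subinterval and is also what drives the asymptotics of $g(x)/g'(x)$. These facts should fall out of the proof of Proposition \ref{prop.4.1} together with Lemma \ref{prop2.2.v2} (which gives $g>0$ and $g'>0$ on $(0,\infty)$), but they require careful bookkeeping in terms of $c_\pm$ and $\theta_1^\pm,\theta_2^\pm$ when $\mu_\pm\leq 0$.
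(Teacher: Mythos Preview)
Your proposal is correct and follows essentially the same approach the paper intends: the paper omits the proof, pointing to the template of Theorem \ref{thm2.1}, which is precisely the route you take---characterize $\mathcal{N}=\mathcal{N}^+\cup\mathcal{N}^-$ via Remark \ref{rem2.1}, rule out $\mathcal{N}^+$ using the convexity of $g$ from Proposition \ref{prop.4.1}, show $\phi_0$ is a strictly increasing bijection so that $\mathcal{N}^-$ is a singleton, and conclude via Proposition \ref{P.0.4}.

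One small simplification is available in Step one. The proof of Proposition \ref{prop.4.1} in the appendix actually establishes the \emph{strict} inequality $g''(x)>0$ on $(0,a)\cup(a,\infty)$, so $g'$ is strictly increasing on $(0,\infty)$ and $g'(z_1)=g'(z_2)$ with $z_1<z_2$ is impossible outright; the ``exponentials cannot be affine'' detour is not needed. The same strict inequality also gives $\phi_0'>0$ directly in Step two. As for your anticipated obstacle, the positivity of the leading coefficients is already recorded in the paper: $(1-c_+)>0$ by \eqref{sign.c+}, $g^-(0)>g^+(0)>0$ by \eqref{15.v2}, and $g^-(0)-c_-g^+(0)=(1-c_-)e^{\theta_1^-a}>0$ by \eqref{12}, so no further bookkeeping is required.
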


\subsection{Explicit characterization of $\mathcal{M}_{\zeta}$ in the case $\mu_{+}\leq 0$ and $\mu_{-}>0$.}

We next consider the case $\mu_+\leq0$ and $\mu_->0$, under which one can conclude $\theta_1^+\leq\theta_2^+$ and $\theta_1^->\theta_2^-$ using \eqref{theta12}. The following Proposition \ref{prop.4.3}, corresponding to all scenarios of settings of the parameters, gives a complete characterization of the piece-wise concavity or convexity of the function $g(x)$ on $(0,\infty)$. The proof is deferred to Appendix \ref{sec5.6}.
\begin{prop}\label{prop.4.3}
Suppose that $\mu_+\leq 0$ and $\mu_->0$. Then the function $g(x)$ is concave on $(0,a_{1}\wedge a)$ and convex on $(a_{1}\wedge a,\infty)$.
\end{prop}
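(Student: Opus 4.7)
The plan is to compute $g''$ directly from \eqref{10} on the two sub-intervals $(0,a)$ and $(a,\infty)$, read off its sign on each piece, and stitch them together. Throughout I will use the identities $g^+(0)=e^{-\theta_2^- a}$ and $g^-(0)=c_- e^{-\theta_2^- a}+(1-c_-)e^{\theta_1^- a}$ from \eqref{scale.fun.}, together with the facts $1-c_->0$ and $1-c_+>0$ from \eqref{sign.c-}--\eqref{sign.c+}.

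On $(0,a)$, substituting $g^{\pm}(0)$ into \eqref{10} collapses both coefficients to the single factor $1-c_-$, yielding
\[
g(x)=(1-c_-)\Big[e^{\theta_1^- a+\theta_2^-(x-a)}-e^{-\theta_2^- a-\theta_1^-(x-a)}\Big].
\]
Differentiating twice and factoring out the positive quantity $(1-c_-)e^{-\theta_2^- a-\theta_1^-(x-a)}$, I obtain
$\operatorname{sgn} g''(x)=\operatorname{sgn}\bigl((\theta_2^-)^2 e^{(\theta_1^-+\theta_2^-)x}-(\theta_1^-)^2\bigr)$. Under $\mu_->0$ one has $\theta_1^->\theta_2^->0$, so the bracket is strictly increasing in $x$, is negative at $x=0$, and vanishes precisely at $x=a_1>0$. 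Consequently $g''<0$ on $(0,a\wedge a_1)$ and $g''>0$ on $(a\wedge a_1,a)$, the latter interval being empty when $a\leq a_1$.

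On $(a,\infty)$, write $g(x)=B_1 e^{\theta_2^+(x-a)}-B_2 e^{-\theta_1^+(x-a)}$ with $B_1:=(1-c_+)g^-(0)$ and $B_2:=g^+(0)-c_+ g^-(0)$. A brief check first shows $g^-(0)>0$: it is immediate when $c_-\geq 0$, and when $c_-<0$ the bound $a>0$ forces $e^{\theta_1^- a}\geq 1\geq e^{-\theta_2^- a}$, whence $(1-c_-)e^{\theta_1^- a}+c_- e^{-\theta_2^- a}\geq (1-c_-)+c_-=1$; in particular $B_1>0$. Factoring out $e^{-\theta_1^+(x-a)}>0$,
\[
g''(x)=e^{-\theta_1^+(x-a)}\Big[B_1(\theta_2^+)^2 e^{(\theta_2^++\theta_1^+)(x-a)}-B_2(\theta_1^+)^2\Big],
\]
which is automatically non-negative when $B_2\leq 0$. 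When $B_2>0$ the bracket is strictly increasing in $x$, so it suffices to verify non-negativity at $x=a$, i.e. $B_1(\theta_2^+)^2\geq B_2(\theta_1^+)^2$. Since $\mu_+\leq 0$ gives $\theta_1^+\leq\theta_2^+$, it is enough to prove $B_1\geq B_2$, equivalently $g^-(0)\geq g^+(0)$; and direct computation yields $g^-(0)-g^+(0)=(1-c_-)(e^{\theta_1^- a}-e^{-\theta_2^- a})>0$.

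Assembling the two pieces, $g''\leq 0$ on $(0,a\wedge a_1)$ and $g''\geq 0$ on $(a\wedge a_1,\infty)$ away from the (at most two) points $a\wedge a_1$ and $a$ where $g''$ may jump. Since $g\in C^1(\bR)$ by construction, this pointwise sign information upgrades to concavity on $(0,a\wedge a_1)$ and convexity on $(a\wedge a_1,\infty)$, which is the claim. The main technical point is the case split on the sign of $B_2$ together with the preliminary verification that $g^-(0)>0$ when $c_-$ may be negative; once these sign issues are handled, everything else reduces to bookkeeping of exponentials governed by the inequalities $\theta_1^->\theta_2^-$, $\theta_1^+\leq\theta_2^+$, and $1-c_\pm>0$.
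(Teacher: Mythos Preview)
Your proof is correct and follows essentially the same route as the paper: compute $g''$ separately on $(0,a)$ and $(a,\infty)$, identify the sign change on $(0,a)$ at $a_1$ via $\theta_1^->\theta_2^-$, and show $g''(a+)\geq 0$ on $(a,\infty)$ using $\theta_1^+\leq\theta_2^+$ together with $g^-(0)>g^+(0)$. The only noteworthy difference is that you insert an explicit case split on the sign of $B_2=g^+(0)-c_+g^-(0)$, whereas the paper simply asserts that $g''$ is strictly increasing on $(a,\infty)$; your split is in fact the more careful treatment, since when $B_2<0$ the monotonicity of $g''$ can fail even though positivity of $g''$ is then trivial.
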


The following Theorem \ref{thm4.2} explicitly characterizes the set $\mathcal{M}_{\zeta}$ as a singleton in the case $\mu_{+}\leq0$ and $\mu_->0$. We do not provide a proof for Theorem \ref{thm4.2} due to its similarity to that of Theorem \ref{thm2.1}.

\begin{thm}
\label{thm4.2}
Suppose that $\mu_{+}\leq0$ and $\mu_->0$. For the following mutually exclusive and collectively exhaustive cases, the set $\mathcal{M}_{\zeta}$ is a singleton set and can be characterized explicitly as follows.
\begin{itemize}
\item If $0<a\leq a_1$, then $\mathcal{M}_{\zeta}=\{((g^{\prime})_-^{-1}(g^{\prime}(\phi^{-1}(\beta))),\phi^{-1}(\beta))\}$.
\item If $a>a_1$, then $\mathcal{M}_{\zeta}=\{((\Tilde{g}^{\prime})_-^{-1}(g^{\prime}(\Tilde{\phi}^{-1}(\beta))),\Tilde{\phi}^{-1}(\beta))\}.$
\end{itemize}
Here, the functions 
$\phi^{-1}$, $\Tilde{\phi}^{-1}$, $(g^{\prime})^{-1}_{\pm}$ and $(\Tilde{g}^{\prime})^{-1}_{\pm}$ are defined just before Theorem \ref{thm2.1}.
\end{thm}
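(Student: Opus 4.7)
The overall strategy is to adapt the argument used for Cases (i) and (iii) of Theorem \ref{thm2.1} to the present geometry of $g$. By Proposition \ref{prop.4.3}, the function $g$ is concave on $(0,a_{1}\wedge a)$ and strictly convex on $(a_{1}\wedge a,\infty)$. When $0<a\leq a_{1}$ the inflection sits at $a$, mirroring Case (i) of Theorem \ref{thm2.1}; when $a>a_{1}$ it sits at $a_{1}$, mirroring Case (iii). I would handle the two subcases in parallel, leveraging the common fact that $g'$ attains its unique minimum at $a_{1}\wedge a$ and is strictly monotone on each side.

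First, I would invoke Remark \ref{rem2.1} to reduce the two-dimensional optimization to a one-parameter family. Any interior maximizer $(z_{1},z_{2})\in\mathcal{M}_{\zeta}$ with $z_{1}>0$ satisfies $g'(z_{1})=g'(z_{2})$ together with $\psi(z_{1},z_{2})=\beta$. Using the monotonicity of $g'$, the locus $g'(z_{1})=g'(z_{2})$ with $0\leq z_{1}\leq a_{1}\wedge a\leq z_{2}$ can be parametrized by $z_{2}$, with $z_{1}(z_{2})=(g')_{-}^{-1}(g'(z_{2}))$ when $g'(z_{2})\leq g'(0)$ and $z_{1}(z_{2})=0$ otherwise (replacing $(g')_{\pm}^{-1}$ by $(\tilde g')_{\pm}^{-1}$ in the subcase $a>a_{1}$). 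The function $\phi$ or $\tilde\phi$ defined via \eqref{def.phi.} then encodes $\psi(z_{1}(z_{2}),z_{2})$, and the problem reduces to inverting $\phi(\cdot)=\beta$.

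The core computation is to show that $\phi$ is a strictly increasing bijection from $[a_{1}\wedge a,\infty)$ onto $[0,\infty)$. Rewriting
\begin{align*}
\phi(z_{2})=(z_{2}-z_{1}(z_{2}))-\frac{g(z_{2})-g(z_{1}(z_{2}))}{g'(z_{2})},
\end{align*}
and differentiating while using $g'(z_{1}(z_{2}))=g'(z_{2})$ to cancel all $z_{1}'(z_{2})$-contributions, one obtains
\begin{align*}
\phi'(z_{2})=\frac{g''(z_{2})\bigl(g(z_{2})-g(z_{1}(z_{2}))\bigr)}{g'(z_{2})^{2}}>0,
\end{align*}
where positivity uses convexity of $g$ past $a_{1}\wedge a$ together with the strict monotonicity of $g$ from Lemma \ref{prop2.2.v2}. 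Combined with $\phi(a_{1}\wedge a)=0$ and $\lim_{z_{2}\to\infty}\phi(z_{2})=\infty$ (which follows from $g(z_{2})/g'(z_{2})\to 1/\theta_{2}^{+}$ via the leading exponential growth of $g$), the inverse $\phi^{-1}$ is well-defined on $(0,\infty)$. Existence of a global maximizer is then guaranteed by Proposition \ref{P.0.4}, and uniqueness follows from the uniqueness of the first-order solution together with the observation that the boundary case $z_{1}=0$ is subsumed via the clipping $z_{1}(z_{2})=0$ in the parametrization.

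The main obstacle I anticipate is verifying that $\phi$ is $C^{1}$ (with monotonicity preserved) across the transition point $z_{2}=(g')_{+}^{-1}(g'(0))$, where $z_{1}(z_{2})$ hits the boundary $0$: one must check that the one-sided derivatives agree there, and that on the region where $z_{1}(z_{2})=0$ the simplified expression $\phi(z_{2})=z_{2}-g(z_{2})/g'(z_{2})$ remains strictly increasing, both of which follow from the convexity of $g$ past $a_{1}\wedge a$ but require a careful case analysis. A secondary subtlety is to rule out spurious maximizers in $\mathcal{N}^{-}$ distinct from the clipped limit, which again reduces to the global monotonicity of $\phi$ established above.
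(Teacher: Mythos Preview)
Your proposal is correct and follows essentially the same route as the paper, which explicitly defers Theorem~\ref{thm4.2} to the argument for Cases~(i) and~(iii) of Theorem~\ref{thm2.1}; your reduction via Remark~\ref{rem2.1}, the parametrization $z_1(z_2)=(g')_-^{-1}(g'(z_2))$ along the curve $g'(z_1)=g'(z_2)$, and the derivative computation $\phi'(z_2)=g''(z_2)\bigl(g(z_2)-g(z_1(z_2))\bigr)/g'(z_2)^2$ are precisely the ingredients used in Appendix~\ref{sec5.4}. One minor correction: eliminating the extra boundary candidate $(0,\phi_0^{-1}(\beta))\in\mathcal{N}^-$ when $\beta<\phi(a_4)$ does not follow from monotonicity of $\phi$ alone---the paper (conclusion~(a) in Appendix~\ref{sec5.4}) uses the pointwise inequality $\phi_0<\phi$ on $(a_1\wedge a,\,a_4)$, which comes from $g'(s)>g'(z_2)$ for $s\in[0,z_1(z_2))$ and immediately gives $\phi_0^{-1}(\beta)>\phi^{-1}(\beta)$, hence the larger $g'(z_2)$ and the smaller $\zeta$.
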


\subsection{Explicit characterization of $\mathcal{M}_{\zeta}$ in the case $\mu_{+}>0$ and $\mu_{-}\leq 0$.}
\label{sec3.4}

We finally consider the case $\mu_{+}>0$ and $\mu_-\leq 0$, under which one can conclude $\theta_1^+>\theta^+_2$ and $\theta_1^-\leq \theta_2^-$ using \eqref{theta12}. 
By considering all scenarios of settings of the parameters, we distinguish the following mutually exclusive and collectively exhaustive Cases (i) and (ii).
\begin{itemize}
\item[]Case (i): one of the following conditions holds
\begin{itemize}
\item[$\bullet$] $c_+>0$ and $a\geq a_2$,
\item[$\bullet$] $c_+>0$ and $0<a_3\leq a\leq a_2$,
\item[$\bullet$] $c_+\leq0$, $a\geq a_3$ and $\Theta>0$.
\end{itemize}
\item[]Case (ii): one of the following conditions holds
\begin{itemize}
\item[$\bullet$] $c_+>0$ and $0<a<a_2\wedge a_3$,
\item[$\bullet$] $c_+\leq 0$ and $\Theta\leq0$, 
\item[$\bullet$] $c_+\leq 0$, $0<a< a_3$ and $\Theta>0$. 
\end{itemize}
\end{itemize}

The following Proposition \ref{prop.4.5} gives a complete characterization of the piece-wise concavity or convexity of the function $g(x)$ on $(0,\infty).$ The proof is deferred to Appendix \ref{sec5.7}.

\begin{prop}\label{prop.4.5}
Suppose that $\mu_{+}>0$ and $\mu_-\leq 0$. 
Under Case (i), $g(x)$ is convex on $(0,\infty)$.
Under Case (ii), $g(x)$ is convex on $(0,a)$, concave on $(a,x_0)$ and convex on $(x_0,\infty)$.
\end{prop}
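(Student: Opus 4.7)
The plan is to analyze $g''$ on the two pieces $(0,a)$ and $(a,\infty)$ separately, using the ODE \eqref{dis.generator.} on the left piece and the explicit exponential representation \eqref{10} on the right piece.

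On $(0,a)$, the ODE \eqref{dis.generator.} gives $g''(x)=\frac{2}{\sigma_-^2}(qg(x)-\mu_- g'(x))$. By Lemma~\ref{prop2.2.v2} we have $g(0)=0$ and $g'>0$, hence $g(x)>0$ on $(0,a)$; combined with the standing hypothesis $\mu_-\leq 0$, both summands on the right are non-negative, so $g''\geq 0$ throughout. Thus $g$ is convex on $(0,a)$ in \emph{both} Cases (i) and (ii), matching the claimed statement.

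On $(a,\infty)$, differentiating \eqref{10} twice yields
\begin{equation*}
g''(x)=A\,e^{\theta_2^+(x-a)}-B\,e^{-\theta_1^+(x-a)},\qquad x>a,
\end{equation*}
with $A:=(1-c_+)g^-(0)(\theta_2^+)^2$ and $B:=(g^+(0)-c_+ g^-(0))(\theta_1^+)^2$. Since $1-c_+>0$ by \eqref{sign.c+} and $g^-(0)>0$ (it is a sum of exponentials in which the positive weight $1-c_->0$ dominates), we have $A>0$, whereas $B$ may take either sign. The crucial observation is that when $B>0$, the function $g''$ is strictly increasing on $(a,\infty)$ with value $A-B$ at $a+$ and limit $+\infty$ at infinity, while when $B\leq 0$ we have $g''>0$ trivially. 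Hence the sign pattern of $g''$ is completely determined by the signs of $B$ and $A-B$; when $B>A>0$ the unique zero of $g''$ is obtained by solving $Ae^{\theta_2^+(x-a)}=Be^{-\theta_1^+(x-a)}$, which gives precisely the $x_0$ defined in \eqref{def.x0}.

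To translate this dichotomy into the parameter hypotheses of Cases (i)--(ii), I plan to substitute $g^+(0)=e^{-\theta_2^- a}$ and $g^-(0)=c_-e^{-\theta_2^- a}+(1-c_-)e^{\theta_1^- a}$ into $B$ and $A-B$. A short computation gives $A-B=g^-(0)\Theta-(\theta_1^+)^2 g^+(0)$, so $\Theta\leq 0$ forces $A<B$ immediately; while for $\Theta>0$, solving $A=B$ in $a$ reproduces the threshold $a_3$ of \eqref{def.a}. Likewise, for $c_+>0$, solving $B=0$ in $a$ reproduces $a_2$, once one uses the algebraic identity $1-c_+c_-=(\theta_2^++\theta_1^-)(\theta_2^-+\theta_1^+)/[(\theta_2^++\theta_1^+)(\theta_2^-+\theta_1^-)]$, which follows from \eqref{sign.c-}--\eqref{sign.c+} by direct expansion. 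Working through each of the subconditions confirms that Case (i) corresponds exactly to $A\geq B$ (so $g''\geq 0$ on $(a,\infty)$, giving convexity there) and Case (ii) to $B>A$ (so $g''<0$ on $(a,x_0)$ and $g''>0$ on $(x_0,\infty)$). The main obstacle is purely algebraic bookkeeping: one must track the several logical combinations of the signs of $c_+$, $\Theta$, $a-a_2$, and $a-a_3$ in the statement and match them to the clean dichotomy $A\gtrless B$; the only non-obvious identity needed is the factorisation of $1-c_+c_-$ quoted above.
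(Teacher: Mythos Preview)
Your proposal is correct and, on $(a,\infty)$, essentially identical to the paper's approach: both compute $g''$ explicitly as $A\,e^{\theta_2^+(x-a)}-B\,e^{-\theta_1^+(x-a)}$ and reduce the dichotomy Case~(i)/(ii) to the sign of $g''(a+)=A-B$ (the paper writes this as $-h_2(a)$), then match the parameter conditions to the thresholds $a_2,a_3$ exactly as you outline.

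The one genuine difference is on $(0,a)$. The paper (mirroring its proofs of Propositions~\ref{prop.4.1} and~\ref{prop.4.3}) substitutes the explicit exponential formula \eqref{g''.v4} and checks directly that $(\theta_2^-)^2e^{\theta_2^-x}-(\theta_1^-)^2e^{-\theta_1^-x}\geq 0$ using $\theta_1^-\leq\theta_2^-$. Your argument via the ODE \eqref{dis.generator.}, namely $g''=\tfrac{2}{\sigma_-^2}(qg-\mu_-g')\geq 0$ from $g>0$, $g'>0$, $\mu_-\leq 0$, is shorter and makes the role of the sign hypothesis on $\mu_-$ transparent without any computation. Both buy the same conclusion; yours is the cleaner route for this piece.
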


For later use, we introduce the following functions. Under Case (ii), 
$[x_0,a_7]\ni x\mapsto (\hat{g}^{\prime})_-^{-1}(x):=\inf\{y\in[0,a]; g^{\prime}(y)\geq g^{\prime}(x)\}$ and define the inverse function of $\left.g^{\prime}\right|_{[a,x_0]}$ as $(\hat{g}^{\prime})_+^{-1}:[g^{\prime}(x_0),g^{\prime}(a_7)]\rightarrow [a,x_0]$ with $a_7:=\sup\{x>x_0:g^{\prime}(x)\leq g^{\prime}(a)\}$. 
Recall that $\phi_0$ is given by \eqref{def.phi0}.
Furthermore, put
\begin{align}
x_3&:=\inf\bigg\{x\in [x_0,a_7]:\int_{0}^{(\hat{g}')^{-1}_+(g'(x))}\left(1-\frac{g'(s)}{g'(x)}\right)\ds\geq0\bigg\},\\
x_4&:=\inf\bigg\{x\in[x_0,a_7]:\int_{(\hat{g}')^{-1}_-(g'(x))}^{x}\left(1-\frac{g'(s)}{g'(x)}\right)\ds\geq0\bigg\},\\
\omega_3(x)&:=
\int_{0}^{x}\left(1-\frac{g'(s)}{g'(x)}\right)\ds, \quad x\in[0,(\hat{g}')^{-1}_-(g'(x_4))]\cup [x_4,\infty),
\\ 
\omega_4(x)&:=
\begin{cases}
\int_{(\hat{g}')^{-1}_+(g'(x))}^{x}\left(1-\frac{g'(s)}{g'(x)}\right)\ds, & x\in[x_0,x_3) \\
\int_{0}^{x}\left(1-\frac{g'(s)}{g'(x)}\right)\ds,& x\in[x_3,\infty).
\end{cases}
\end{align}

The following Theorem \ref{thm4.3} explicitly characterizes the set $\mathcal{M}_{\zeta}$
in the case $\mu_{+}>0$ and $\mu_-\leq 0$.
We omit the proof due to its similarity to that of Theorem \ref{thm2.1}. 

\begin{thm}
\label{thm4.3}
Suppose that $\mu_+>0$ and $\mu_-\leq0$. 
\begin{itemize}
\item Under Case (i), we have $\mathcal{M}_{\zeta}=\{(0,\phi_0^{-1}(\beta))\}$.
\item Under Case (ii), we have
\begin{equation*}
\mathcal{M}_{\zeta}=
\begin{cases}
\{(\hat{w}_1,\hat{w}_2)\},
&\text{if }\beta\in B_1\cup B_3,
\\
\{(\tilde{w}_1,\tilde{w}_2)\},
&\text{if }\beta\in B_2\cap \overline{B}_3,
\\
\{(\hat{w}_1,\hat{w}_2)\}\cup\{(\tilde{w}_1,\tilde{w}_2)\},
&\text{if }\beta\in\overline{B_1\cup B_2\cup B_3},
\end{cases}
\end{equation*}
where, $(\tilde{w}_1,\tilde{w}_2):=((\hat{g}')^{-1}_+(g'(\omega^{-1}_4(\beta))),\omega^{-1}_4(\beta))$,\,$(\hat{w}_1,\hat{w}_2):=(0,\omega_3^{-1}(\beta))$, $B_1:=\{\beta>0:g'(\omega^{-1}_3(\beta))<g'(\omega^{-1}_4(\beta))\}$, $B_2:=\{\beta>0:g'(\omega^{-1}_3(\beta))>g'(\omega^{-1}_4(\beta))\}$, $B_3:=(\omega_4(x_3),\infty)$,
\,with $\omega^{-1}_3$ and $\omega^{-1}_4$ being the inverse functions of $\omega_3$ and $\omega_4$, respectively. It holds that $B_2\cap B_3=\emptyset$. We mention that, when $\beta=\omega_3(x_4)$, one has $\omega^{-1}_3(\beta)=\{(\hat{g}')^{-1}_-(g'(x_4)),x_4\}$; in this case, $\{(\hat{w}_1,\hat{w}_2)\}$ is understood as $\{(0,(\hat{g}')^{-1}_-(g'(\omega^{-1}_3(\beta))),(0,x_4)\}$.
\end{itemize}
\end{thm}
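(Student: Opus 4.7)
The plan is to mimic the strategy used in the proof of Theorem \ref{thm2.1}: combine the first- and second-order optimality conditions on $\zeta(z_1,z_2)$ with the convexity/concavity structure of $g$ supplied by Proposition \ref{prop.4.5}, reduce the search for $\mathcal{M}_{\zeta}$ to at most two explicit families of candidates, and then rank them using the identity $\zeta(z_1,z_2)=1/g^{\prime}(z_2)$ that holds at any critical point (an immediate consequence of substituting \eqref{z2.int.} into the definition of $\zeta$). The well-definedness of the inverse functions $\phi_0^{-1}$, $\omega_3^{-1}$, $\omega_4^{-1}$ will be checked along the way from the monotonicity of $g^{\prime}$ on each region prescribed by Proposition \ref{prop.4.5}.

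For Case (i), Proposition \ref{prop.4.5} tells us $g$ is convex on $(0,\infty)$, so $g^{\prime}$ is strictly increasing. Any interior maximizer with $z_1>0$ would, by Remark \ref{rem2.1}, require $g^{\prime}(z_1)=g^{\prime}(z_2)$ for $z_1<z_2$, which contradicts strict monotonicity; hence every $(z_1,z_2)\in\mathcal{M}_{\zeta}$ has $z_1=0$. The single remaining first-order condition $\partial_{z_2}\zeta(0,z_2)=0$ reduces via \eqref{psi=beta} to $\phi_0(z_2)=\beta$, and I would verify that $\phi_0$ is strictly increasing with $\phi_0(0)=0$ and $\phi_0(\infty)=\infty$, so that $\phi_0^{-1}$ is uniquely defined on $(0,\infty)$ and delivers $\mathcal{M}_{\zeta}=\{(0,\phi_0^{-1}(\beta))\}$.

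For Case (ii) the shape of $g$ alternates: convex on $(0,a)$, concave on $(a,x_0)$, and convex on $(x_0,\infty)$. The first-order conditions still read $g^{\prime}(z_1)=g^{\prime}(z_2)$ and $\psi(z_1,z_2)=\beta$, but the necessary second-order conditions at a maximum, $g^{\prime\prime}(z_1)\leq 0\leq g^{\prime\prime}(z_2)$ (obtained by differentiating $\zeta$ twice at a critical point and using $z_2-z_1-\beta>0$ from Proposition \ref{P.0.4}), together with Proposition \ref{prop.4.5} and $z_1<z_2$, confine any interior maximizer to $z_1\in[a,x_0]$ and $z_2\in[x_0,a_7]$. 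Inverting $g^{\prime}\vert_{[a,x_0]}$ via $(\hat{g}^{\prime})_{+}^{-1}$ then yields the interior candidate $(\tilde{w}_1,\tilde{w}_2)$, while the boundary $z_1=0$ produces $(\hat{w}_1,\hat{w}_2)=(0,\omega_3^{-1}(\beta))$ exactly as in Case (i). Monotonicity of $\omega_3$ and $\omega_4$ on the pieces of their piecewise definitions, together with the matching conditions at the split points $x_4$ and $x_3$, would be checked so that $\omega_3^{-1}$ and $\omega_4^{-1}$ are well-defined (up to the isolated two-valued instance at $\beta=\omega_3(x_4)$ that is already flagged in the statement).

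The final step is the ranking. Because $\zeta=1/g^{\prime}(z_2)$ at any critical point, the comparison reduces to that of $g^{\prime}(\omega_3^{-1}(\beta))$ with $g^{\prime}(\omega_4^{-1}(\beta))$, the smaller value winning; this produces the dichotomy $B_1/B_2$. The role of $x_3$ is to track the regime in which the interior branch degenerates: for $\beta>\omega_4(x_3)$ the candidate $(\tilde{w}_1,\tilde{w}_2)$ is no longer a genuine interior critical point, which is why $B_3$ is absorbed into the ``boundary wins'' case. The hardest part of this program will be this last piece of bookkeeping: verifying continuity and monotonicity of $\omega_4$ across the transition at $x_3$, ruling out the spurious pairings with $z_1\in(0,a)$ and $z_2\in(x_0,\infty)$ (which can in principle solve $g^{\prime}(z_1)=g^{\prime}(z_2)$ but must fail the second-order test because $g^{\prime\prime}(z_1)>0$ there), and confirming that the four subsets $B_1$, $B_2\cap\overline{B}_3$, $B_3$, and the degenerate equality regime together partition $(0,\infty)$ with the claimed set-valued maximizer on each piece.
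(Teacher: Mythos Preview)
Your proposal is correct and follows essentially the same route as the paper, which explicitly omits the proof of Theorem \ref{thm4.3} ``due to its similarity to that of Theorem \ref{thm2.1}'' and therefore implicitly adopts exactly the program you describe: reduce via Remark \ref{rem2.1} to the candidate set $\mathcal{N}$, use the convex--concave--convex structure of $g$ from Proposition \ref{prop.4.5} to confine the interior candidates to $z_1\in[a,x_0]$, $z_2\in[x_0,a_7]$ and the boundary candidates to $z_1=0$, and then rank via $\zeta=1/g'(z_2)$. The only minor stylistic difference is that you invoke the second-order necessary conditions $g''(z_1)\le 0\le g''(z_2)$ directly, whereas the paper's template (the Case (iv) argument in Appendix \ref{sec5.4}) rules out the spurious pairings by explicitly constructing a better point in $\mathcal{N}$; both arguments are equivalent here and yield the same reduction.
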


\subsection{General properties of $\mathcal{M}_{\zeta}$.}
\label{sec3.5}
We showed the characterization of the explicit form of the set $\mathcal{M}_{\zeta}$ in the previous section. 
The properties of the elements of $\mathcal{M}_{\zeta}$ are stated in the following Propositions \ref{prop.2.4.add}, \ref{prop.2.5.add} and \ref{prop.3.4}, whose proofs are lengthy and nontrivial and hence are deferred to Appendixes \ref{sec5.8}, \ref{sec5.9} and \ref{sec5.10}.

\begin{prop}
\label{prop.2.4.add}
We have $\lim_{\beta\rightarrow0+}\max_{(z_1,z_2)\in\mathcal{M}_{\zeta}}(z_2-z_1)=0.$
\end{prop}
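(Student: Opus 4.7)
I will argue by contradiction: if the conclusion fails then there exist $\delta>0$, a sequence $\beta_n\downarrow 0$, and maximizers $(z_1^n,z_2^n)\in \mathcal{M}_\zeta$ (relative to $\beta=\beta_n$) satisfying $z_2^n-z_1^n\geq \delta$ for every $n$. The first step is to confine $\{(z_1^n,z_2^n)\}$ to a bounded region independent of $n$. For this I use the description $\mathcal{M}_\zeta\subseteq\mathcal{N}^+\cup\mathcal{N}^-$ from Remark \ref{rem2.1}: on $\mathcal{N}^+$ one has $g'(z_1)=g'(z_2)$ with $z_1<z_2$, which pins both coordinates inside a fixed compact region because $g'$ is eventually strictly increasing to $+\infty$ (visible from the dominant $e^{\theta_2^+(x-a)}$ term in \eqref{10}); on $\mathcal{N}^-$ one has $\psi(0,z_2)=\beta_n\downarrow 0$ while $\psi(0,z_2)\to\infty$ as $z_2\to\infty$, again forcing $z_2$ bounded. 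After extracting a subsequence I may therefore assume $(z_1^n,z_2^n)\to (z_1^*,z_2^*)$ with $z_2^*-z_1^*\geq \delta$.

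\textbf{Key estimate.} The crux is to prove
\begin{equation*}
\lim_{\beta\downarrow 0}\sup_{\mathcal{D}_\zeta}\zeta(z_1,z_2)\;=\;\frac{1}{m},\qquad m:=\min_{s\geq 0}g'(s).
\end{equation*}
Because $g'>0$ by Lemma \ref{prop2.2.v2} and $g'(x)\to\infty$ as $x\to\infty$, the minimum $m$ is a positive real number attained at some $x^*\in[0,\infty)$. The upper bound $\sup_{\mathcal{D}_\zeta}\zeta\leq 1/m$ is immediate from $g(z_2)-g(z_1)\geq m(z_2-z_1)$. For the matching lower bound I will test with the pair
\begin{equation*}
(z_1,z_2)\,:=\,\bigl(\max\{0,\,x^*-\sqrt{\beta}\},\ x^*+\sqrt{\beta}\bigr),
\end{equation*}
which lies in $\mathcal{D}_\zeta$ for all small $\beta$; a Taylor expansion of $g$ at $x^*$ (the $C^1$ regularity across $a$ is enough even when $x^*=a$) gives $\zeta(z_1,z_2)\to 1/m$. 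Since $\zeta(z_1^n,z_2^n)=\sup_{\mathcal{D}_\zeta}\zeta$, passing to the limit along the convergent subsequence yields
\begin{equation*}
g(z_2^*)-g(z_1^*)\;=\;m(z_2^*-z_1^*).
\end{equation*}

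\textbf{Contradiction.} Combining this equality with the pointwise bound $g'\geq m$ on $[z_1^*,z_2^*]$ forces $g'\equiv m$ on $[z_1^*,z_2^*]$. However, by \eqref{10}, on each of the open intervals $(0,a)$ and $(a,\infty)$, $g$ is a non-trivial linear combination of two exponentials with distinct rates, and the linear independence of the second derivatives of these exponentials prevents $g$ from being affine on any non-degenerate sub-interval of either piece. Hence $g'$ cannot be constant on any non-degenerate sub-interval of $[0,\infty)\setminus\{a\}$; splitting $[z_1^*,z_2^*]$ at $a$ if needed produces the required contradiction with $z_2^*-z_1^*\geq\delta>0$.

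The step I expect to require the most care is the matching lower bound $\sup_{\mathcal{D}_\zeta}\zeta\geq 1/m-o(1)$: the scale $\sqrt{\beta}$ in the test pair is chosen precisely so that the admissibility constraint $z_1+\beta\leq z_2$ remains valid while the first-order Taylor contribution to $g(z_2)-g(z_1)$ cancels against the numerator $z_2-z_1-\beta$ to leave the limiting ratio $1/m$.
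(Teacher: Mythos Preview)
Your proof is correct and follows a genuinely different route from the paper. The paper's argument in Appendix~\ref{sec5.8} relies on the explicit, case-by-case description of $\mathcal{M}_{\zeta}$ obtained in Theorems~\ref{thm2.1}--\ref{thm4.3} (and the underlying convexity analysis of Propositions~\ref{P.0.3}--\ref{prop.4.5}). It writes the maximizer in each case through the auxiliary functions $\omega_{1},\omega_{2}$ and their inverses, and then checks directly that $\lim_{\beta\to 0+}\tilde z_{2}=\lim_{\beta\to 0+}\tilde z_{1}=a_{1}$ and $\lim_{\beta\to 0+}\bar z_{2}=\lim_{\beta\to 0+}\bar z_{1}=x_{0}$ (with analogous statements in the simpler cases).

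By contrast, you bypass the entire classification. Your compactness argument and the identification of the limiting maximal value $\sup_{\mathcal{D}_{\zeta}}\zeta\to 1/m$ with $m=\min_{s\geq 0}g'(s)$ reduce the question to the impossibility of $g'$ being constant on a non-degenerate interval, which follows from the exponential form of $g$ in \eqref{10}. This is more economical and would survive unchanged under mild perturbations of the model (any $g$ that is $C^{1}$, has $g'>0$, $g'(x)\to\infty$, and is real-analytic and non-affine on each piece would do). The paper's approach, on the other hand, yields more: it identifies the actual limit points of $(z_{1},z_{2})$ as $\beta\downarrow 0$ (namely $a_{1}$ or $x_{0}$, depending on the case), information that your soft argument does not recover. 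Two small points worth making explicit when you write this up: in the boundedness step for $\mathcal{N}^{-}$, use $\psi(0,z_{2})=z_{2}-g(z_{2})/g'(z_{2})\to\infty$; and in the non-affinity step on $(a,\infty)$, note that the coefficient $(1-c_{+})g^{-}(0)$ of $e^{\theta_{2}^{+}(x-a)}$ is strictly positive by \eqref{sign.c+} and \eqref{15.v2}, which alone prevents $g''\equiv 0$ on any subinterval even if the other coefficient happens to vanish.
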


\begin{prop}
\label{prop.2.5.add}
Both $\max_{(z_1,z_2)\in\mathcal{M}_{\zeta}}(z_2-z_1)$ and $\min_{(z_1,z_2)\in\mathcal{M}_{\zeta}}(z_2-z_1)$ are increasing in $\beta$.
\end{prop}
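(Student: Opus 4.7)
The plan is to combine a global monotonicity statement for $\zeta^*(\beta):=\sup_{\mathcal{D}_\zeta}\zeta$ with a branchwise implicit-function analysis of $\mathcal{M}_\zeta$, and then handle the finitely many branch-transitions separately.

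First, for any fixed $(z_1,z_2)\in\mathcal{D}_\zeta$ the map $\beta\mapsto\zeta(z_1,z_2)$ is strictly affine decreasing, so $\zeta^*(\beta)$ is strictly decreasing on $(0,\infty)$. Remark~\ref{rem2.1} yields $\zeta^*(\beta)=1/g'(z_2^*)$ at every $(z_1^*,z_2^*)\in\mathcal{M}_\zeta$, hence $g'(z_2^*)$ is strictly increasing in $\beta$, regardless of which maximizer we choose. Next, I would examine each smooth branch arising in Theorems \ref{thm2.1}, \ref{thm4.1}, \ref{thm4.2} and \ref{thm4.3}. On an interior branch (one with $z_1>0$), $z_2$ lies in a convex region and $z_1$ in a concave region of $g$, with $g'(z_1)=g'(z_2)$; implicit differentiation of $\psi(z_1,z_2)=\beta$ and $g'(z_1)=g'(z_2)$ in $\beta$, using $\psi_{z_1}=g'(z_1)/g'(z_2)-1=0$ and $\psi_{z_2}=g''(z_2)(g(z_2)-g(z_1))/(g'(z_2))^2>0$, gives
\[
\frac{dz_2}{d\beta}=\frac{(g'(z_2))^2}{g''(z_2)(g(z_2)-g(z_1))}>0,\qquad \frac{dz_1}{d\beta}=\frac{g''(z_2)}{g''(z_1)}\,\frac{dz_2}{d\beta}<0,
\]
so $u:=z_2-z_1$ is strictly increasing in $\beta$ along each interior branch. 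Corner branches of the form $z_1=0$ are handled identically through $\psi(0,z_2)=\beta$ and $g''(z_2)>0$ at the relevant $z_2$.

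At a threshold $\beta^*$ where two branches $A$ and $B$ simultaneously belong to $\mathcal{M}_\zeta(\beta^*)$ (as in Theorem~\ref{thm2.1} Case (iv) and Theorem~\ref{thm4.3} Case (ii)), the envelope identity $d\zeta_\alpha/d\beta=-1/D_\alpha$ at the branch optimum (with $D_\alpha:=g(z_2^\alpha)-g(z_1^\alpha)$), together with the fact that the branch active on each side of $\beta^*$ is the one with larger $\zeta$, forces $D_A\le D_B$ whenever $B$ is active for $\beta>\beta^*$; the first-order identity $u_\alpha-\beta^*=D_\alpha\,\zeta^*(\beta^*)$ applied on both branches with the common value $\zeta^*(\beta^*)>0$ then delivers $u_A\le u_B$. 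Thus the width can only jump upward (or stay flat) across every transition, and combined with the branchwise strict monotonicity this shows that both $\max_{\mathcal{M}_\zeta}(z_2-z_1)$ and $\min_{\mathcal{M}_\zeta}(z_2-z_1)$ are non-decreasing in $\beta$ on all of $(0,\infty)$. The three-element instances flagged in Theorems \ref{thm2.1} and \ref{thm4.3} are dispatched by applying the transition argument to each pair of branches.

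\noindent\textbf{Anticipated main obstacle.} The subtlest part will be the transition analysis: ensuring that the branches produced by Theorems \ref{thm2.1}--\ref{thm4.3} genuinely exhaust $\mathcal{M}_\zeta(\beta)$ for every $\beta>0$, and handling the non-smooth points where $g''$ vanishes (the inflection points separating concave and convex regions of $g$) or where the optimizer passes through the point $x=a$ at which $g$ is merely $C^1$. This requires a case analysis paralleling the four cases of Section~\ref{sec:negative}.
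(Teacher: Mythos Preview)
Your proposal is correct and takes a genuinely different route from the paper's proof.

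\medskip
\textbf{What the paper does.} The paper argues case by case in the parameter regimes of Section~\ref{sec:negative}. In the non-trivial Case~(iv), it introduces auxiliary reparametrisations $\varphi_i(u)=\omega_i((g')_j^{-1}(u))$ of the two competing branches, decomposes the $\beta$-axis into open sets $\Lambda^{\pm}$ (where one branch strictly dominates) and their closed complement $\Lambda^0$, and at each transition point $t_j^{\pm}$ compares $\varphi_1'$ with $\varphi_2'$ directly. This yields an inequality between $\int g'$ over the two $(z_1,z_2)$-intervals (their (\ref{eq:prop2.5:d})), which combined with the common constraint $\psi(z_1,z_2)=\beta$ produces the width inequality (\ref{eq:prop2.5:e}).

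\medskip
\textbf{What you do.} You replace this computation by the envelope identity $d\zeta_\alpha/d\beta=-1/D_\alpha$ along each branch in $\mathcal{N}$, and observe that the outgoing branch at a crossing must have the (weakly) smaller $D_\alpha=g(z_2^\alpha)-g(z_1^\alpha)$ by a one-sided derivative comparison of $\zeta_A$ and $\zeta_B$; the identity $u_\alpha-\beta=D_\alpha\,\zeta^*(\beta)$ then converts this into the width ordering without touching $\varphi_i$. Your branchwise implicit differentiation and the corner case $z_1=0$ match the paper's facts, so the only real work left is exactly what you flagged: checking that at every transition at least one of the two competing branches is defined on the opposite side of $\beta^*$ (otherwise the derivative comparison has no bite). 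This is indeed satisfied in all cases, since Theorems~\ref{thm2.1} and~\ref{thm4.3} show that on each ``complicated'' $\beta$-interval both inverse maps $\omega_1^{-1},\omega_2^{-1}$ (resp.\ $\omega_3^{-1},\omega_4^{-1}$) are well defined, and at the endpoints of those intervals the surviving branch continues smoothly into the adjacent simple regime.

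\medskip
\textbf{Trade-offs.} Your argument is more conceptual and much shorter: one envelope identity handles every transition, including the three-element instances, uniformly. The paper's approach is heavier but fully explicit; it does not rely on any ``branch-extension'' side condition and gives, as a by-product, exact formulae for $z_2-z_1$ on each $\beta$-subinterval. Either route proves the proposition.
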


\begin{prop}
\label{prop.3.4}
Let $(z_1,z_2)\in\mathcal{M}_{\zeta}$. For any fixed $\beta\in(0,a)$, there exists a sufficiently large constant $K>0$ such that
$$\beta\leq z_1+\beta< z_2\leq a,\quad \text{if }\,\mu_->K.$$
\end{prop}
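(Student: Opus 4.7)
The plan is to argue by contradiction using the characterisation $\mathcal{M}_\zeta\subseteq \mathcal{N}=\mathcal{N}^+\cup\mathcal{N}^-$ recorded in Remark \ref{rem2.1}: every maximiser $(z_1,z_2)$ satisfies $\psi(z_1,z_2)=\beta$ together with either $z_1=0$ or $g'(z_1)=g'(z_2)$. Assuming toward contradiction that $z_2>a$ for some such candidate, I would show that, uniformly in the choice of $z_1\in[0,z_2-\beta]$, these constraints cannot be simultaneously satisfied once $\mu_-$ is large enough (depending only on $\beta$).

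The first step is to derive asymptotics for the ingredients of $g$ as $\mu_-\to\infty$, with $\mu_+$ and $\sigma_\pm$ held fixed. From \eqref{theta12} one reads off $\theta_1^-\to\infty$ and $\theta_2^-\to 0$, hence $c_-\to 1$ with $1-c_-\sim\theta_1^+/\theta_1^-$, $c_+\to \theta_2^+/(\theta_2^++\theta_1^+)$, and $\Theta\to \theta_1^+\theta_2^+$. Combined with $g(0)=0$ the representation \eqref{10} simplifies on $[0,a]$ to
\[
g(x)=(1-c_-)\,e^{(\theta_1^--\theta_2^-)a}\bigl(e^{\theta_2^-x}-e^{-\theta_1^-x}\bigr),
\]
while on $(a,\infty)$ the factor $g^-(0)\to\infty$ dominates $g^+(0)=e^{-\theta_2^-a}$, giving $g(x)\approx g^-(0)\,g^+(x)$. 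Differentiating, $g'$ exhibits a boundary layer of thickness $O(1/\theta_1^-)$ near $0$, is nearly constant on the bulk of $[a_1,a]$ at the scale $g'(a)\approx g^-(0)\,\theta_2^-$, and grows at the rate $\Theta/\theta_2^-\to\infty$ just above $a$.

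The core of the argument is to exploit these estimates in the defining equation
\[
\psi(z_1,z_2)=(z_2-z_1)-\bigl(g(z_2)-g(z_1)\bigr)/g'(z_2)=\beta,
\]
split into the two branches $z_1=0$ and $z_1>0$ with $g'(z_1)=g'(z_2)$. On the first branch, substituting the explicit expressions for $g$ and $g'$ just above $a$ into $z_2-g(z_2)/g'(z_2)=\beta$ constrains $z_2$ to a value pushed below $a$ as $\mu_-$ grows. On the second branch, $g'(z_1)=g'(z_2)$ forces $z_1$ into a narrow region associated with the minimiser of $g'$ on $[0,a]$, whose location $a_1$ tends to $0$ like $\log\mu_-/\mu_-$; substituting this into $\psi(z_1,z_2)=\beta$ yields a constraint incompatible with $z_2>a$ once $\mu_-$ exceeds a threshold depending on $\beta$. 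The four case distinctions by the sign of $c_+$ and the position of $a$ relative to $a_1,a_2,a_3$ parallel Subsections \ref{sec3.1}--\ref{sec3.4} and the convexity structure in Propositions \ref{P.0.3}, \ref{prop.4.1}, \ref{prop.4.3}, \ref{prop.4.5}, but each configuration reduces to the same comparison scheme.

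The main obstacle is the disparity of scales at $x=a$: although $g'$ is continuous there by construction, its growth rate jumps from $O(1)$ below $a$ to $\Theta/\theta_2^-\to\infty$ above $a$, so $\psi(z_1,z_2)$ varies sharply as $z_2$ crosses $a$, and the required uniform-in-$z_1$ estimate demands a careful two-term expansion. A related subtlety is that the minimiser $a_1=O(\log\mu_-/\mu_-)$ itself sits inside the boundary layer of $g'$ of width $O(1/\theta_1^-)$, so local Taylor expansions of $g'$ around $a_1$ break down; I would therefore work with the exact exponential expressions for $g$ and $g'$ and match them directly at $x=a$ rather than resort to polynomial approximations.
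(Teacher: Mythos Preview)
Your plan relies on showing that the first-order constraints defining $\mathcal{N}$---namely $\psi(z_1,z_2)=\beta$ together with $z_1=0$ or $g'(z_1)=g'(z_2)$---admit no solution with $z_2>a$ once $\mu_-$ is large. This is too strong, and in fact fails on the branch $z_1=0$. Using the explicit form $g(x)=(1-c_-)e^{(\theta_1^--\theta_2^-)a}\bigl(e^{\theta_2^-x}-e^{-\theta_1^-x}\bigr)$ on $[0,a]$ one computes
\[
\psi(0,a)=a-\frac{g(a)}{g'(a)}=a-\frac{e^{\theta_2^-a}-e^{-\theta_1^-a}}{\theta_2^-e^{\theta_2^-a}+\theta_1^-e^{-\theta_1^-a}}\longrightarrow -\infty
\]
as $\mu_-\to\infty$, since the numerator tends to $1$ while both terms in the denominator tend to $0$. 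In the relevant Case~(iii) configuration (which, as you note, is the one that prevails for large $\mu_-$), $\psi(0,\cdot)$ is strictly increasing on $(a_1,\infty)$, so the equation $\psi(0,z_2)=\beta$ forces its solution \emph{above} $a$, not below. Hence $\mathcal{N}^{-}\cap\{z_2>a\}\neq\emptyset$ for large $\mu_-$, and your contradiction scheme cannot close on that branch. The inclusion $\mathcal{M}_\zeta\subseteq\mathcal{N}$ only says that maximisers are critical points; it does not let you discard a critical point with $z_2>a$ by arguing the constraints are infeasible there.

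The paper avoids this obstruction by not working inside $\mathcal{N}$ at all. It fixes an arbitrary competitor $(z_1,z_2)$ with $\beta\le z_1+\beta<z_2\le a$ and, assuming $(\tilde z_1,\tilde z_2)\in\mathcal{M}_\zeta$ with $\tilde z_2>a$, compares the objective values directly: using $\zeta(\tilde z_1,\tilde z_2)=1/g'(\tilde z_2)$ from Remark~\ref{rem2.1}, one has
\[
\frac{\zeta(z_1,z_2)}{\zeta(\tilde z_1,\tilde z_2)}=\frac{g'(\tilde z_2)\,(z_2-z_1-\beta)}{g(z_2)-g(z_1)},
\]
and the explicit exponential expressions for $g'$ above $a$ and for $g$ on $[0,a]$ show that this ratio tends to $+\infty$, contradicting maximality. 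The asymptotics you identified ($\theta_1^-\to\infty$, $\theta_2^-\to0$, $c_+\to\theta_2^+/(\theta_1^++\theta_2^+)$, $a_1,a_2\to0$) are exactly the ingredients used, but they feed a comparison of $\zeta$ values rather than a feasibility argument for $\mathcal{N}$.
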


\begin{rem}
We treat $(z_{1},z_{2})$-strategy (with $(z_1,z_2)\in\mathcal{M}_{\zeta}$) as a candidate optimal impulsive dividend strategy (whose optimality will be demonstrated in the upcoming Theorem \ref{thm3.3.vr.}) of the control problem \eqref{6}. As $\beta$ increases, the issuing of a new lump-sum of dividends becomes more costly. Hence, when $\beta$ is larger, it would be sensible to adjust the dividend barriers so that the size of each lump-sum of dividends becomes larger. While, in the extreme case of $\beta=0$ (paying dividends incurs no costs), it seems reasonable to pay dividends as much and as frequently as possible, implying that $z_{2}=z_{1}$ in the limiting sense. These intuitive perceptions are confirmed in the above Propositions \ref{prop.2.4.add}-\ref{prop.2.5.add}.
\end{rem}

\begin{rem}
By Proposition \ref{prop.3.4}, if the expected rate of return $\mu_{-}$ (i.e., the drift coefficient when the wealth level is below $a$) is sufficiently large, then
the manager prefers the wealth process to stay below $a$ (rather than above $a$) to quickly accumulate wealth.
In turn, the upper barrier $z_2$ of the carefully calibrated optimal impulse dividend strategy $(z_1,z_2)\in\mathcal{M}_{\zeta}$ is lower than $a$.
\end{rem}

\section{Characterization of the optimal impulsive strategy} \label{sec:main}
\setcounter{section}{4}

The main result of this paper is contained in the following theorem, which, under certain sufficient conditions, characterizes an optimal impulsive strategy to the control problem \eqref{6}.

\begin{thm}\label{thm3.3.vr.}
{Let $(z_1,z_2)$ be the element of $\mathcal{M}_{\zeta}$.}
Then, the $(z_1,z_2)$-strategy is an optimal impulsive strategy to the control problem \eqref{6}, if one of the following conditions holds true:
\begin{itemize}
\item[(a)] $z_{2}>a$;
\item[(b)] $z_{2}\leq a$ and $\mu_+-q\left(a-z_2+ \frac{g(z_{2})}{g^{\prime}(z_{2})}\right)\leq 0$; 
\item[(c)] $g''(a+)\geq0.$ 
\end{itemize}
\end{thm}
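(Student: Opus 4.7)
The plan is to apply the verification lemma (Lemma \ref{lem2.1}) to the $(z_1,z_2)$-strategy, using the closed form of $V^{z_2}_{z_1}$ displayed in \eqref{Vx.4}. Since $g\in C^1(\bR)\cap C^2(\bR\setminus\{a\})$ and $g'>0$ by Lemma \ref{prop2.2.v2}, $V^{z_2}_{z_1}$ is automatically $C^1(\bR_{+})$ and piecewise $C^2$ with possible breaks only at $a$ and $z_2$. Combining $g(0)=0$ with $g'>0$ yields $g>0$ on $(0,\infty)$, hence $V^{z_2}_{z_1}\ge 0$. These two items are immediate, and the real work lies in the translation inequality $V^{z_2}_{z_1}(x)-V^{z_2}_{z_1}(y)\ge x-y-\beta$ and the HJB inequality $(\mathcal A-q)V^{z_2}_{z_1}\le 0$ off a finite set.

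For the translation inequality I would rewrite it as $\int_y^x(1-(V^{z_2}_{z_1})'(s))\,ds\le\beta$. Since $(V^{z_2}_{z_1})'(s)=g'(s)/g'(z_2)$ for $s<z_2$ and equals $1$ for $s\ge z_2$, only the portion with $s\le z_2$ contributes, reducing the task to bounding $\int_y^{x\wedge z_2}(1-g'(s)/g'(z_2))\,ds$ uniformly in $0\le y<x\wedge z_2\le z_2$ by $\beta$. Interior critical points of this two-parameter optimization satisfy $g'(y)=g'(z_2)=g'(x)$, and the distinguished pair $(y,x)=(z_1,z_2)$ attains exactly $\psi(z_1,z_2)=\beta$ by \eqref{psi=beta}. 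Verifying that this is the global maximum amounts to consulting the shape classifications of $g$ furnished by Propositions \ref{P.0.3}, \ref{prop.4.1}, \ref{prop.4.3}, and \ref{prop.4.5} (together with the explicit descriptions of $\mathcal M_\zeta$ in Theorems \ref{thm2.1}--\ref{thm4.3}) to rule out competing maximizers on each sub-case.

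For the HJB inequality, on $(0,z_2)\setminus\{a\}$ the identity $V^{z_2}_{z_1}=g/g'(z_2)$ together with $(\mathcal A-q)g=0$ makes $(\mathcal A-q)V^{z_2}_{z_1}$ equal to zero. On $(z_2,\infty)\setminus\{a\}$, $V^{z_2}_{z_1}$ is affine, so $(\mathcal A-q)V^{z_2}_{z_1}(x)=\alpha(x)-qV^{z_2}_{z_1}(x)$ with $\alpha(x):=\mu_+\mathbf 1_{\{x>a\}}+\mu_-\mathbf 1_{\{x\le a\}}$; this is strictly decreasing on each interval of constancy of $\alpha$ (slope $-q$), so it suffices to check the inequality at $x=z_2+$ and, when $z_2<a$, at $x=a+$. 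At $x=z_2+$, the ODE \eqref{dis.generator.} at $z_2$ reduces the inequality to $g''(z_2)\ge 0$; differentiating $\zeta$ twice in $z_2$ and using the first-order condition $g(z_2)-g(z_1)=(z_2-z_1-\beta)g'(z_2)$ yields
\[\partial_{z_2}^2\zeta\big|_{(z_1,z_2)}=-\frac{(z_2-z_1-\beta)\,g''(z_2)}{(g(z_2)-g(z_1))^2}\le 0,\]
so $g''(z_2)\ge 0$ follows from the strict bound $z_2-z_1-\beta>0$ supplied by Proposition \ref{P.0.4}.

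The check at $x=a+$ (only when $z_2\le a$) is that $g(z_2)/g'(z_2)+(a-z_2)\ge\mu_+/q$: it is vacuous under (a); it is the standing assumption under (b); and under (c), introducing $h(x):=g(x)/g'(x)$ and using $1-h'(x)=g(x)g''(x)/g'(x)^2$, one has the identity
\[h(z_2)+(a-z_2)-\mu_+/q=\int_{z_2}^a\bigl(1-h'(s)\bigr)\,ds+\bigl(h(a)-\mu_+/q\bigr),\]
in which the second bracket is nonnegative by condition (c) via the ODE at $a+$, and the integral is nonnegative provided $g''\ge 0$ on $[z_2,a]$, which is read off from the shape classifications restricted to those subcases where $z_2\le a$ and $g''(a+)\ge 0$ are simultaneously possible. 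The main obstacles are therefore (i) pinning down the global maximum in the translation inequality, and (ii) the convexity bookkeeping $g''\ge 0$ on $[z_2,a]$ under condition (c); both require case-by-case consultation of the classifications in Section \ref{sec:negative} rather than a single uniform argument.
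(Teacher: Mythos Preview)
Your overall architecture is sound and matches the paper's: verify Lemma \ref{lem2.1} for $V^{z_2}_{z_1}$ via \eqref{Vx.4}, split the HJB inequality into $(0,z_2)$ (where it vanishes) and $(z_2,\infty)$ (where it is affine minus a linear term), and reduce the latter to checks at $z_2{+}$ and, when $z_2\le a$, at $a{+}$. Your derivation of $g''(z_2{-})\ge 0$ from the second-order condition $\partial_{z_2}^2\zeta\le 0$ is in fact neater than the paper's, which obtains the same fact by quoting the explicit classifications of $\mathcal{M}_\zeta$ in Theorems \ref{thm2.1}--\ref{thm4.3}; your argument is intrinsic and avoids case analysis (with the harmless proviso that at $z_2=a$ one reads $g''$ as the left limit). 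Likewise, your treatment under condition (c) via $h=g/g'$ is essentially the paper's \eqref{30.add.new.x}.

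The one place where you make life much harder than necessary is the translation inequality. You propose to maximize $\int_y^{x\wedge z_2}\bigl(1-g'(s)/g'(z_2)\bigr)\,ds$ over $(y,x)$ and then appeal to the shape classifications to certify that $(z_1,z_2)$ is the global maximizer. The paper bypasses all of this with a single observation: since $(z_1,z_2)\in\mathcal{M}_\zeta$, one has $\zeta(y,x)\le\zeta(z_1,z_2)=1/g'(z_2)$ for every $(y,x)\in\mathcal{D}_\zeta$, i.e.
\[
\frac{x-y-\beta}{g(x)-g(y)}\le\frac{1}{g'(z_2)}\quad\text{whenever }x\ge y+\beta,
\]
and the inequality $V^{z_2}_{z_1}(x)-V^{z_2}_{z_1}(y)\ge x-y-\beta$ then follows by a short split on the positions of $x,y$ relative to $z_2$ (and is trivial when $x<y+\beta$). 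This is exactly the global-maximum information encoded in $\mathcal{M}_\zeta$, used directly; no shape analysis or critical-point comparison is needed. In fact your integral equals $(x\wedge z_2-y)-\bigl(g(x\wedge z_2)-g(y)\bigr)/g'(z_2)$, so bounding it by $\beta$ is literally the displayed inequality above once $x\wedge z_2-y>\beta$. Recognizing this collapses your ``main obstacle (i)'' to a two-line argument and removes any dependence on Propositions \ref{P.0.3}--\ref{prop.4.5} for this step.
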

\begin{proof}
Let $(z_1,z_2)\in\mathcal{M}_{\zeta}$.
We first prove that
\begin{align}
\label{addnew50.v3}
V^{z_2}_{z_1}(x)-V^{z_2}_{z_1}(y)\geq x-y-\beta, \quad x\geq y\geq0.
\end{align}
By \eqref{M} and \eqref{z2.int.}, we have
\begin{align}\label{x-y-beta}
\frac{x-y-\beta}{g(x)-g(y)}\leq \frac{z_2-z_1-\beta}{g(z_2)-g(z_1)}=\frac{1}{g^{\prime}(z_2)},\quad \beta \leq y+\beta\leq x<\infty.
\end{align} 
We distinguish the following mutually exclusive and collectively exhaustive cases.
\begin{itemize}

\item If $y+\beta>x\geq y\geq0$, it holds that
\begin{align}
\label{addnew54}
V^{z_2}_{z_1}(x)-V^{z_2}_{z_1}(y)\geq 0>x-y-\beta.
\end{align}

\item If $x\geq y\geq z_2$ and $x\geq y+\beta$, it holds that
$$V^{z_2}_{z_1}(x)-V^{z_2}_{z_1}(y)=x-y>x-y-\beta.$$ 

\item If $x\geq z_{2}\geq y\geq 0$ 
and $x\geq y+\beta$, by \eqref{x-y-beta} and \eqref{addnew54}, one can get
$$V^{z_2}_{z_1}(x)-V^{z_2}_{z_1}(y)=x-z_2+\frac{g(z_2)-g(y)}{g^{\prime}(z_2)}\geq x-y-\beta.$$

\item If $z_{2}\geq x\geq y+\beta\geq \beta$, by \eqref{x-y-beta}, one can get
$$V^{z_2}_{z_1}(x)-V^{z_2}_{z_1}(y)=\frac{g(x)-g(y)}{g^{\prime}(z_2)}\geq x-y-\beta.$$
\end{itemize}
Combining above yields \eqref{addnew50.v3}.

We next prove that
\begin{align}
\label{30.add.new.x}
\frac{g(a)}{g^{\prime}(a)}\leq \frac{g(z_{2})}{g^{\prime}(z_{2})}+a-z_{2},\quad \text{if}\quad z_{2}\leq a.
\end{align}
Actually, by \eqref{scale.fun.} and \eqref{10}, it can be verified that
\begin{align*}
1-\frac{g(x)g^{\prime\prime}(x)}{(g^{\prime}(x))^{2}}=\left[\frac{g(x)}{g^{\prime}(x)}\right]^{\prime}&=\frac{e^{-(\theta_{1}^{-}-\theta_{2}^{-})x}(\theta_1^-+\theta_2^-)^{2}}{\big(\theta_{1}^{-}e^{-\theta_{1}^{-}x}+\theta_{2}^{-}e^{\theta_{2}^{-}x}\big)^{2}}
>0, \quad x\in[0,a].
\end{align*}
Suppose $ z_{2}\leq a.$
Using Propositions \ref{P.0.3}-\ref{prop.4.5}, and Theorems \ref{thm2.1}-\ref{thm4.3}, one sees that
$g^{\prime\prime}(x)> 0$ for all $x\in[z_{2},a)$.
\begin{align*}
0<\left[\frac{g(x)}{g^{\prime}(x)}\right]^{\prime}<1\,\,\,\text{for all}\,\,\, x\in[z_{2},a),\quad \text{if}\quad z_{2}\leq a,
\end{align*}
which implies \eqref{30.add.new.x}.

By \eqref{Vx.4}, $V^{z_2}_{z_1}(x)\in C^1(\mathbb{R}_{+})\cap C^2(\mathbb{R}_{+}\backslash\{a,z_{2}\})$. We next verify $(\mathcal{A}-q)V^{z_2}_{z_1}(x)\leq0$ for $x\in[0,\infty)\backslash\{a,z_{2}\}$.
Using \eqref{10}, \eqref{Vx.4}, and the fact that $(\mathcal{A}-q)g^{\pm}(x)=0$ for all $x\neq a$, we have
\begin{align}
\label{x<z2}
(\mathcal{A}-q)V^{z_2}_{z_1}(x)
&=[g^{\prime}(z_2)]^{-1}\left[g^-(0)(\mathcal{A}-q)g^+(x)-g^+(0)(\mathcal{A}-q)g^-(x)\right]
\\
&=0,\quad x\in(0,z_{2})\backslash\{a\}.\nonumber
\end{align}
In addition, 
\begin{align*}
\lim_{x\rightarrow z_2+}(\mathcal{A}-q)V^{z_2}_{z_1}(x)
&=
\lim_{x\rightarrow z_2+}(\mu_+ \mathbf{1}_{\{x>a\}} +\mu_- \mathbf{1}_{\{x\leq a\}})-qV^{z_2}_{z_1}(z_2)
\\ \nonumber
&=\mu_+ \mathbf{1}_{\{z_{2}\geq a\}} +\mu_- \mathbf{1}_{\{z_{2}<a\}}-qV^{z_2}_{z_1}(z_2),
\\
\lim_{x\rightarrow z_2-}(\mathcal{A}-q)V^{z_2}_{z_1}(x)
&=\frac{1}{2}(\sigma^2_+ \mathbf{1}_{\{z_2>a\}} +\sigma^2_- \mathbf{1}_{\{z_2\leq a\}}){V^{z_2}_{z_1}}^{\prime\prime}(z_2-)
\nonumber\\
&\quad +(\mu_+ \mathbf{1}_{\{z_2>a\}} +\mu_- \mathbf{1}_{\{z_2\leq a\}})-qV^{z_2}_{z_1}(z_2).
\end{align*}
Combining above yields 
\begin{align*}
0=\lim_{x\rightarrow z_2-}(\mathcal{A}-q)V^{z_2}_{z_1}(x) 
&=\frac{1}{2}(\sigma^2_+ \mathbf{1}_{\{z_2>a\}} +\sigma^2_- \mathbf{1}_{\{z_2\leq a\}}){V^{z_2}_{z_1}}^{\prime\prime}(z_2-)\\
&\quad\;+\left(\mu_{-}-\mu_{+}\right)\mathbf{1}_{\{z_2= a\}}
+\lim_{x\rightarrow z_2+}(\mathcal{A}-q)V^{z_2}_{z_1}(x).
\end{align*}
Using this and the fact that ${V^{z_2}_{z_1}}^{\prime\prime}(z_2-)\geq 0$ (Actually, from the explicit characterizations of $\mathcal{M}_{\zeta}$ provided in Theorems \ref{thm2.1}-\ref{thm4.3}, one knows that $g^{\prime\prime}(z_2-)\geq 0$, which, by \eqref{Vx.4}, is equivalent to ${V^{z_2}_{z_1}}^{\prime\prime}(z_2-)\geq 0$), one has
\begin{align}
\label{mu-+}
\left(\mu_{-}-\mu_{+}\right)\mathbf{1}_{\{z_2= a\}}
+
\lim_{x\rightarrow z_2+}(\mathcal{A}-q)V^{z_2}_{z_1}(x)\leq 0.
\end{align}

We next prove $(\mathcal{A}-q)V^{z_2}_{z_1}(x)\leq 0$ on $(z_2,\infty)\backslash\{a\}$. 
\begin{itemize}
\item When Condition (a) holds true, it follows from \eqref{mu-+} that
\begin{align*}
(\mathcal{A}-q)V^{z_2}_{z_1}(x)&=\mu_+-qV^{z_2}_{z_1}(x)\leq
\mu_+-qV^{z_2}_{z_1}(z_2)\\
&=
\lim_{x\rightarrow z_2+}(\mathcal{A}-q)V^{z_2}_{z_1}(x)\leq -\left(\mu_{-}-\mu_{+}\right)\mathbf{1}_{\{z_2= a\}}=0,~~x>z_2.\nonumber
\end{align*}
\item When Condition (b) holds true, we have
\begin{align*}
(\mathcal{A}-q)V^{z_2}_{z_1}(x)
&=
\mu_--qV^{z_2}_{z_1}(x)
\leq
\mu_--qV^{z_2}_{z_1}(z_2)
\\
&=\left(\mu_--qV^{z_2}_{z_1}(z_2)\right)\mathbf{1}_{\{z_{2}<a\}}
+\left(\mu_+-qV^{z_2}_{z_1}(z_2)\right)\mathbf{1}_{\{z_{2}=a\}}\\
&\quad\;+\left(\mu_--\mu_+\right)\mathbf{1}_{\{z_{2}=a\}}
\nonumber\\
&=
\lim_{x\rightarrow z_2+}(\mathcal{A}-q)V^{z_2}_{z_1}(x)
+\left(\mu_--\mu_+\right)\mathbf{1}_{\{z_{2}=a\}}\leq 0, ~~~x\in (z_2, a], 
\end{align*}
and 
\begin{align*}
(\mathcal{A}-q)V^{z_2}_{z_1}(x)&=\mu_+-qV^{z_2}_{z_1}(x)
=\mu_+-q(x-z_{2})-qV^{z_2}_{z_1}(z_2)
\\
&\leq \mu_+-q(a-z_{2})-qg(z_2)/g^{\prime}(z_2)
\leq 0, \quad x>a. \nonumber
\end{align*}

\item When Condition (c) holds true, the claim follows if $z_{2}> a$ by Condition (a); otherwise $z_2\leq a$, and 
\begin{align*}
(\mathcal{A}-q)V^{z_2}_{z_1}(x)&=\mu_--qV^{z_2}_{z_1}(x)=\mu_--q\big((x-z_{2})+g(z_2)/g^{\prime}(z_2)\big)\\
&\leq \mu_--qg(z_2)/g^{\prime}(z_2)
=-\frac{1}{2}\sigma_{-}^{2}g^{\prime\prime}(z_2-)/g^{\prime}(z_2)
\leq 0, \quad x\in (z_2, a],\\
(\mathcal{A}-q)V^{z_2}_{z_1}(x)
&\leq \mu_+-q\big((a-z_{2})+g(z_2)/g^{\prime}(z_2)\big)
\\
&\leq \mu_+-qg(a)/g^{\prime}(a) =-\frac{1}{2}\sigma_{+}^{2}g^{\prime\prime}(a+)/g^{\prime}(a)
\leq 0, \quad x>a,\nonumber
\end{align*}
where the equality is due to \eqref{dis.generator.}, and the second inequality due to \eqref{30.add.new.x}.
\end{itemize}
Together with \eqref{x<z2}, we proved $(\mathcal{A}-q)V^{z_2}_{z_1}(x)\leq 0$ on $(0,\infty)\backslash\{a,z_{2}\}$ when one of Conditions (a), (b), or (c) holds true. Combining with \eqref{addnew50.v3} and Lemma \ref{lem2.1}, we prove Theorem \ref{thm3.3.vr.}.
\end{proof}

\begin{coro}
There exists a $(z_1,z_2)\in\mathcal{M}_{\zeta}$ such that the $(z_1,z_2)$-strategy is an optimal impulsive strategy to the control problem \eqref{6}, 
except for the following two minor cases:
\begin{itemize}
\item $\mu_{\pm}>0$, Case (iv), 
 $\beta<\omega_{1}(x_2)$, $g^{\prime}(\omega^{-1}_1(\beta))< g^{\prime}(\omega^{-1}_2(\beta))$;
\item $\mu_{+}>0$, $\mu_{-}<0$, Case (ii), $\beta<\omega_{3}(x_4)$, $g^{\prime}(\omega^{-1}_3(\beta))< g^{\prime}(\omega^{-1}_4(\beta))$.
\end{itemize}
In the last two cases, the $(z_1,z_2)$-strategy remains optimal if Condition (b) in Theorem \ref{thm3.3.vr.} is satisfied. 
\end{coro}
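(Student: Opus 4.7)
The plan is a case-by-case verification: for each parameter regime classified in Theorems \ref{thm2.1}, \ref{thm4.1}, \ref{thm4.2}, and \ref{thm4.3}, I would show that at least one element of $\mathcal{M}_{\zeta}$ satisfies one of the three sufficient conditions (a)--(c) of Theorem \ref{thm3.3.vr.}. Condition (a) is the easiest: it applies whenever the returned upper barrier $z_2$ lies strictly above the switching threshold $a$. Condition (c) is a pointwise statement on $g''(a+)$ that can be read off directly from the piece-wise convexity analyses in Propositions \ref{P.0.3}, \ref{prop.4.1}, \ref{prop.4.3}, and \ref{prop.4.5}.

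In most regimes, one of (a) or (c) applies automatically. When $\mu_{\pm}\leq 0$ (Theorem \ref{thm4.1}), $g$ is globally convex, so $g''(a+)\geq 0$ and (c) holds. When $\mu_{+}\leq 0<\mu_{-}$ (Theorem \ref{thm4.2}), $g$ is convex on the half-line $(a_1\wedge a,\infty)$ containing $a$, so (c) again holds. In Case (i) of $\mu_{+}>0\geq\mu_{-}$ (Theorem \ref{thm4.3}) $g$ is once more globally convex. In each sub-Case (i)--(iii) of $\mu_{\pm}>0$ (Theorem \ref{thm2.1}), Proposition \ref{P.0.3} yields convexity of $g$ to the right of $a$, of $x_0$, or of $a_1$; combining with the explicit formula for $z_2$ one checks that either $z_2>a$ (triggering (a)) or $g''(a+)\geq 0$ (triggering (c)).

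The delicate regimes are Case (iv) of $\mu_{\pm}>0$ and Case (ii) of $\mu_{+}>0\geq\mu_{-}$, in which $g$ is concave on the window $(a,x_0)$, so $g''(a+)\leq 0$ rules out (c) and a candidate upper barrier $z_2$ may lie in a concave region adjacent to $a$. Here $\mathcal{M}_{\zeta}$ is catalogued in Theorems \ref{thm2.1} and \ref{thm4.3} as containing up to two candidates. The \emph{safe} candidate, $(\bar{z}_1,\bar{z}_2)$ (resp.\ $(\tilde{w}_1,\tilde{w}_2)$), always has upper barrier in $[x_0,\infty)\subset(a,\infty)$, so Condition (a) covers it. The \emph{risky} candidate, $(\tilde{z}_1,\tilde{z}_2)$ (resp.\ $(\hat{w}_1,\hat{w}_2)$), has upper barrier ranging over the union of one interval contained in $[a_1,a]$ (resp.\ $[0,a]$) and one interval $[x_2,\infty)$ (resp.\ $[x_4,\infty)$) contained in $(a,\infty)$; on the latter piece (a) again applies. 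Consequently, whenever the selection rule returns the safe candidate (i.e., $\beta\in A_2\cap\overline{A}_3$, resp.\ $\beta\in B_2\cap\overline{B}_3$), or returns the risky candidate with $z_2$ on its upper piece (i.e., $\beta\geq\omega_1(x_2)$, resp.\ $\beta\geq\omega_3(x_4)$), Condition (a) closes the argument.

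The only parameter windows left uncovered are precisely the two listed exceptions: $\beta\in A_1$ with $\beta<\omega_1(x_2)$ in Case (iv) of $\mu_{\pm}>0$, and $\beta\in B_1$ with $\beta<\omega_3(x_4)$ in Case (ii) of $\mu_{+}>0\geq\mu_{-}$. In these windows only Condition (b) of Theorem \ref{thm3.3.vr.}, a quantitative inequality involving $\mu_+$, $q$, $a$, $z_2$, and $g(z_2)/g'(z_2)$, can secure optimality; since it is not implied by the structural analysis of $g$, it must be left as an explicit hypothesis, yielding the final clause of the corollary. The main obstacle is therefore not analytic but careful bookkeeping: parsing the selection rules $A_1,A_2,A_3$ (resp.\ $B_1,B_2,B_3$) of Theorems \ref{thm2.1} and \ref{thm4.3}, and tracking, as $\beta$ varies, the exact location of each candidate's upper barrier relative to $a$.
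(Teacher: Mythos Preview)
Your proposal is correct and follows essentially the same approach as the paper's own proof: a case-by-case check that, in every parameter regime catalogued by Theorems \ref{thm2.1}--\ref{thm4.3}, at least one element of $\mathcal{M}_{\zeta}$ satisfies Condition (a) or (c) of Theorem \ref{thm3.3.vr.}, leaving precisely the two listed exceptions where only Condition (b) can apply. Your bookkeeping of the ``safe'' versus ``risky'' candidates in Case (iv) of $\mu_{\pm}>0$ and Case (ii) of $\mu_{+}>0\geq\mu_{-}$ matches the paper's reasoning, and your identification of the uncovered windows as $\beta\in A_1$ with $\beta<\omega_1(x_2)$ (resp.\ $\beta\in B_1$ with $\beta<\omega_3(x_4)$) is exactly what the paper records.
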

 
\begin{proof}
The proof is a straightforward application of Propositions \ref{P.0.3}-\ref{prop.4.5}, and Theorems \ref{thm2.1}-\ref{thm4.3} and \ref{thm3.3.vr.}, one just needs to check the following facts. 

(1) Assume $\mu_{\pm}>0$. Then 
\begin{itemize}
\item In Cases (i) and (ii), $\mathcal{M}_{\zeta}$ is a singleton, and we have $z_{2}>a$. 
\item In Case (iii), $\mathcal{M}_{\zeta}$ is a singleton, and we have $g^{\prime\prime}(a+)>0$. 
\item In Case (iv), if either one of the following conditions
\begin{itemize}
\item $\beta\geq \omega_{1}(x_2)$,
\item $\beta<\omega_{1}(x_2)$ \text{ and } $g^{\prime}(\omega^{-1}_1(\beta))\geq g^{\prime}(\omega^{-1}_2(\beta))$,
\end{itemize}
holds true, then $\mathcal{M}_{\zeta}$ is not necessarily a singleton, but there is at least one $(z_1,z_2)\in \mathcal{M}_{\zeta}$ with $z_{2}>a$.
\end{itemize}

(2) Assume $\mu_{\pm}\leq 0$. Then $\mathcal{M}_{\zeta}$ is a singleton, and we have $g^{\prime\prime}(a+)>0$. 

(3) Assume $\mu_{+}\leq 0$ and $\mu_{-}>0$. Then $\mathcal{M}_{\zeta}$ is a singleton.
\begin{itemize}
\item If $0<a<a_1$, then $z_{2}>a$.
\item If $a>a_{1}$, then $g^{\prime\prime}(a+)>0$. 
\end{itemize}

(4) Assume $\mu_{+}>0$ and $\mu_{-}<0$. Then
\begin{itemize}
\item In Case (i), $\mathcal{M}_{\zeta}$ is a singleton, and we have $g^{\prime\prime}(a+)>0$. 
\item In Case (ii), if either one of the following conditions
\begin{itemize}
\item $\beta\geq \omega_{3}(x_4)$,
\item $\beta<\omega_{3}(x_4)$\quad \text{and}\quad $g^{\prime}(\omega^{-1}_3(\beta))\geq g^{\prime}(\omega^{-1}_4(\beta))$,
\end{itemize}
holds true, then $\mathcal{M}_{\zeta}$ may not be a singleton, but there is at least one $(z_1,z_2)\in \mathcal{M}_{\zeta}$ with $z_{2}>a$.
\end{itemize}
The proof is simple, so we omit the details. 
 \end{proof}

\section{Numerical Analysis and Economic Interpretations}

To complement the theoretical results derived in the previous sections, we now present a series of numerical experiments. These analyses serve to visualize the optimal dividend strategy in action and to provide economic intuition for how the firm's decisions are shaped by the underlying model parameters.

\subsection{ A sample path under the optimal strategy}

We begin by simulating a sample path of the optimal controlled surplus process $U^{\pi^*}$. The left panel of Figure \ref{fig: path} illustrates this path over a time horizon of $T=10$. The baseline parameters for this simulation are set as follows: the regime-switching threshold is $a=1$; the drift and volatility coefficients are $(\mu_-,\sigma_-)=(0.5,0.5)$ for the lower regime $(U_t\leq a)$ and $(\mu_+,\sigma_+)=(0.1,0.1)$ for the upper regime $(U_t>a)$; and the fixed transaction cost $\beta=0.5$. For this parameter set, the explicit characterization of $\mathcal{M}_{\zeta}$ derived in Section \ref{sec:negative} identifies the optimal dividend barriers as $z^*_1=0.4277$ and $z^*_2=1.9059$. This strategy constitutes the optimal dividend strategy, since Condition $(a)$ of Theorem \ref{thm3.3.vr.} holds. This models a company with a high-growth, high-risk ``startup" phase (when surplus is below $a$) that transitions into a low-growth, low-risk ``maturity" phase upon expansion (surplus above $a$). As observed in the left panel of Figure \ref{fig: path}, the firm sets a relatively low upper barrier $z^*_2$. It allows the surplus to enter the more stable (but less profitable) mature phase to safely accumulate funds and pay a dividend of size $\Delta L_t = z^*_2-z^*_1$. The dividend resets the surplus to $z^*_1$, positioning the firm to re-leverage its high-growth startup phase. The strategy is thus a dynamic cycle of navigating between risk and stability.

The right panel of Figure \ref{fig: path} represents a completely different economic reality over
a time horizon of $T = 50$. Here, the parameters are reversed: $(\mu_+,\sigma_+)=(0.5,0.5)$ and $(\mu_-,\sigma_-)=(0.1,0.1)$ with $\beta = 0.5$ and $a=1$. This models a company that is stable but stagnant when small, and only enters a high-growth, high-risk expansion phase after its surplus exceeds the threshold $a$. 
The firm endures a long period of slow growth below $a$, as the low drift makes it difficult to cross the threshold. Once the threshold is crossed, the firm enters the highly profitable expansion phase. The optimal strategy is to establish a high upper barrier at $z_2^*=10.4512$, a value consistent with Condition $(a)$ of Theorem \ref{thm3.3.vr.}, to capitalize on this phase. Upon reaching $z^*_2$, the firm pays out a very large dividend of $\Delta L_t = z^*_2-z^*_1$. The lower barrier is set at $z^*_1 = 0$, the ruin level itself. It signifies that the optimal path is not to restart, but to perform a terminal payout. This aggressive strategy is rational for shareholders aiming to fully extract the firm's value after a successful high-growth period, rather than risk returning to the stagnant, low-growth phase.

\begin{figure}[H]
\centering
\includegraphics[width=3in]{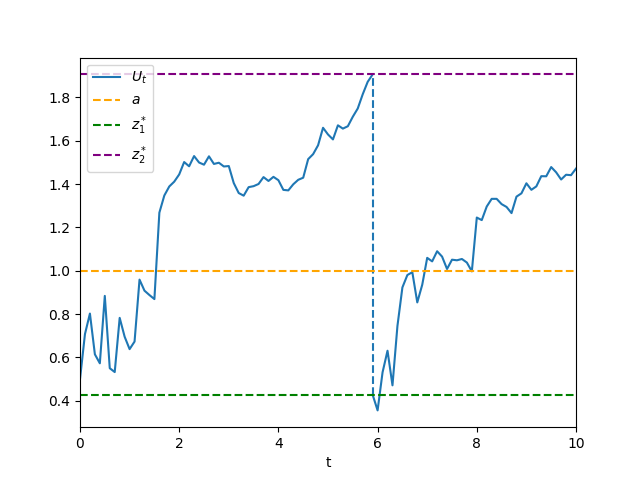}
\hspace{0.1in} 
\includegraphics[width=3in]{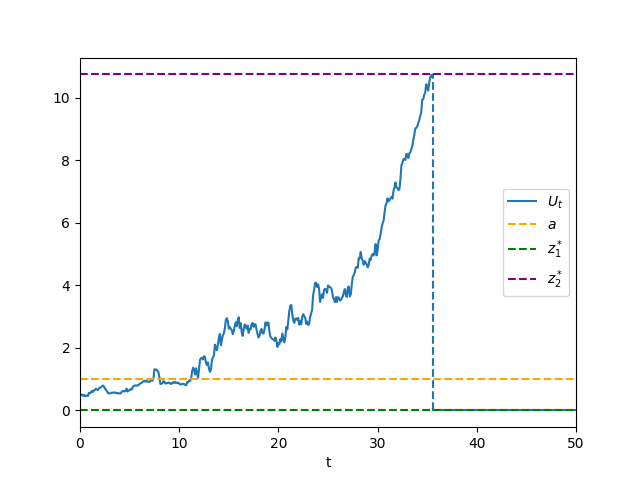}
\vspace{-0.2in}
\caption{{\footnotesize A sample path under the optimal strategy
}}
\label{fig: path}
\end{figure}

\subsection{Sensitivity analysis of model parameters}

We now investigate the sensitivity of the optimal dividend barriers $(z^*_1,z^*_2)$ with respect to the key model parameters: the transaction cost $\beta$, the regime-switching threshold $a$, the lower regime drift $\mu_-$, and the lower regime volatility $\sigma_-$.

The top-left panel of Figure \ref{fig:sensi} shows the effect of the transaction cost $\beta$ on the optimal barriers under $(\mu_+,\sigma_+)=(0.1,0.1)$, $(\mu_-,\sigma_-)=(0.5,0.5)$ and $a = 1$. The results are in complete agreement with Propositions \ref{prop.2.4.add} and \ref{prop.2.5.add}, as well as satisfying Condition $(a)$ of Theorem \ref{thm3.3.vr.}. As the cost $\beta$ of paying a dividend increases, the upper barrier $z^*_2$ increases, while the lower barrier $z^*_1$ decreases. Consequently, the dividend size $\Delta L_t = z^*_2-z^*_1$ becomes a strictly increasing function of the cost. The intuition is straightforward: a higher fixed cost per transaction incentivizes the firm to pay dividends less frequently but in larger amounts. 

We now examine the sensitivity of the optimal strategy to the regime-switching threshold $a$ under $(\mu_+,\sigma_+)=(0.5,0.1)$, $(\mu_-,\sigma_-)=(1,0.5)$ and $\beta = 1$. Each $z^*_2$ value shown in the figure satisfies Condition $(a)$ of Theorem \ref{thm3.3.vr.}.
When $a$ is small, the high-growth zone is narrow, forcing the firm to set a high $z^*_2$ to patiently accumulate surplus in the safe upper regime. As $a$ increases, making the high-growth zone more accessible, the strategy becomes more aggressive by progressively lowering $z^*_2$ to realize profits sooner. The sharp kink at $a = 2.6$ marks a fundamental strategic shift. At and beyond this point, the firm prefers to pay a dividend precisely when it is about to be pushed into the low-growth zone, immediately resetting the process to stay within its preferred, more profitable environment.

The bottom-left panel illustrates the impact of the lower regime's profitability $\mu_-$ under $(\mu_+,\sigma_+)=(0.5,1)$, $\sigma_- = 0.5$, $a = 2$ and $\beta = 1$. Each $z^*_2$ value shown in the figure satisfies Condition (a) of Theorem \ref{thm3.3.vr.}. When $\mu_-$ is negative, the lower regime is a ``danger zone." The firm adopts a highly conservative strategy, maintaining high barriers $z_1^*=4,z_2^*=10$ to create a large safety buffer and minimize the risk of ruin after a dividend payout. As $\mu_-$ increases and surpasses the upper regime's drift $\mu_+$, the lower regime becomes the engine of growth. The firm's strategy flips to become aggressive. It sets both barriers significantly lower to ensure that after a payout, the process returns to the highly profitable lower regime, and it pays dividends more quickly to minimize time spent in the less profitable upper regime.

Finally, the bottom-right panel shows the effect of the lower regime's risk $\sigma_-$ under $(\mu_+,\sigma_+)=(0.5,0.5)$, $\mu_-=1$, $\beta = 1$ and $a=8$.
Both barriers $z^*_1$ and $z^*_2$ are monotonically increasing functions of $\sigma_-$. For $\sigma_->0.63$, $z_2^*$ satisfies condition $(a)$ of Theorem \ref{thm3.3.vr.}; for $\sigma_-<0.63$, both Conditions $(b)$ and $(c)$ hold true. This reflects the firm's response to an increase in operational risk. A higher volatility in the primary operating regime increases the probability of ruin. To mitigate this risk, the firm adopts a more conservative policy by holding more precautionary cash. It raises $z^*_1$ to leave a larger buffer and consequently must also raise $z^*_2$ to ensure that the payout remains large enough to justify the transaction cost. Higher risk, therefore, leads to delayed and larger dividend payments.

\begin{figure}[H] 
\centering 
\includegraphics[width=2.4in]{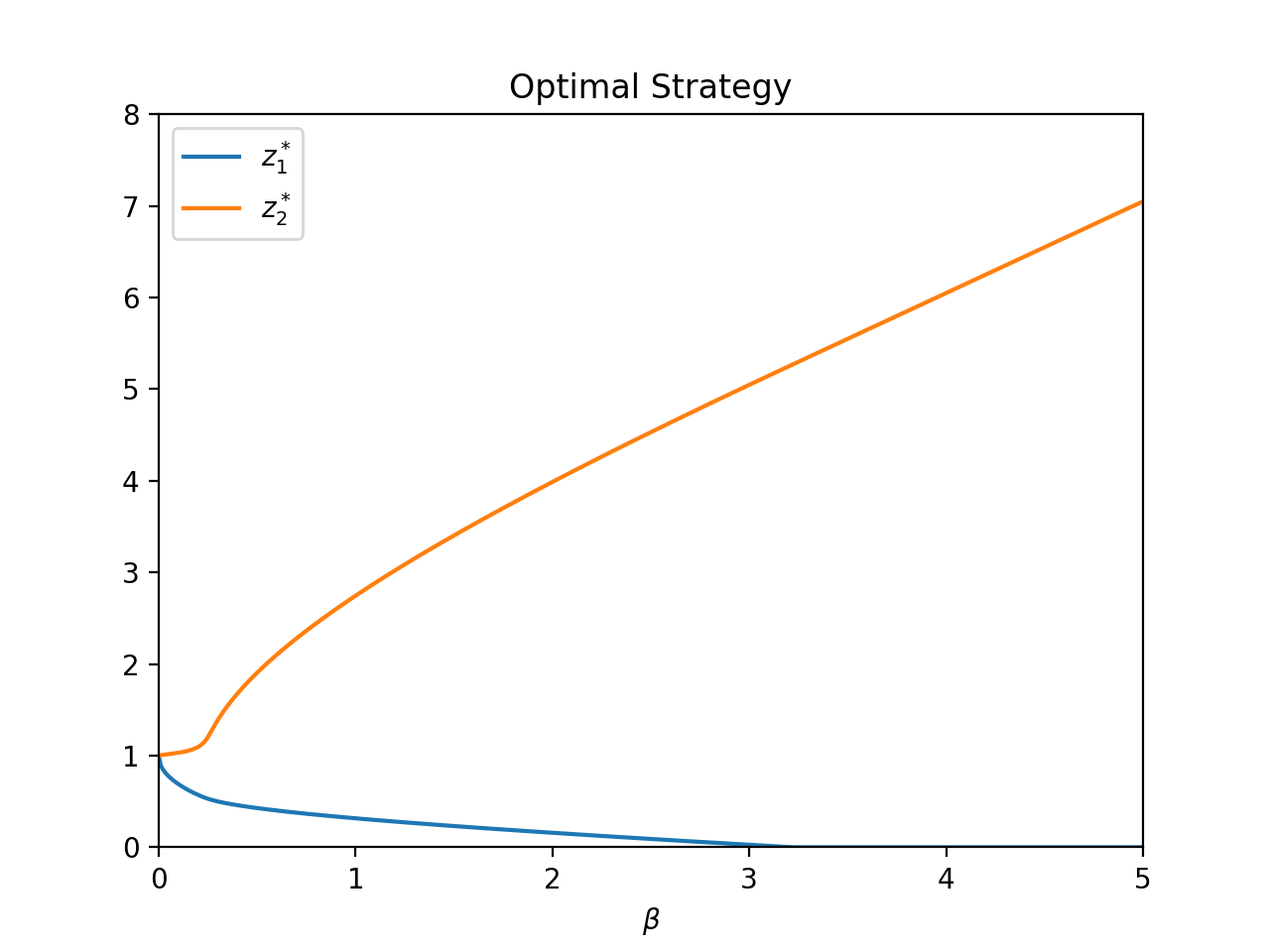} 
\hspace{0.1in} 
\includegraphics[width=2.4in]{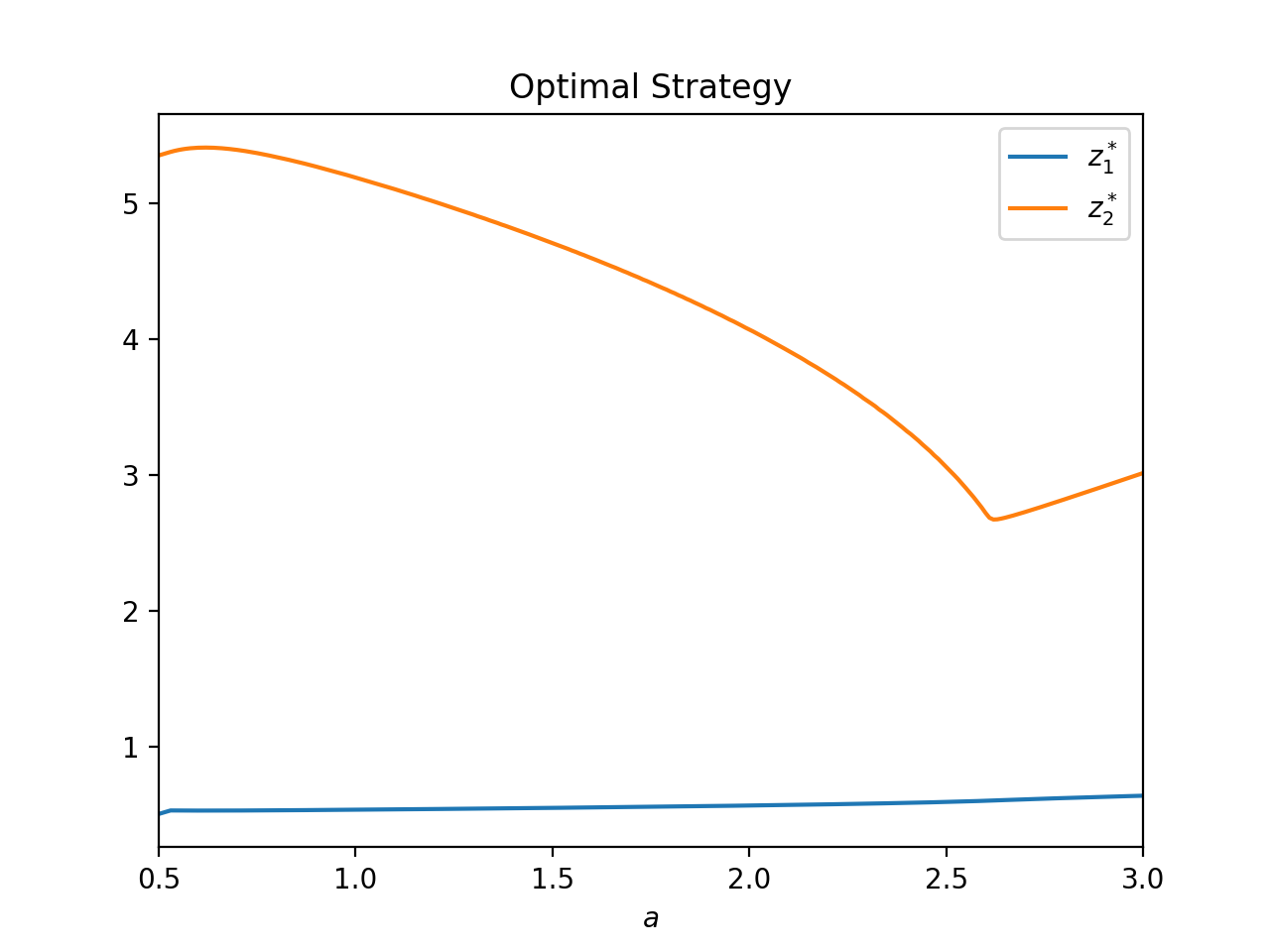} 
\vspace{0.1in} 
\includegraphics[width=2.4in]{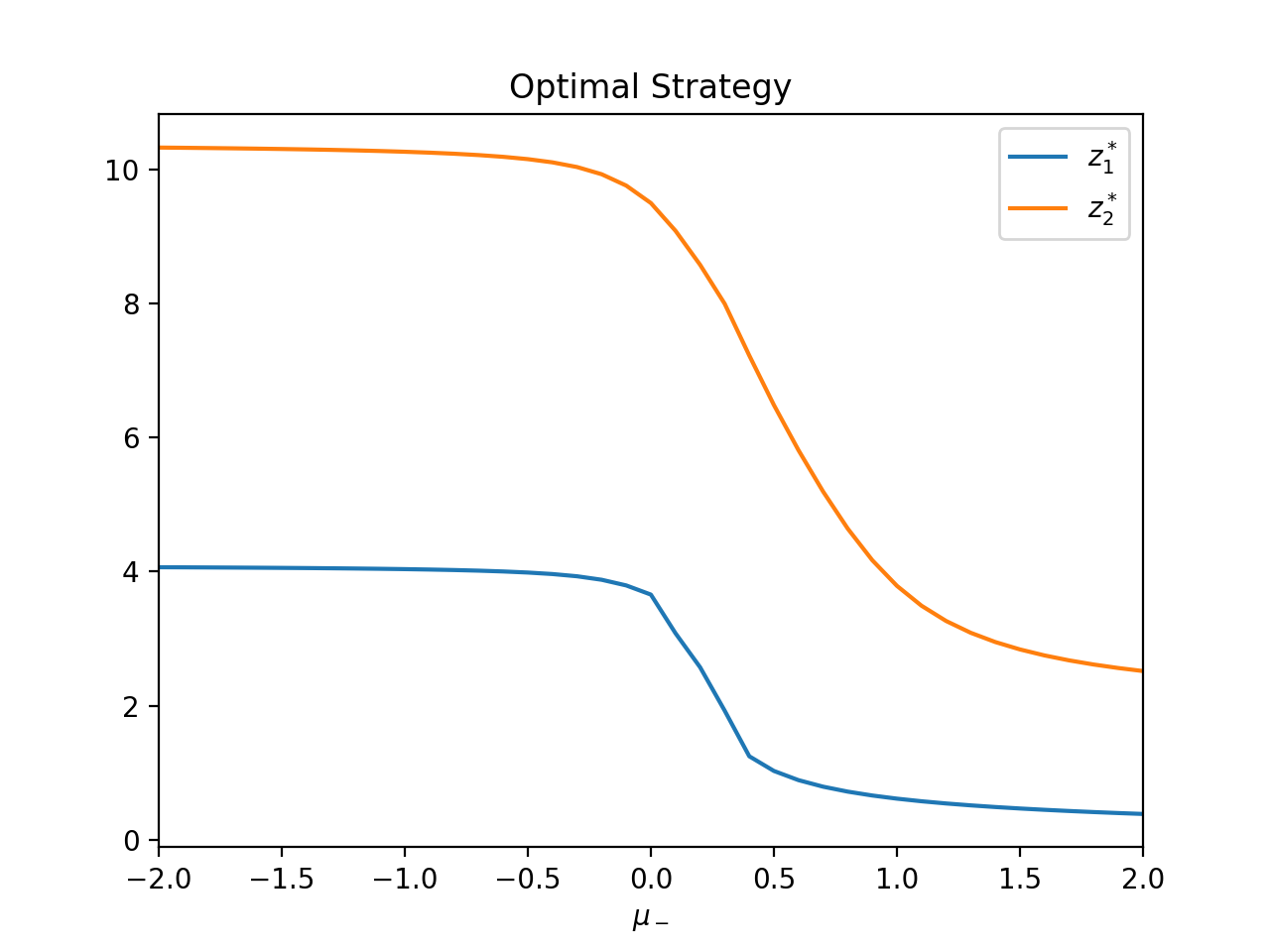} 
\hspace{0.1in}
\includegraphics[width=2.4in]{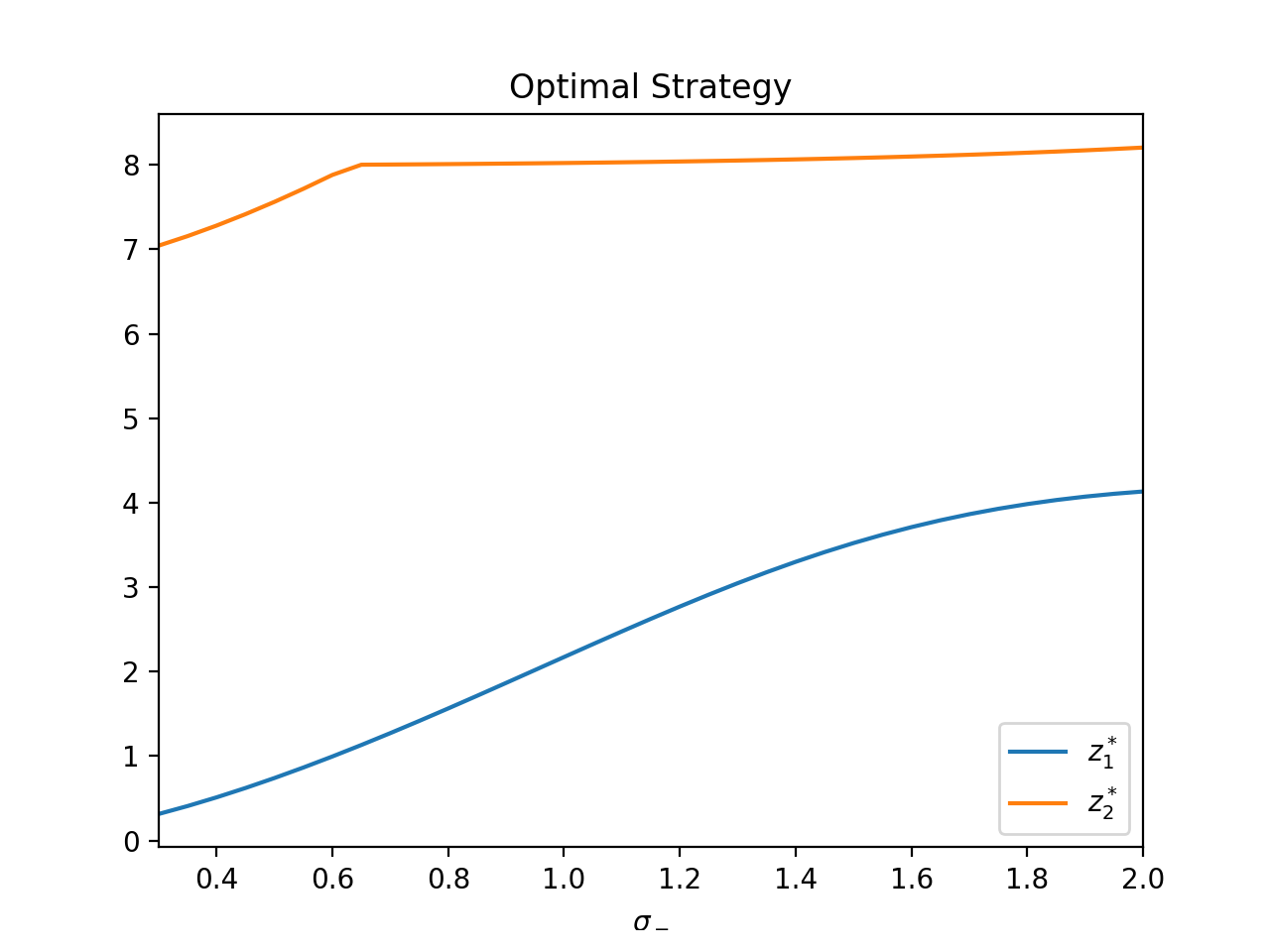} 
\vspace{-0.2in} 
\caption{\footnotesize Sensitive of parameters $\beta$, $a$, $\mu_-$ and $\sigma_-$}
\label{fig:sensi} 
\end{figure}

\appendix

\section{Proofs}
\label{sec.4}

In this appendix, we provide the proofs for some results given in the previous sections.

\subsection{Proof of Lemma \ref{lem2.1.add.new.x}}\label{sec5.1}

As the proof for the lower bound is trivial, we only need to prove the upper bound. 
For any admissible impulsive dividend payout strategy $\pi=(L^\pi_t)_{t\geq 0}$, we always have
\begin{align}
L^\pi_t\leq \sup_{s\in[0,t]}\left(X_{s}\right)_{+}\leq x+\sup_{s\in[0,t]}\left(\mu_{+}s+\sigma_{+}B_{s}\right)_{+}
+
\sup_{s\in[0,t]}\left(\mu_{-}s+\sigma_{-}B_{s}\right)_{+},\quad t\geq 0,\nonumber
\end{align}
where $x_{+}:=x\vee 0$ and $X$ is the unique solution of \eqref{def.X}. Hence, for any $x\geq 0$,
\begin{align}
\label{38.add.new.x}
V_\pi(x)&\leq x+
\bE\Big[\int_0^{\infty}e^{-qt} \dd \Big(\sup_{s\in[0,t]}\left(\mu_{+}s+\sigma_{+}B_{s}\right)_{+}
+
\sup_{s\in[0,t]}\left(\mu_{-}s+\sigma_{-}B_{s}\right)_{+}\Big)\Big]
\nonumber
\\
&=x+\frac{1}{q}\bE\Big[\sup_{s\in[0,\mathrm{e}_{q}]}\left(\mu_{+}s+\sigma_{+}B_{s}\right)_{+}
+
\sup_{s\in[0,\mathrm{e}_{q}]}\left(\mu_{-}s+\sigma_{-}B_{s}\right)_{+}\Big], 
\end{align}
where $\mathrm{e}_{q}$ denotes an exponential random variable (with mean $1/q$) independent of the Brownian motion $B$. By formula (1.1.2) in Page 250 of Part $\mathbf{II}$ of \cite{BANPS2015}, for constants $\mu\in\mathbb{R}$ and $\sigma>0$, we have
\begin{align}
\label{39.add.new.x}
\bP\bigg(\sup_{s\in[0,\mathrm{e}_{q}]}\left(\mu s+\sigma B_{s}\right)_{+}\geq y\bigg)=
e^{\big(\frac{\mu}{\sigma}-\sqrt{2q+\frac{\mu^{2}}{\sigma^{2}}}\big)\frac{y}{\sigma}},\quad y\geq 0.
\end{align}
Combining \eqref{38.add.new.x}-\eqref{39.add.new.x} and using the arbitrariness of $\pi$ yields the upper bound of \eqref{5.add.new.y}.

\subsection{Proof of Lemma \ref{prop2.2.v2}}
\label{sec5.2}

By definition one has $g(0)=0$.
Using \eqref{scale.fun.} and \eqref{sign.c+} one can verify straightforwardly that
\begin{align}
\label{15.v2}
g^-(0)>g^+(0)>0, \quad (1-c_+)\theta_2^+-c_+\theta_1^+=\theta_2^->0.
\end{align}
By \eqref{10} and the fact that $g\in C^1(\bR)\cap C^2(\bR\backslash \{a\})$, for any $x\in\bR$, it holds that
\begin{align}
\label{11}
g^{\prime}(x)
&= 
\left[(1-c_{+})g^-(0)\theta_2^+e^{\theta_2^+{(x-a)}}+(g^+(0)-c_+g^-(0))\theta_1^+e^{-\theta_1^+{(x-a)}} \right]1_{\{x>a\}}
\nonumber\\
&
+\left[(g^-(0)-c_{-}g^+(0))\theta_2^-e^{\theta_2^-{(x-a)}}+(1-c_-)g^+(0)\theta_1^-e^{-\theta_1^-{(x-a)}}\right]1_{\{x\leq a\}}.
\end{align}
Using \eqref{scale.fun.} and \eqref{15.v2} one can find that
\begin{align}
\label{12}
g^-(0)-c_{-}g^+(0)&=c_-e^{-\theta_2^-a}+(1-c_-) e^{\theta_1^-a}-c_-e^{-\theta_2^-a}= (1-c_-) e^{\theta_1^-a}>0.
\end{align}
By \eqref{theta12}, \eqref{sign.c-}, \eqref{15.v2}-\eqref{12} and the fact of $(1-c_{-})\theta_{1}^{-}>0$ (see \eqref{theta12} and \eqref{sign.c-}), one knows that $g^{\prime}(x)>0$ for any $x\leq a$. 
Suppose $g'$ has a real root, and let $x^{*}$ be its smallest root. Then $x^{*}\in(a,\infty)$. Since $g'(x)>0$ for $x\in(0,x^{*})$, we see $g''(x^{*})\leq 0$ and $g(x^{*})> g(0)= 0$. Taking $g''(x^{*})\leq 0$, $g'(x^{*})= 0$ and $g(x^{*})> 0$ into \eqref{dis.generator.} leads to a contradiction. 
Hence, we conclude $g'$ has no real root, thus $g'>0$ on $\bR$. The claim follows.

\subsection{Proof of Proposition \ref{P.0.4}}
\label{sec5.11}

Recall that $g^{\prime}(z)>0$ for all $z>0$ (see Lemma \ref{prop2.2.v2}).
It is easy to verify that
\begin{align}
\frac{\partial}{\partial z_1}\zeta(z_1,z_2)=&\frac{g^{\prime}(z_1)\left[\frac{g(z_1)-g(z_2)}{g^{\prime}(z_1)}+z_2-z_1-\beta\right]}{(g(z_2)-g(z_1))^2}
=\frac{g^{\prime}(z_1)\left[\int_{z_1}^{z_2}\left[1-\frac{g^{\prime}(z)}{g^{\prime}(z_1)}\right]\mathrm{d}z-\beta\right]}{(g(z_2)-g(z_1))^2},\nonumber\\
\frac{\partial}{\partial z_2}\zeta(z_1,z_2)=&\frac{g^{\prime}(z_2)\left[\frac{g(z_2)-g(z_1)}{g^{\prime}(z_2)}-z_2+z_1+\beta\right]}{(g(z_2)-g(z_1))^2}.\nonumber
\end{align}
By \eqref{10} and \eqref{15.v2}, one knows that there exists $z_{1}^{0}\in(0,\infty)$ such that $g^{\prime}(z)$ is increasing over $[z_{1}^{0},\infty)$, which implies $\frac{\partial}{\partial z_1}\zeta(z_1,z_2)<0$ for all $z_1\geq z_{1}^{0}$.
In addition, it holds that
\begin{align}
\lim_{z_2\rightarrow\infty}\left[\frac{g(z_2)-g(z_1)}{g^{\prime}(z_2)}-z_2+z_1+\beta\right]=-\infty,~~\text{uniformly for }0\leq z_1\leq z_{1}^{0},\nonumber
\end{align}
which implies that, there exists $z_{2}^{0}\in(0,\infty)$ such that $\frac{\partial}{\partial z_2}\zeta(z_1,z_2)<0$ for all $z_{2}\geq z_{2}^{0}$.
Put $z_0:=z_{1}^{0}\vee z_{2}^{0}+\beta\in(0,\infty)$. Then, we have
\begin{align}
\zeta(z_1,z_2)\leq \zeta({z}_1,{z}_0), &\quad (z_{1},z_{2})\in [0,z_0-\beta]\times[z_{0},\infty),
\nonumber\\
\zeta(z_1,z_2)\leq \zeta({z}_0-\beta,{z}_2)
\leq \zeta({z}_0-\beta,{z}_0), &\quad z_0-\beta\leq z_{1}\leq z_{2}-\beta<\infty,\nonumber
\end{align}
which implies
\begin{align*}
\max_{\beta\leq z_1+\beta\leq z_2<\infty}\zeta({z}_1,{z}_2)
=\max_{\beta\leq z_1+\beta\leq z_2\leq z_0}\zeta({z}_1,{z}_2).
\end{align*}
We next rule out the possibility that $\zeta(z_1,z_2)$ attains its maximum value in the boundary line $z_2=z_1+\beta$. Indeed, for any $\Tilde{z}_1,\Tilde{z_2}$ satisfying $\Tilde{z}_2=\Tilde{z}_1+\beta\geq\beta$, it holds that $\zeta({\Tilde{z}_1},{\Tilde{z}_2})\equiv0$. However, by the fact that $g(x)$ is strictly increasing, one gets $\zeta(z_1,z_2)>0=\zeta({\Tilde{z}_1},{\Tilde{z}_2})$ for all $(z_1,z_2)\in\mathcal{D}_{\zeta}$ satisfying $z_2>z_1+\beta\geq\beta$. Hence
\begin{align*}
\max_{\beta\leq z_1+\beta\leq z_2\leq z_0}\zeta({z}_1,{z}_2)
=\max_{\beta\leq z_1+\beta<z_2\leq z_0}\zeta({z}_1,{z}_2).
\end{align*}
Combining above yields the desired result.

\subsection{Proof of Proposition \ref{P.0.3}}
\label{sec5.3}

By \eqref{10}, for any $x\in(0,\infty)\backslash\{a\}$, it is easy to verify that
\begin{align}
\label{g''.v2}
g^{\prime\prime}(x)
&=
\left[(1-c_{+})g^-(0)(\theta_2^+)^2e^{\theta_2^+{(x-a)}}-(g^+(0)-c_+g^-(0))(\theta_1^+)^2e^{-\theta_1^+{(x-a)}} \right]1_{\{x>a\}}
\nonumber\\
&\quad\;+\left[(g^-(0)-c_{-}g^+(0))(\theta_2^-)^2e^{\theta_2^-{(x-a)}}-(1-c_-)g^+(0)(\theta_1^-)^2e^{-\theta_1^-{(x-a)}}\right]1_{\{x<a\}}.
\end{align}

We split the proofs into (1) and (2) as follows.
\begin{itemize}
\item[(1)]
We first discuss the sign of $g^{\prime\prime}(x)$ for $x\in(0,a)$.
When $x\in (0,a)$, using \eqref{scale.fun.} and \eqref{g''.v2} we have
\begin{align}
\label{g''x<a}
g^{\prime\prime}(x)
&=
(g^-(0)-c_{-}g^+(0))(\theta_2^-)^2e^{\theta_2^-{(x-a)}}-(1-c_-)g^+(0)(\theta_1^-)^2e^{-\theta_1^-{(x-a)}}
\nonumber\\
&=
(1-c_-)e^{(\theta_1^--\theta_2^-)a}\left[(\theta_2^-)^2e^{\theta_2^-x}-(\theta_1^-)^2e^{-\theta_1^-x}\right],
\end{align}
which is strictly increasing with its unique zero $a_1>0$ given by \eqref{def.a}. We hence conclude that
\begin{itemize}
\item[(1-1)] If $0<a\leq a_{1}$, one has $g^{\prime\prime}(x)<0$ on $(0,a)$.
\item[(1-2)] If $a>a_{1}$, one has $g^{\prime\prime}(x)<0$ on $(0,a_{1})$ and $g^{\prime\prime}(x)>0$ on $(a_{1},a)$.
\end{itemize}

\item[(2)] In the sequel, we discuss the sign of $g^{\prime\prime}(x)$ for $x>a$.
When $x>a$, by \eqref{g''.v2}, it holds that
\begin{align}
\label{g''}
g^{\prime\prime}(x)
&=
(1-c_{+})g^-(0)(\theta_2^+)^2e^{\theta_2^+{(x-a)}}-(g^+(0)-c_+g^-(0))(\theta_1^+)^2e^{-\theta_1^+{(x-a)}}.
\end{align}
We first discuss the sign of
\begin{align}
\label{h_{1}(a)}
h_{1}(a):=g^+(0)-c_+g^-(0)
=e^{-\theta_2^-a}(1-c_+c_-)-c_+(1-c_-)e^{\theta_1^-a},
\end{align}
where we used \eqref{scale.fun.} in the second equality. Due to the fact that $g^-(0)>g^+(0)>0$ (see \eqref{15.v2}), if $c_{+}\leq 0$, we have $h_{1}(a)>0$ for all $a>0$.
If $c_+>0$, it follows from
$1-c_{+}c_{-}>0$ 
that the function $\mathbb{R}_{+}\ni x\mapsto h_{1}(x)$ is strictly decreasing and has a unique zero $a_2>0$ given by \eqref{def.a}, 
where we have used the fact that $c_{+}>0$ implies $\theta_2^+-\theta_2^->0$ (see \eqref{sign.c+}). Therefore, if $c_+>0$ and $a\in(0,a_{2})$, we have $h_{1}(a)>0$; and, if $c_+>0$ and $a\geq a_{2}$, we have $h_{1}(a)\leq0$. In sum, we have
\begin{align}
h_1(a)
\begin{cases}
>0\,\,\text{ if }c_+\leq0,\\
>0\,\,\text{ if }c_+>0\text{ and }0<a<a_2,\\
\leq0\,\,\text{ if }c_+>0\text{ and }a\geq a_2.
\end{cases}\nonumber
\end{align}

\begin{itemize}
\item[(2-1)] Suppose $c_+>0$, in which case we have the following conclusions.
\begin{itemize}
\item[(2-1-1)] If $c_+>0$ and $a\geq a_{2}$, by \eqref{g''} we have $g^{\prime\prime}(x)>0$ for all $x>a$.

\item[(2-1-2)] If $c_+>0$ and $0<a<a_{3}\wedge a_{2}$, then $x_0> a$, $g^{\prime\prime}(x)<0$ on $(a,x_0)$, and, $g^{\prime\prime}(x)>0$ on $(x_0,\infty)$.
Actually, if $c_+>0$ and $a\in(0,a_{2})$ (hence, $h_{1}(a)>0$), the function $g^{\prime\prime}(x)$ is strictly increasing with its unique zero $x_0$ given by \eqref{def.x0}.
To check whether or not $x_0$ is greater than $a$, define
\begin{align}\label{def.h}
h_{2}(a)
&:= -g^{\prime\prime}(a+)=
(g^+(0)-c_+g^-(0))(\theta_1^+)^2-(1-c_+)g^-(0)(\theta^+_2)^2
\nonumber\\
&=
-(1-c_-)
\Theta
e^{\theta_1^-a}
+\left[(1-c_-c_+)(\theta_1^+)^2-(1-c_+)c_-(\theta_2^+)^2\right]e^{-\theta_2^-a}.
\end{align}
It follows from $c_+>0$, \eqref{sign.c-}, \eqref{sign.c+}, and, the definition of $\Theta$, that
$$-(1-c_-)\Theta=-(1-c_-)\left[c_+(\theta_1^+)^2+(1-c_+)(\theta_2^+)^2\right]<0,$$
which together with the fact of $h_{2}(0)=(1-c_+)[(\theta_1^+)^2-(\theta_2^+)^2]>0$ yields that
$$(1-c_-c_+)(\theta_1^+)^2-(1-c_+)c_-(\theta_2^+)^2
>(1-c_-)\Theta
>0.$$ 
Hence, the function $\mathbb{R}_{+}\ni x\mapsto h_{2}(x)$ is strictly decreasing and admits a unique zero $a_3>0$ given by \eqref{def.a}.
Hence, if $c_{+}>0$ and $0<a<a_{2}\leq a_{3}$, we have $-g^{\prime\prime}(a+)=h_{2}(a)>0$ on $a\in(0,a_{2})$, which implies $x_0> a$, and hence $g^{\prime\prime}(x)<0$ on $(a,x_0)$ and $g^{\prime\prime}(x)>0$ on $(x_0,\infty)$.
Similarly, if $c_+>0$ and $0<a<a_{3}<a_{2}$, one knows that $-g^{\prime\prime}(a+)=h_{2}(a)>0$ and $x_0> a$, and hence $g^{\prime\prime}(x)<0$ on $(a,x_0)$ and $g^{\prime\prime}(x)>0$ on $(x_0,\infty)$.

\item[(2-1-3)] If $c_+>0$ and $0<a_{3}\leq a\leq a_{2}$, then $g^{\prime\prime}(x)>0$ for all $x>a$.
Actually, in the case $c_+>0$ and $0<a_{3}\leq a\leq a_{2}$, we get $-g^{\prime\prime}(a+)=h_{2}(a)\leq0$, which means $x_0\leq a$.
\end{itemize}

\item[(2-2)] Suppose $c_+\leq0$, in which case the function $g^{\prime\prime}(x)$ is strictly increasing (see \eqref{g''}) with its unique zero $x_0$ given
by \eqref{def.x0}. Let $h_{2}(a)$ be defined by \eqref{def.h}. We have the following conclusions.
\begin{itemize}
\item[(2-2-1)] If $\Theta\leq 0$, $c_+\leq0$ and $a>0$, then $x_0>a$, $g^{\prime\prime}(x)<0$ on $(a,x_0)$, and, $g^{\prime\prime}(x)>0$ on $(x_0,\infty)$.
Indeed, if $\Theta\leq 0$ and $a>0$, it follows from $h_{2}(0)>0$ that $-g^{\prime\prime}(a+)=h_{2}(a)>0$, which means $x_0>a$.

\item[(2-2-2)] If $\Theta>0$, $c_+\leq0$ and $a\in(0,a_{3})$, then $x_0>a$, $g^{\prime\prime}(x)<0$ on $(a,x_0)$, and, $g^{\prime\prime}(x)>0$ on $(x_0,\infty)$.
Indeed, if $\Theta>0$ and $a\in(0,a_{3})$, it follows from $h_{2}(0)>0$ that the function $\mathbb{R}_{+}\ni x\mapsto h_{2}(x)$ is strictly decreasing and $a_{3}>0$ given by \eqref{def.a} is its unique zero. Hence, one knows that $-g^{\prime\prime}(a+)=h_{2}(a)>0$, which yields $x_0>a$.

\item[(2-2-3)] If $\Theta>0$, $c_+\leq0$ and $a\in[a_{3},\infty)$, then $g^{\prime\prime}(x)\geq 0$ for all $x>a$.
Indeed, if $\Theta>0$ and $a\in[a_{3},\infty)$, one knows that $-g^{\prime\prime}(a+)=h_{2}(a)\leq 0$ for $a\in[a_{3},\infty)$, which means $x_0\leq a$.
\end{itemize}
\end{itemize}
\end{itemize}
Putting together all the above arguments leads to the desired result of Proposition \ref{P.0.3}.

\subsection{Proof of Theorem \ref{thm2.1}}
\label{sec5.4}

Let $\mathcal{M}_{\zeta}$ and $\psi(x,y)$ be given respectively by \eqref{M} and \eqref{psi}. 

We first consider Case (i) of Proposition \ref{P.0.3}. We are to prove that $\mathcal{M}_{\zeta}$ is a singleton set and then identify the unique $(z_{1},z_{2})\in\mathcal{M}_{\zeta}$ explicitly.

To start, we characterize the set $\mathcal{N}$ (Actually, if $\mathcal{N}$ is identified to be a singleton, then by Proposition \ref{P.0.4} and the relation $\mathcal{M}_{\zeta}\subseteq \mathcal{N}$ one has $\mathcal{M}_{\zeta}= \mathcal{N}$). For any $(z_{1},z_{2})\in\mathcal{N}$, either $z_{1}=0$ or $z_{1}>0$ holds true.
\begin{itemize}
\item [(1)] We first check whether or not there is a
$(z_{1},z_{2})\in\mathcal{N}$ such that $z_{1}>0$. If there is a $(z_{1},z_{2})\in\mathcal{N}$ with $z_{1}>0$, then we have \eqref{psi=beta} and $g^{\prime}(z_1)=g^{\prime}(z_2)$, which forces us to conclude that $0<z_{1}\leq a\leq z_{2}<\inf\{x\geq a;g^{\prime}(x)\geq g^{\prime}(0)\}$.
Then, it holds that $z_1=(g^{\prime})_-^{-1}(g^{\prime}(z_1))=(g^{\prime})_-^{-1}(g^{\prime}(z_2))$.
Hence, \eqref{psi=beta} can be rewritten as
\begin{align}
\label{psi=beta.1}
\phi(z_2)=\beta,
\end{align}
where the unary function $\phi(x)$ is defined by \eqref{def.phi.} with $a\leq x\leq a_{4}$ (note that $a_{4}$ is guaranteed to be finite since $g^{\prime}(0)$ is finite and $g^{\prime}$ is strictly increasing on $(a,\infty)$ with $g^{\prime}(\infty)=\infty$).
One can verify that
\begin{align}
\phi^{\prime}(x)=\int_{(g^{\prime})_-^{-1}(g^{\prime}(x))}^{x} g^{\prime}(s)\ds\frac{g^{\prime\prime}(x)}{(g^{\prime}(x))^2},\quad x\in(a,a_{4}),\nonumber
\end{align}
which inherits from $g^{\prime\prime}(x)$ the property of being positive on $(a,a_{4})$. That is to say, the unary function $\phi(x)$ defined by \eqref{def.phi.} is continuous and strictly increasing on $[a,a_{4}]$ with $\phi(a)=\psi((g^{\prime})_-^{-1}(g^{\prime}(a)),a)=\psi(a,a)=0$. Hence
\begin{itemize}
\item[(1-1)] if $\phi\left(a_{4}\right)=\psi\left((g^{\prime})_-^{-1}\big(g^{\prime}(a_{4})\big),a_{4}\right)=\psi(0,a_{4})>\beta$, by the intermediate value theorem, we know that there exists a unique $z_{2}=\phi^{-1}(\beta)\in(a,a_{4})$ with $z_{1}=(g^{\prime})_-^{-1}(g^{\prime}(z_2))\in(0,a)$ such that \eqref{psi=beta.1} holds true. Hence, the point $(z_{1},z_{2})$ with $z_{2}=\phi^{-1}(\beta)\in(a,a_{4})$ and $z_{1}=(g^{\prime})_-^{-1}(g^{\prime}(z_2))$ is the unique solution of \eqref{psi=beta} such that $z_{1}>0$ and $g^{\prime}(z_1)=g^{\prime}(z_2)$. Here, $\phi^{-1}$ denotes the well-defined inverse function of $\phi$ given by \eqref{def.phi.}.
\item[(1-2)] if $\phi(a_4)=\psi\left((g^{\prime})_-^{-1}\big(g^{\prime}(a_{4})\big),a_{4}\right)=\psi(0,a_{4})\leq \beta$, there is no solution $(z_{1},z_{2})$ of \eqref{psi=beta} such that $z_{1}>0$ and $g^{\prime}(z_1)=g^{\prime}(z_2)$.
\end{itemize}

\item[(2)] We then check whether or not there is a
$(z_{1},z_{2})\in\mathcal{N}$ such that $z_{1}=0$. If there is a $(z_{1},z_{2})\in\mathcal{N}$ with $z_{1}=0$, then, \eqref{z2.int.} holds true with $z_{1}=0$, that is
\begin{equation}
\label{psi=beta.0}
\psi(0,z_2)=
\begin{cases}
\phi_0(z_2):=
\int_0^{z_2}\left(1-\frac{g'(s)}{g'(z_2)}\right)\ds=\beta,& z_2\in[a,a_4],\\
\phi(z_2)=\beta,&z_2\in[a_4,\infty].
\end{cases}
\end{equation}
It is easy to verify that
the function $(0,\infty)\ni x\mapsto \psi(0,x)$ is strictly increasing on $(a,\infty)$, $\psi(0,a)<0$ and $\psi(0,\infty)=\infty$. Hence
\begin{itemize}
\item[(2-1)] if $\psi(0,a_{4})>\beta$, by the intermediate value theorem, we know that there exists a unique $z_{2}\in(a,a_{4})$ such that \eqref{psi=beta.0} holds true. Hence, the point $(0,z_{2})$ with $z_{2}\in(a,a_{4})$ is the unique solution of \eqref{psi=beta} such that $z_{1}=0$. 

\item[(2-2)] if $\psi(0,a_{4})\leq \beta$, by the intermediate value theorem, we know that there exists a unique $z_{2}\in[a_{4},\infty)$ such that \eqref{psi=beta.0} holds true. Hence, the point $(0,z_{2})$ with $z_{2}\in[a_{4},\infty)$ is the unique solution of \eqref{psi=beta} such that $z_{1}=0$.
\end{itemize}
\end{itemize}
Summing up the above arguments, we arrive at our conclusion as follows.
\begin{itemize}
\item[(a)] If $\psi(0,a_{4})>\beta$, the set $\mathcal{N}$ is composed of two points, i.e.,
$$\mathcal{N}=\{((g^{\prime})_-^{-1}(g^{\prime}(\phi^{-1}(\beta))),\phi^{-1}(\beta)),(0,\phi_{0}^{-1}(\beta))\}.$$
Here, $\phi_0^{-1}$ denotes the well-defined inverse function of $\phi_0$.
Due to the fact that $$g^{\prime}(s)>g^{\prime}(\phi^{-1}(\beta)), \quad\text{for all}\quad s\in[0,(g^{\prime})_-^{-1}(g^{\prime}(\phi^{-1}(\beta)))),$$
one can verify that
\begin{align}
\beta &= \phi(\phi^{-1}(\beta))=
\int_{(g^{\prime})_-^{-1}(g^{\prime}(\phi^{-1}(\beta)))}^{\phi^{-1}(\beta)}\left(1-\frac{g^{\prime}(s)}{g^{\prime}(\phi^{-1}(\beta))}\right)\ds
\nonumber\\
&>
\int_{0}^{\phi^{-1}(\beta)}\left(1-\frac{g^{\prime}(s)}{g^{\prime}(\phi^{-1}(\beta))}\right)\ds=\phi_{0}(\phi^{-1}(\beta)),\nonumber
\end{align}
which implies
\begin{align}
\label{bijiao}
\phi_{0}^{-1}(\beta)>\phi^{-1}(\beta).
\end{align}
Since both points of $\mathcal{N}$ are solutions to \eqref{z2.int.}, by \eqref{bijiao} and the fact that $g^{\prime}(x)$ is strictly increasing on $(a,a_{4})$, one can get
\begin{align}
\zeta(0,\phi_{0}^{-1}(\beta))
&= 1/g^{\prime}(\phi_{0}^{-1}(\beta))
\nonumber\\
&<1/g^{\prime}(\phi^{-1}(\beta))=\zeta((g^{\prime})_-^{-1}(g^{\prime}(\phi^{-1}(\beta))),\phi^{-1}(\beta)),\nonumber
\end{align}
which together with the fact that $\emptyset\neq \mathcal{M}_{\zeta}\subseteq \mathcal{N}$ implies that
$$\mathcal{M}_{\zeta}=\{((g^{\prime})_-^{-1}(g^{\prime}(\phi^{-1}(\beta))),\phi^{-1}(\beta))\}.$$

\item[(b)] If $\psi(0,a_{4})\leq \beta$, the set $\mathcal{N}$ is composed of only one point, i.e.,
$$\mathcal{N}=\{(0,\phi^{-1}(\beta))\}=\{((g')^{-1}_-(\phi^{-1}(\beta)),\phi^{-1}(\beta))\},$$
which combined with the fact that $\emptyset\neq \mathcal{M}_{\zeta}\subseteq \mathcal{N}$ yields that
$$\mathcal{M}_{\zeta}=\{((g')^{-1}_-(\phi^{-1}(\beta)),\phi^{-1}(\beta))\}.$$
\end{itemize}

For Cases (ii) and (iii), one can derive the desired results by adopting a similar argument as the one used for the Case (i).

{
We next discuss Case (iv) of Proposition \ref{P.0.3} in which $g(x)$ is concave on $(0,a_1)$, convex on $(a_1,a)$, concave on $(a,x_0)$ and convex on $(x_0,\infty)$. Let $x_{1}$ and $x_{2}$ be defined by \eqref{x1.def}-\eqref{x2.def}. To simplify the analysis, we show the following six claims.

\begin{itemize}
\item[(1)] 
$\{(z_1,z_2)\in\mathcal{N}: z_2\in(a, x_0]\cup [0,a_1]\} \cap \mathcal{M}_{\zeta}= \emptyset$.
\item[(2)] $\{(z_1,z_2)\in\mathcal{N}: z_2\in ((g')^{-1}_2( g'(x_2)),a] \} \cap \mathcal{M}_{\zeta}= \emptyset$.
\item[(3)] $\{(z_1,z_2)\in\mathcal{N}: z_2\in [a_1,(g')^{-1}_2( g'(x_2))], z_1\neq (g')^{-1}_1(g'(z_2)) \} \cap \mathcal{M}_{\zeta}= \emptyset$.
\item [(4)] 
$\{(z_1,z_2)\in\mathcal{N}: z_2\in [x_0,x_{1}),z_1\neq (g')^{-1}_3(g'(z_2))\} \cap \mathcal{M}_{\zeta}= \emptyset$.
\item[(5)] 
$\{(z_1,z_2)\in\mathcal{N}: z_2=x_1,z_1\notin\{ (g')^{-1}_1(g'(z_2)), (g')^{-1}_3(g'(z_2))\}
\} \cap \mathcal{M}_{\zeta}= \emptyset$.
\item[(6)] 
$\{(z_1,z_2)\in\mathcal{N}: z_2\in (x_1,\infty),z_1\neq (g')^{-1}_1(g'(z_2)) 
\} \cap \mathcal{M}_{\zeta}= \emptyset$.
\end{itemize}
Obviously, $\{(z_1,z_2)\in\mathcal{N}: z_2\in[0,a_1]\}= \emptyset$. To prove claim (1), assume that $(z_1,z_2)\in\mathcal{N}$ 
is such that $z_2\in (a, x_0]$. By the definition of $\mathcal{N}$, $z_1\in[0,a_1)$.
Due to the fact of
$$g^{\prime}(s)>g^{\prime}(z_2),\quad\text{for all}\quad s\in((g^{\prime})^{-1}_2(g^{\prime}(z_2)),z_2) \cup [0,(g^{\prime})^{-1}_1(g^{\prime}(z_2))),$$ 
one can verify that
\begin{align}\label{56}
\beta
&=
\int_{z_1}^{z_2}\Big(1-\frac{g^{\prime}(s)}{g^{\prime}(z_2)}\Big)\ds
\nonumber\\
&= 
\Big(\int_{z_1}^{(g^{\prime})^{-1}_1(g^{\prime}(z_2))}
+\int_{(g^{\prime})^{-1}_1(g^{\prime}(z_2))}^{(g^{\prime})^{-1}_2(g^{\prime}(z_2))}
+\int_{(g^{\prime})^{-1}_2(g^{\prime}(z_2))}^{z_2}\Big)\Big(1-\frac{g^{\prime}(s)}{g^{\prime}(z_2)}\Big)\ds
\nonumber\\
&< 
\int_{(g^{\prime})^{-1}_1(g^{\prime}(z_2))}^{(g^{\prime})^{-1}_2(g^{\prime}(z_2))}\Big(1-\frac{g^{\prime}(s)}{g^{\prime}(z_2)}\Big)\ds=\psi((g^{\prime})^{-1}_1(g^{\prime}(z_2)),(g^{\prime})^{-1}_2(g^{\prime}(z_2))),
\end{align}
which implies that there exists a $(z'_1,z'_2)\in\mathcal{N}$ such that $0\leq z'_1<a_1<z'_2\leq a$ and $z'_1=(g^{\prime})^{-1}_1(g^{\prime}(z'_2))$. 
By \eqref{56}, it holds that
\begin{align}
z'_2<(g^{\prime})^{-1}_2(g^{\prime}(z_2)).
\end{align}
Then, by the fact that $g^{\prime}(x)$ is strictly increasing on $(a_1,a)$, one can get
\begin{align}
\zeta(z'_1,z'_2)=1/g^{\prime}(z'_2)>1/g^{\prime}(z_2)=\zeta(z_1,z_2),
\end{align}
which means that any $(z_1,z_2)\in\mathcal{N}$ such that $z_2\in(a, x_0]$ satisfies $(z_1,z_2)\notin\mathcal{M}_{\zeta}$. Hence, (1) holds true. The claims (2)-(6) can be proved by similar arguments combined with the definition of $x_1$ and $x_2$. We hence omit their proofs.
By the above claims (1)-(6), we know that $\mathcal{M}_{\zeta}\subseteq \cup_{i=1}^{4}\mathcal{R}_i$, where
\begin{align*}
&\mathcal{R}_1:=\{(z_1,z_2)\in \mathcal{N}:z_2\in [a_1,(g')^{-1}_2( g'(x_2))],
z_1=(g')^{-1}_1(g'(z_2))\}, \\
&\mathcal{R}_2:=\{(z_1,z_2)\in \mathcal{N}:z_2\in [x_0,x_{1}),z_1= (g')^{-1}_3(g'(z_2))\}, \\
&\mathcal{R}_3:=\{(z_1,z_2)\in \mathcal{N}: z_2=x_1,z_1\in\{ (g')^{-1}_1(g'(z_2)), (g')^{-1}_3(g'(z_2))\} \},
\\
&\mathcal{R}_4:=\{(z_1,z_2)\in \mathcal{N}: z_2\in (x_1,\infty),z_1= (g')^{-1}_1(g'(z_2)) \}.
\end{align*}
The forms of $(\mathcal{R}_{i})_{1\leq i\leq 4}$ motivate us to define $\omega_{1}$ and $\omega_{2}$ through \eqref{addnewomega1} and \eqref{addnewomega2}.
Then, we have
\begin{align*}
&\mathcal{R}_1=
\begin{cases}
 \{(\tilde{z}_1,\tilde{z}_2)\},\,\text{ if } \beta\leq \omega_{1}(x_{2}),\\
 \emptyset, \text{ if else}.
\end{cases} \\
&\cup_{i=2}^{4}\mathcal{R}_i=\begin{cases}
 \{(\bar{z}_1,\bar{z}_2)\},\,\text{ if } \beta<\omega_{2}(x_1),\\
 \{(\hat{z}_1:= (g')^{-1}_1(g'(\omega_{2}^{-1}(\beta))),\hat{z}_2:=\omega_{2}^{-1}(\beta))\},\,\text{ if } \beta>\omega_{2}(x_1),
 \\
 \{(\bar{z}_1,\bar{z}_2),(\hat{z}_1,\hat{z}_2)\},\,\text{ if } \beta=\omega_{2}(x_1).
\end{cases}
\end{align*}
Therefore, we have the following observations:
\begin{itemize}
\item If $g^{\prime}(\omega_1^{-1}(\beta))<g^{\prime}(\omega_{2}^{-1}(\beta))$ and $\beta<\omega_2(x_1)$, then
$\emptyset\neq \mathcal{M}_{\zeta}\subseteq \{(\tilde{z}_{1},\tilde{z}_{2}),(\bar{z}_{1},\bar{z}_{2})\}$, and
\begin{align}
\zeta(\tilde{z}_{1},\tilde{z}_{2})=1/g'(\omega_{1}^{-1}(\beta))
 >
 1/g'(\omega_{2}^{-1}(\beta))
 =\zeta(\bar{z}_{1},\bar{z}_{2}).
 \nonumber
\end{align}
Hence, $\mathcal{M}_{\zeta}=\{(\tilde{z}_{1},\tilde{z}_{2})\}$.
\item If $g^{\prime}(\omega_1^{-1}(\beta))> g^{\prime}(\omega_{2}^{-1}(\beta))$ and $\beta< \omega_{2}(x_1)$, then $\emptyset\neq \mathcal{M}_{\zeta}\subseteq \{(\tilde{z}_{1},\tilde{z}_{2}),(\bar{z}_{1},\bar{z}_{2})\}$, and
\begin{align}
\zeta(\tilde{z}_{1},\tilde{z}_{2})=1/g'(\omega_{1}^{-1}(\beta))
 < 
 1/g'(\omega_{2}^{-1}(\beta))
 =\zeta(\bar{z}_{1},\bar{z}_{2}).
 \nonumber
\end{align}
Hence, $\mathcal{M}_{\zeta}=\{(\bar{z}_{1},\bar{z}_{2})\}$.
\item 
If $g^{\prime}(\omega_1^{-1}(\beta))= g^{\prime}(\omega_{2}^{-1}(\beta))$ and $\beta< \omega_{2}(x_1)$, then $\emptyset\neq \mathcal{M}_{\zeta}\subseteq \{(\tilde{z}_{1},\tilde{z}_{2}),(\bar{z}_{1},\bar{z}_{2})\}$, 
and
\begin{align}
\zeta(\tilde{z}_{1},\tilde{z}_{2})=1/g'(\omega_{1}^{-1}(\beta))
 = 
 1/g'(\omega_{2}^{-1}(\beta))
 =\zeta(\bar{z}_{1},\bar{z}_{2}).
 \nonumber
\end{align}
Hence, $\mathcal{M}_{\zeta}=\{(\tilde{z}_{1},\tilde{z}_{2}),(\bar{z}_{1},\bar{z}_{2})\}$.

\item $g^{\prime}(\omega_1^{-1}(\beta))> g^{\prime}(\omega_{2}^{-1}(\beta))$ and $\beta=\omega_2(x_1)$ can not hold true simultaneously.
\item If $g^{\prime}(\omega_1^{-1}(\beta))= g^{\prime}(\omega_{2}^{-1}(\beta))$ and $\beta=\omega_2(x_1)$, then $x_2\leq x_1$, $\emptyset\neq \mathcal{M}_{\zeta}\subseteq \{(\tilde{z}_{1},\tilde{z}_{2}),(\bar{z}_{1},\bar{z}_{2}), (\hat{z}_1,\hat{z}_2)\}$.
\begin{itemize}
 \item[$\bullet$] If $x_{2}<x_{1}$, then $\hat{z}_1=\tilde{z}_1$ and $\hat{z}_2=\tilde{z}_2$ (since $\omega_1\equiv\omega_2$ on $[x_1,\infty)$), and 
\begin{align}
\zeta(\bar{z}_1,\bar{z}_2)=1/g'(\omega^{-1}_2(\beta))=1/g'(\omega^{-1}_1(\beta))=\zeta(\tilde{z}_1,\tilde{z}_2).\nonumber
\end{align}
Hence, $\mathcal{M}_{\zeta}=\{(\tilde{z}_1,\tilde{z}_2),(\bar{z}_1,\bar{z}_2)\}.$
 \item[$\bullet$] If $x_{2}=x_{1}$, then $\beta=\omega_{1}(x_{2})$ and $(\tilde{z}_1,\tilde{z}_2)$ is understood as a two-point set $\{(\tilde{z}_1,(g^{\prime})_{2}^{-1}( g^{\prime}(x_2))),(\tilde{z}_1,x_2)\}$
 with $\tilde{z}_1=(g^{\prime})_{1}^{-1}(g^{\prime}(\omega_{1}^{-1}(\beta)))=(g^{\prime})_{1}^{-1}(g^{\prime}(\omega_{2}^{-1}(\beta)))=\hat{z}_1.$ In addition, 
 \begin{align}
\zeta(\bar{z}_1,\bar{z}_2)=1/g'(x_{2})=\zeta(\tilde{z}_1,(g^{\prime})_{2}^{-1}( g^{\prime}(x_2)))=\zeta(\tilde{z}_1,x_2).\nonumber
\end{align}
Hence, $\mathcal{M}_{\zeta}=\{(\tilde{z}_1,\tilde{z}_2),(\bar{z}_1,\bar{z}_2)\}=\{(\tilde{z}_1,(g^{\prime})_{2}^{-1}( g^{\prime}(x_2))),(\tilde{z}_1,x_2),(\bar{z}_1,\bar{z}_2)\}.$
\end{itemize}

\item If $g^{\prime}(\omega_1^{-1}(\beta))< g^{\prime}(\omega_{2}^{-1}(\beta))$ and $\beta=\omega_2(x_1)$, then $x_1<x_2$, $\emptyset\neq\mathcal{M}_{\zeta}\subseteq\{(\tilde{z}_1,\tilde{z}_2),(\bar{z}_1,\bar{z}_2),(\hat{z}_1,\hat{z}_2)\}$, and 
\begin{align}
\zeta(\tilde{z}_1,\tilde{z}_2)=1/g'(\omega^{-1}_1(\beta))>1/g'(\omega^{-1}_2(\beta))=\zeta(\bar{z}_1,\bar{z}_2)=\zeta(\hat{z}_1,\hat{z}_2).\nonumber
\end{align}
Hence, $\mathcal{M}_{\zeta}=\{(\tilde{z}_1,\tilde{z}_2)\}$.

\item 
If either of the following cases is the case 
\begin{itemize}
 \item[$\bullet$] $\beta> \omega_{2}(x_1)$ and $x_{1}\geq x_{2}$,
 \item[$\bullet$] $\beta>\omega_{2}(x_1)$, $x_{1}< x_{2}$, and $\hat{z}_{2}\geq x_{2}$,
\end{itemize}
then $g'(\omega_{1}^{-1}(\beta))=g'(\omega_{2}^{-1}(\beta))$ (since $\omega_{1}\equiv \omega_{2}$ on $[x_{1}\vee x_2,\infty)$), and then $$\hat{z}_{1}=\tilde{z}_{1},\hat{z}_{2}=\tilde{z}_{2},$$
i.e., $\cup_{i=1}^{4}\mathcal{R}_i=\{(\tilde{z}_{1},\tilde{z}_{2})\}$. Hence,
$\mathcal{M}_{\zeta}=\{(\tilde{z}_{1},\tilde{z}_{2})\}$.
\item If $\beta>\omega_{2}(x_1)$, $x_{1}< x_{2}$, and $\hat{z}_{2}<x_{2}$, then
\begin{align}
 &\omega_{1}(x_{2})=\omega_{2}(x_{2})\geq \omega_{2}(\hat{z}_2)=\beta> \omega_2(x_1),\nonumber
\end{align}
and
\begin{align}
\beta=\omega_{2}(\hat{z}_2)=&\Big(\int_{\hat{z}_1}^{(g')_{2}^{-1}(g'(\hat{z}_2))}+\int_{(g')_{2}^{-1}(g'(\hat{z}_2))}^{\hat{z}_{2}}\Big)\Big(1-\frac{g^{\prime}(s)}{g^{\prime}(\hat{z}_2)}\Big)\ds\nonumber\\
< &
\int_{\hat{z}_1}^{(g')_{2}^{-1}(g'(\hat{z}_2))}\Big(1-\frac{g^{\prime}(s)}{g^{\prime}(\hat{z}_2)}\Big)\ds \quad (\text{since } \hat{z}_{2}<x_{2})\nonumber\\
=& \omega_{1}((g')_{2}^{-1}(g'(\hat{z}_2))),\nonumber
\end{align}
which implies that
\begin{align}
 \omega_{1}^{-1}(\beta) < (g')_{2}^{-1}(g'(\hat{z}_2)),\nonumber
\end{align}
that is
$g'( \omega_{1}^{-1}(\beta))<g'(\hat{z}_{2})=g'(\omega^{-1}_2(\beta))$. Therefore
\begin{align}
 \zeta(\tilde{z}_1,\tilde{z}_{2})=1/g'(\tilde{z}_2)>1/g'(\hat{z}_{2})=\zeta(\hat{z}_1,\hat{z}_{2}).\nonumber
\end{align}
Hence,
$\mathcal{M}_{\zeta}=\{(\tilde{z}_{1},\tilde{z}_{2})\}$.
\end{itemize}
The proof is complete.}

\subsection{Proof of Proposition \ref{prop.4.1}}
\label{sec5.5}

Recall that the function $g^{\prime\prime}(x)$ is given by
\begin{align}
\label{g''.v3}
g^{\prime\prime}(x)
&= 
\left[(1-c_{+})g^-(0)(\theta_2^+)^2e^{\theta_2^+{(x-a)}}-(g^+(0)-c_+g^-(0))(\theta_1^+)^2e^{-\theta_1^+{(x-a)}} \right]\mathbf{1}_{\{x>a\}}
\nonumber\\
&
+\left[(g^-(0)-c_{-}g^+(0))(\theta_2^-)^2e^{\theta_2^-{(x-a)}}-(1-c_-)g^+(0)(\theta_1^-)^2e^{-\theta_1^-{(x-a)}}\right]\mathbf{1}_{\{x<a\}}.
\end{align}
We first discuss the sign of $g^{\prime\prime}(x)$ for $x\in(0,a)$.
When $x\in (0,a)$, using \eqref{scale.fun.} and \eqref{g''.v3} we have
\begin{align}\label{g''.v4}
g^{\prime\prime}(x)
&=
(g^-(0)-c_{-}g^+(0))(\theta_2^-)^2e^{\theta_2^-{(x-a)}}-(1-c_-)g^+(0)(\theta_1^-)^2e^{-\theta_1^-{(x-a)}}
\nonumber\\
&=
(1-c_-)e^{(\theta_1^--\theta_2^-)a}\left[(\theta_2^-)^2e^{\theta_2^-x}-(\theta_1^-)^2e^{-\theta_1^-x}\right].
\end{align}
Since $\theta^-_2\geq \theta^-_1$, one sees that the function
$$\mathbb{R}_{+}\ni x\mapsto (1-c_-)e^{(\theta_1^--\theta_2^-)a}\left[(\theta_2^-)^2e^{\theta_2^-x}-(\theta_1^-)^2e^{-\theta_1^-x}\right],$$
is non-negative at $x=0$ and is strictly increasing on $(0,a)$. Hence, one has $g^{\prime\prime}(x)>0$ on $(0,a]$.
In the sequel, we discuss the sign of $g^{\prime\prime}(x)$ for $x>a$.
When $x>a$, by \eqref{scale.fun.} and \eqref{g''.v3}, it holds that
\begin{align}\label{g''.v5}
g^{\prime\prime}(x)
&= 
(1-c_{+})g^-(0)(\theta_2^+)^2e^{\theta_2^+{(x-a)}}-(g^+(0)-c_+g^-(0))(\theta_1^+)^2e^{-\theta_1^+{(x-a)}},
\end{align}
and
\begin{align}\label{g''(a)3}
g^{\prime\prime}(a)
&= 
(1-c_{+})g^-(0)(\theta_2^+)^2-(g^+(0)-c_+g^-(0))(\theta_1^+)^2
\nonumber\\
&\geq
[(1-c_{+})g^-(0)-(g^+(0)-c_+g^-(0))](\theta_1^+)^2
\nonumber\\
&= 
[g^-(0)-g^+(0)](\theta_1^+)^2>0.
\end{align}
It is seen that the function $g^{\prime\prime}(x)$ is strictly increasing on $(a,\infty)$, which together with \eqref{g''(a)3} implies that $g^{\prime\prime}(x)>0$ on $(a,\infty)$. The proof is complete.

\subsection{Proof of Proposition \ref{prop.4.3}}
\label{sec5.6}

Recall that the the function $g^{\prime\prime}(x)$ is given by \eqref{g''.v3}.
\begin{itemize}
\item [(1)]
We first discuss the sign of $g^{\prime\prime}(x)$ for $x\in(0,a)$.
When $x\in (0,a)$, we have \eqref{g''.v4} holds.
It is seen that the function
$$\mathbb{R}_{+}\ni x\mapsto (1-c_-)e^{(\theta_1^--\theta_2^-)a}\left[(\theta_2^-)^2e^{\theta_2^-x}-(\theta_1^-)^2e^{-\theta_1^-x}\right],$$
is strictly increasing with its unique zero $a_1>0$ given by \eqref{def.a}, where we have used the fact that $\theta_1^->\theta_2^-$.
\begin{itemize}
\item [(1-1)]If $0<a\leq a_{1}$, one has $g^{\prime\prime}(x)<0$ on $(0,a)$.
\item[(1-2)] If $a>a_{1}$, one has $g^{\prime\prime}(x)<0$ on $[0,a_{1})$ and $g^{\prime\prime}(x)>0$ on $(a_{1},a]$.
\end{itemize}
\item[(2)] In the sequel, we discuss the sign of $g^{\prime\prime}(x)$ for $x>a$.
When $x>a$, we have \eqref{g''.v5} 
and
\begin{align}\label{g''(a)4}
g^{\prime\prime}(a)
&= 
(1-c_{+})g^-(0)(\theta_2^+)^2-(g^+(0)-c_+g^-(0))(\theta_1^+)^2
\nonumber\\
&\geq 
[(1-c_{+})g^-(0)-(g^+(0)-c_+g^-(0))](\theta_1^+)^2
\nonumber\\
&= 
[g^-(0)-g^+(0)](\theta_1^+)^2>0,
\end{align}
where we have used the facts that $\theta_1^+\leq \theta_2^+$ and $g^-(0)>g^+(0)>0$.
It is seen that the function $g^{\prime\prime}(x)$ is strictly increasing on $(a,\infty)$, which together with \eqref{g''(a)4} implies that $g^{\prime\prime}(x)>0$ on $(a,\infty)$.
\end{itemize}
Putting together all the above arguments leads to the desired result of Proposition \ref{prop.4.3}.

\subsection{Proof of Proposition \ref{prop.4.5}}\label{sec5.7}
Using similar arguments as those in the proof of Proposition \ref{prop.4.3}, one has the following observations. 
\begin{itemize}
\item [(1)]
We have $g^{\prime\prime}(x)>0$ on $(0,a)$.
\item [(2)]In the sequel, we discuss the sign of $g^{\prime\prime}(x)$ for $x>a$.

\begin{itemize}
\item [(2-1)] Suppose $c_+>0$, in which case we have the following conclusions.
\begin{itemize}
\item[(2-1-1)] If $c_+>0$ and $a\geq a_{2}$, by \eqref{g''.v5} we have $g^{\prime\prime}(x)>0$ for all $x>a$.
\item [(2-1-2)] If $c_+>0$ and $0<a<a_{3}\wedge a_{2}$, then $x_0> a$, $g^{\prime\prime}(x)<0$ on $(a,x_0)$, and, $g^{\prime\prime}(x)>0$ on $(x_0,\infty)$.
\item [(2-1-3)] If $c_+>0$ and $0<a_{3}\leq a\leq a_{2}$, then $g^{\prime\prime}(x)>0$ for all $x>a$. 
\end{itemize}
\item [(2-2)] Suppose $c_+\leq0$, in which case
the function $g^{\prime\prime}(x)$ is strictly increasing with its unique zero $x_0$ given by \eqref{def.x0}. We have the following conclusions.
\begin{itemize}
\item[(2-2-1)] If $\Theta\leq 0$, $c_+\leq0$ and $a>0$, then $x_0>a$, $g^{\prime\prime}(x)<0$ on $(a,x_0)$, and, $g^{\prime\prime}(x)>0$ on $(x_0,\infty)$. 
\item[(2-2-2)] If
$\Theta>0$, 
$c_+\leq0$ and $a\in(0,a_{3})$, then $x_0>a$, $g^{\prime\prime}(x)<0$ on $(a,x_0)$, and, $g^{\prime\prime}(x)>0$ on $(x_0,\infty)$. 
\end{itemize}

\begin{itemize}
\item[(2-2-3)] If
$\Theta>0$, 
$c_+\leq0$ and $a\in[a_{3},\infty)$, then $g^{\prime\prime}(x)\geq 0$ for all $x>a$. 
\end{itemize}
\end{itemize}
\end{itemize}
Putting together all the above arguments leads to the desired result of Proposition \ref{prop.4.5}.

\subsection{Proof of Proposition \ref{prop.2.4.add}}
\label{sec5.8}

We exclusively present the proof for case (iv) of Subsection \ref{sec3.1}, as proofs for the remaining cases of Subsections \ref{sec3.1}-\ref{sec3.4} adhere to patterns that are either similar or simpler in nature.
Recalling that $a_{1}=(g^{\prime})_{2}^{-1}g^{\prime}(x_{2})$ and $x_{0}=x_{1}$ can not happen simultaneously and then using Theorem \ref{thm2.1}, one obtains
\begin{itemize}
\item If $a_{1}<(g^{\prime})_{2}^{-1}g^{\prime}(x_{2})$ and $x_{0}<x_{1}$, then, for sufficiently small $\beta$, we have
\begin{equation*}
\mathcal{M}_{\zeta}=
\begin{cases}
\{(\tilde{z}_1,\tilde{z}_2)\}, & \text{if $g^{\prime}(\omega_1^{-1}(\beta))<g^{\prime}(\omega_{2}^{-1}(\beta))$, 
}\\
\{(\Bar{z}_1,\Bar{z}_1)\}, & \text{if $g^{\prime}(\omega_1^{-1}(\beta))> g^{\prime}(\omega_{2}^{-1}(\beta))$,}
\\
\{(\tilde{z}_1,\tilde{z}_2),(\Bar{z}_1,\Bar{z}_1)\}, & \text{if $g^{\prime}(\omega_1^{-1}(\beta))=g^{\prime}(\omega_{2}^{-1}(\beta))$, }
\end{cases}
\end{equation*}
where $(\tilde{z}_1,\tilde{z}_2)$ and $(\Bar{z}_1,\Bar{z}_1)$ are defined in Theorem \ref{thm2.1},
and
\begin{align}
\label{addnew204}
\lim_{\beta\rightarrow0+}\tilde{z}_{2}=\lim_{\beta\rightarrow0+}\tilde{z}_{1}
=
a_1,\quad 
\lim_{\beta\rightarrow0+}\Bar{z}_{2}=\lim_{\beta\rightarrow0+}\Bar{z}_{1}
=
x_0.
\end{align}
\item If $a_{1}=(g^{\prime})_{2}^{-1}g^{\prime}(x_{2})$ and $x_{0}<x_{1}$, then, for sufficiently small $\beta$, we have
\begin{equation*}
\mathcal{M}_{\zeta}=
\{((g^{\prime})_{3}^{-1}(g^{\prime}(\omega_{2}^{-1}(\beta))),\omega_{2}^{-1}(\beta))\}, 
\end{equation*}
with the second equality of \eqref{addnew204} holds true.
\item If $a_{1}<(g^{\prime})_{2}^{-1}g^{\prime}(x_{2})$ and $x_{0}=x_{1}$, then, for sufficiently small $\beta$, we have
\begin{equation*}
\mathcal{M}_{\zeta}=
\{((g^{\prime})_{1}^{-1}(g^{\prime}(\omega_1^{-1}(\beta))),\omega_1^{-1}(\beta))\},
\end{equation*}
with \eqref{addnew204} holds true.
\end{itemize}
Therefore, $\lim_{\beta\rightarrow0+}\max_{(z_1,z_2)\in\mathcal{M}_{\zeta}}(z_2-z_1)=0$ as desired.

\subsection{Proof of Proposition \ref{prop.2.5.add}}
\label{sec5.9}

We exclusively present the proof for case (iv) of Subsection \ref{sec3.1}, as proofs for the remaining cases of Subsections
\ref{sec3.1}-\ref{sec3.4} adhere to patterns that are either similar or simpler in nature.
Recall that $\omega_{1}$ coincides with $\omega_{2}$ on $[x_{1}\vee x_{2},\infty)$. By Theorem \ref{thm2.1}, one has
\begin{itemize}
\item[(a)] If $x_{2}\leq x_{1}$ and $g^{\prime}(a_{1})<g^{\prime}(x_{0})$, then
\begin{align}
z_{2}-z_{1}=
\begin{cases}
\omega_{2}^{-1}(\beta)-(g^{\prime})_{3}^{-1}(g^{\prime}(\omega_{2}^{-1}(\beta))), & \beta\in [\omega_{1}(x_{2}),\omega_{2}(x_{1})),\\
\omega_{2}^{-1}(\beta)-(g^{\prime})_{1}^{-1}(g^{\prime}(\omega_{2}^{-1}(\beta))),&\beta\in[\omega_{2}(x_{1}),\infty),\\
\omega_{1}^{-1}(\beta)-(g^{\prime})_{1}^{-1}(g^{\prime}(\omega_{1}^{-1}(\beta))),& \beta\in[0,\omega_{1}((g^{\prime})_{2}^{-1}(g^{\prime}(x_{0})))),
\end{cases}
\end{align}
which implies that $z_{2}-z_{1}$ is increasing in $\beta$ on $[0,\omega_{1}((g^{\prime})_{2}^{-1}(g^{\prime}(x_{0}))))\cup[\omega_{2}(x_{2}),\infty)$.

\item[(b)] If $x_{2}> x_{1}$ and $g^{\prime}(a_{1})<g^{\prime}(x_{0})$, then
\begin{align}
z_{2}-z_{1}=
\omega_{1}^{-1}(\beta)-(g^{\prime})_{1}^{-1}(g^{\prime}(\omega_{1}^{-1}(\beta))), \quad \beta\in [0,\omega_{1}((g^{\prime})_{2}^{-1}(g^{\prime}(x_{0}))))\cup[\omega_{1}(x_{1}),\infty),
\end{align}
which implies that $z_{2}-z_{1}$ is increasing in $\beta$ on $[0,\omega_{1}((g^{\prime})_{2}^{-1}(g^{\prime}(x_{0}))))\cup[\omega_{1}(x_{1}),\infty)$.

\item[(c)] If $x_{2}\leq x_{1}$ and $g^{\prime}(a_{1})\geq g^{\prime}(x_{0})$, then
\begin{align}
z_{2}-z_{1}=
\begin{cases}
\omega_{2}^{-1}(\beta)-(g^{\prime})_{3}^{-1}(g^{\prime}(\omega_{2}^{-1}(\beta))), & \beta\in [0,\omega_{2}(a_{5})]\cup[\omega_{2}(x_{2}),\omega_{2}(x_{1})),\\
\omega_{2}^{-1}(\beta)-(g^{\prime})_{1}^{-1}(g^{\prime}(\omega_{2}^{-1}(\beta))),&\beta\in[\omega_{2}(x_{1}),\infty),
\end{cases}
\end{align}
which implies that $z_{2}-z_{1}$ is increasing in $\beta$ on $[0,\omega_{2}(a_{5})\cup[\omega_{2}(x_{2}),\infty)$.

\item[(d)] If $x_{2}> x_{1}$ and $g^{\prime}(a_{1})\geq g^{\prime}(x_{0})$, then
\begin{align}
z_{2}-z_{1}=
\begin{cases}
\omega_{2}^{-1}(\beta)-(g^{\prime})_{3}^{-1}(g^{\prime}(\omega_{2}^{-1}(\beta))), & \beta\in [0,\omega_{2}(a_{5})],
\\
\omega_{1}^{-1}(\beta)-(g^{\prime})_{1}^{-1}(g^{\prime}(\omega_{1}^{-1}(\beta))), & \beta\in [\omega_{2}(x_{1}),\infty),
\end{cases}
\end{align}
which implies that $z_{2}-z_{1}$ is increasing in $\beta$ on $[0,\omega_{2}(a_{5})]\cup[\omega_{2}(x_{1}),\infty)$.
\end{itemize}

In the sequel, we only check the increasing property of $\max _{(z_1,z_2)\in\mathcal{M}_{\zeta}}(z_2-z_1)$ and $\min_{(z_1,z_2)\in\mathcal{M}_{\zeta}}(z_2-z_1)$ (with respect to $\beta$) over the interval $[\omega_{1}((g^{\prime})_{2}^{-1}(g^{\prime}(x_{0}))), \omega_{1}(x_{2})]$ under the above Case (a), because the corresponding proofs needed for the above Cases (b)-(d) are quite similar.

When $\beta\in[\omega_1((g')^{-1}_2(g'(x_0))),\omega_{1}(x_2)]$, by Theorem \ref{thm2.1}, for $(z_1,z_2)\in\mathcal{M}_{\zeta}$, we have
\begin{equation}\label{eq:prop2.5:a}
z_2-z_1=
\begin{cases}
\omega_1^{-1}(\beta)-(g^{\prime})_{1}^{-1}(g^{\prime}(\omega_1^{-1}(\beta))), & \text{if $g^{\prime}(\omega_1^{-1}(\beta))<g^{\prime}(\omega_{2}^{-1}(\beta))$,
}
\\
\omega_{2}^{-1}(\beta)-(g^{\prime})_{3}^{-1}(g^{\prime}(\omega_{2}^{-1}(\beta))), & \text{if $g^{\prime}(\omega_1^{-1}(\beta)) > g^{\prime}(\omega_{2}^{-1}(\beta))$,
}
\\
\omega_{2}^{-1}(\beta)-(g^{\prime})_{3}^{-1}(g^{\prime}(\omega_{2}^{-1}(\beta)))\text{ or } 
\\
\omega_1^{-1}(\beta)-(g^{\prime})_{1}^{-1}(g^{\prime}(\omega_1^{-1}(\beta))), & \text{if $g^{\prime}(\omega_1^{-1}(\beta))=g^{\prime}(\omega_{2}^{-1}(\beta))$,
}
\end{cases}
\end{equation}
Denote three sets as
$$\Lambda^{+(-,0)}:=\{\beta\in[\omega_1((g')^{-1}_2(g'(x_0))),\omega_{1}(x_2)];\,g'(\omega_1^{-1}(\beta))>(<,=)g'(\omega^{-1}_2(\beta))\}.$$
Hence, it holds that $[\omega_1((g')^{-1}_2(g'(x_0))),\omega_{1}(x_2)]=\Lambda^+\cup\Lambda^-\cup\Lambda^0.$ 
We next demonstrate that $\Lambda^{\pm}$ can each be expressed as a countable union of disjoint open intervals, and, $\Lambda^0$ can be expressed as a countable union of disjoint closed intervals. For any $\beta_0\in\Lambda^+$, we have $g'(\omega^{-1}_1(\beta_0))>g'(\omega^{-1}_2(\beta_0))$. According to the sign-preserving property of continuous functions, there exists a constant $\epsilon>0$ such that 
$(\beta_0-\epsilon,\beta_0+\epsilon)\subseteq [\omega_1((g')^{-1}_2(g'(x_0))),\omega_2(x_2)]$ and for any $\beta\in(\beta_0-\epsilon,\beta_0+\epsilon)$, it holds that $g'(\omega^{-1}_1(\beta))>g'(\omega^{-1}_2(\beta))$.
Denote 
\begin{align}\label{over.ep}
\overline{\epsilon}(\beta_0)
&:= 
\sup\{{\overline{\epsilon}>0}; \text{ it holds that } (\beta_0-\epsilon,\beta_0+\overline{\epsilon})\subseteq [\omega_1((g')^{-1}_2(g'(x_0))),\omega_{1}(x_2)], 
\nonumber\\
&\hspace{0.5cm}
\text{and, } g'(\omega^{-1}_1(\beta))>g'(\omega^{-1}_2(\beta))
\text{ for any }\beta\in(\beta_0-\epsilon,\beta_0+\overline{\epsilon})\},
\end{align}
and
\begin{align}\label{under.ep}
\underline{\epsilon}(\beta_0)
&:= 
\sup\{{\underline{\epsilon}>0}; \text{ it holds that }(\beta_0-\underline{\epsilon},\beta_0+\epsilon)\subseteq [\omega_1((g')^{-1}_2(g'(x_0))),\omega_{1}(x_2)], 
\nonumber\\
&\hspace{0.5cm}
\text{and, }g'(\omega^{-1}_1(\beta))>g'(\omega^{-1}_2(\beta)) \text{ for any }\beta\in(\beta_0-\underline{\epsilon},\beta_0+\epsilon)\}.
\end{align}
With the continuity of $g'$ on $\mathbb{R}_+$ and $(\omega^{-1}_i)_{i=1,2}$ on $[\omega_1((g')^{-1}_2(g'(x_0))),\omega_{1}(x_2)]$, we derive that $g'(\omega^{-1}_1(\beta_0-\underline{\epsilon}(\beta_0)))=g'(\omega^{-1}_2(\beta_0-\underline{\epsilon}(\beta_0)))$ and $g'(\omega^{-1}_1(\beta_0+\overline{\epsilon}(\beta_0)))=g'(\omega^{-1}_2(\beta_0+\overline{\epsilon}(\beta_0)))$. Additionally, it holds that $\beta_0-\underline{\epsilon}(\beta_0)\leq \beta_0-\epsilon<\beta_0<\beta_0+\epsilon\leq \beta_0+\overline{\epsilon}(\beta_0),$
and $(\beta_0-\underline{\epsilon}(\beta_0),\beta_0+\overline{\epsilon}(\beta_0))$ is a non-empty subset of $\Lambda^+$. In this vein, for any $\beta_0\in\Lambda^+$, one can find a non-empty open subset of $\Lambda^+$ that contains $\beta_0$.
In addition, according to the 
denseness of rational numbers in the set of real numbers, there is at least one rational number in $(\beta_0-\underline{\epsilon}(\beta_0),\beta_0+\overline{\epsilon}(\beta_0))$. However, rational numbers are countable, implying that $\Lambda^+$ is a countable union of disjoint open intervals. Using similar arguments, one can obtain that $\Lambda^-$ is also a countable union of disjoint open intervals. Furthermore, it follows from $[\omega_1((g')^{-1}_2(g'(x_0))),\omega_{1}(x_2)]=\Lambda^+\cup\Lambda^-\cup\Lambda^0$ that $\Lambda^0$ is a countable union of disjoint closed intervals. More specifically, the three sets $\Lambda^+$, $\Lambda^-$ and $\Lambda^0$ can be expressed as
\begin{align}
\Lambda^{+}=\bigcup_{j=1}^{\infty}(r_{j}^{-},r_{j}^{+}),\quad \Lambda^{-}=\bigcup_{j=1}^{\infty}(s_{j}^{-},s_{j}^{+}),\quad \Lambda^{0}=\bigcup_{j=1}^{\infty}[t_{j}^{-},t_{j}^{+}],\nonumber
\end{align}
where $(r_{j}^{-},r_{j}^{+},s_{j}^{-},s_{j}^{+}, t_{j}^{-},t_{j}^{+})_{j\geq 1}$ are real numbers satisfying $r_{j}^{+}\leq r_{j+1}^{-},s_{j}^{+}\leq s_{j+1}^{-}$ and $t_{j}^{+}\leq t_{j+1}^{-}$ for any $j\geq 1$. 

We next demonstrate that $\max _{(z_1,z_2)\in\mathcal{M}_{\zeta}}(z_2-z_1)$ and $\min_{(z_1,z_2)\in\mathcal{M}_{\zeta}}(z_2-z_1)$
are indeed increasing functions of $\beta$ on $[\omega_1((g')^{-1}_2(g'(x_0))),{\omega_{1}}(x_2)]$ under Case (a). Actually, by definition, one knows that if 
$\beta=\omega_1((g')^{-1}_2(g'(x_0)))$, then $g'(\omega^{-1}_1(\beta))=g'(x_0)<g'(\omega^{-1}_2(\beta))$, which yields $\omega_1((g')^{-1}_2(g'(x_0)))\in\Lambda^-$, implying that $\Lambda^-\neq \emptyset$.
Similarly, if $\beta=\omega_1(x_2)$, then $g'(\omega^{-1}_1(\beta))=g'(x_2)>g'(\omega^{-1}_2(\beta))$, which yields $\omega_1(x_2)\in\Lambda^+$, implying that $\Lambda^+\neq \emptyset.$ Hence, one has $\Lambda^0\neq\emptyset.$
Then by \eqref{eq:prop2.5:a}, a point $(z_1,z_2)\in\mathcal{M}_{\zeta}$ satisfies that 
$$z_2-z_1=\omega_{2}^{-1}(\beta)-(g^{\prime})_{3}^{-1}(g^{\prime}(\omega_{2}^{-1}(\beta))),$$
if $\beta\in(r^-_j,r^+_j)$ for some $j\geq1$;
and,
\begin{align}\label{eq:prop2.5:c}
z_2-z_1=\omega_{2}^{-1}(\beta)-(g^{\prime})_{3}^{-1}(g^{\prime}(\omega_{2}^{-1}(\beta))),
\end{align}
or
\begin{align}\label{eq:prop2.5:f}
z_2-z_1=\omega_1^{-1}(\beta)-(g^{\prime})_{1}^{-1}(g^{\prime}(\omega_1^{-1}(\beta))).
\end{align}
if $\beta\in[t^-_j,t^+_j]$ for some $j\geq1;$
and,
$$z_2-z_1=\omega_1^{-1}(\beta)-(g^{\prime})_{1}^{-1}(g^{\prime}(\omega_1^{-1}(\beta))),$$ if $\beta\in(s^-_j,s^+_j)$ for some $j\geq1$. 
Let us introduce two functions as
$$\varphi_1(u):=\omega_1((g')^{-1}_2(u))\quad\text{and}\quad\varphi_2(u):=\omega_2((g')^{-1}_4(u)),\text{ for }u\in[g'(x_0),g'(x_2)].$$
It can be checked that the functions $\varphi_1$ and $\varphi_2$ are strictly increasing on $[g'(x_0),g'(x_2)]$, {and inherit differentiability from $\omega_1$, $\omega_2$, $(g')_2^{-1}$ and $(g')_4^{-1}$.}
In addition, it holds that 
\begin{align}
[\omega_1((g')^{-1}_2(g'(x_0))),\omega_{1}(x_2)]&\subseteq [\varphi_1(g'(x_0)),\varphi_1(g'(x_2))],\nonumber\\
[\omega_1((g')^{-1}_2(g'(x_0))),\omega_{1}(x_2)]&\subseteq [\varphi_2(g'(x_0)),\varphi_2(g'(x_2))].\nonumber
\end{align}

If there is some $i,j,k\geq1$ such that $r^-_i<r^+_i=t^-_j=t^+_j=s^-_k<s^+_k$, 
it holds that
\begin{align}
\label{addnewa.33}
 g^{\prime}(\omega^{-1}_1(t^-_j))=g^{\prime}(\omega^{-1}_2(t^-_j)),
\end{align}
and, there exists some $u_0\in[g'(x_0),g'(x_2)]$ such that 
\begin{align}
\label{addnewa.34}
 \varphi_1(u_0)=t^-_j\,\,(\iff u_0=\varphi_1^{-1}(t^-_j)=g^{\prime}(\omega^{-1}_1(t^-_j))).
\end{align}
In addition, by the continuity and strict increasing property of $\varphi_{1}$, there exists small $\epsilon>0$ such that
\begin{align}
\varphi_{1}(u)\in (r^-_i,r^+_i), \quad \text{ for } u\in(u_0-\epsilon,u_0)\subseteq[g'(x_0),g'(x_2)],\nonumber
\end{align}
which implies that
\begin{align}
u=g^{\prime}(\omega_1^{-1}(\omega_1((g')^{-1}_2(u))))&=
g^{\prime}(\omega_1^{-1}(\varphi_{1}(u)))
\nonumber\\&
>g^{\prime}(\omega^{-1}_2(\varphi_{1}(u)))
=g^{\prime}(\omega^{-1}_2(\omega_1((g')^{-1}_2(u)))),\,\, \text{ for } u\in(u_0-\epsilon,u_0),\nonumber
\end{align}
which, by the increasing property of $(g^{\prime})^{-1}_{4}$, gives
\begin{align}
 (g^{\prime})^{-1}_{4}(u)>\omega^{-1}_2(\omega_1((g')^{-1}_2(u))),\quad \text{ for } u\in(u_0-\epsilon,u_0),\nonumber
\end{align}
which can be equivalently written as
\begin{align}
\label{addnewa.35}
 \omega_{2}((g^{\prime})^{-1}_{4}(u))>\omega_1((g')^{-1}_2(u)), \,\,\quad \text{ for } u\in(u_0-\epsilon,u_0).
\end{align}
Hence, by \eqref{addnewa.33}-\eqref{addnewa.35}, we have
\begin{align*}
\varphi^{\prime}_1(u_0)&=\lim_{u\uparrow u_0}\frac{\varphi_1(u)-\varphi_1(u_0)}{u-u_0}
=\lim_{u\rightarrow g'(\omega^{-1}_1(t^-_j))-}\frac{\omega_1((g')^{-1}_2(u))-\omega_1((g')^{-1}_2(g'(\omega^{-1}_1(t^-_j))))}{u-g'(\omega^{-1}_1(t^-_j))}
\nonumber\\
&=\lim_{u\rightarrow g'(\omega^{-1}_1(t^-_j))-}\frac{\omega_1((g')^{-1}_2(u))-t^-_j}{u-g'(\omega^{-1}_1(t^-_j))}
\geq \lim_{u\rightarrow g'(\omega^{-1}_2(t^-_j))-}\frac{\omega_2((g')^{-1}_4(u))-t^-_j}{u-g'(\omega^{-1}_2(t^-_j))}=\varphi'_2(u_0).\nonumber
\end{align*}
Then, by $\varphi^{\prime}_1(u_0)\geq\varphi^{\prime}_2(u_0)$, \eqref{addnewomega1}-\eqref{addnewomega2}, $(g')^{-1}_2(u_0)\in (a_1,(g')^{-1}_2(g^{\prime}(x_2)))$, and $(g')^{-1}_4(u_0)\in(x_0,x_{1})$, one obtains 
\begin{align}
\int_{(g')^{-1}_1(u_0)}^{(g')^{-1}_2(u_0)}g'(s)\ds\geq\int_{(g')^{-1}_3(u_0)}^{(g')^{-1}_4(u_0)}g'(s)\ds,\nonumber
\end{align}
which together with $\varphi_1(u_0)=t^-_j$ (that is, $u_0=g^{\prime}(\omega_1^{-1}(t^-_j))=g^{\prime}(\omega_2^{-1}(t^-_j))$) gives
\begin{align}\label{eq:prop2.5:d}
\int_{(g^{\prime})_{1}^{-1}(g^{\prime}(\omega_1^{-1}(t^-_j))}^{\omega_1^{-1}(t^-_j)}g'(s)\ds\geq\int_{(g^{\prime})_{3}^{-1}(g^{\prime}(\omega_2^{-1}(t^-_j))}^{\omega_2^{-1}(t^-_j)}g'(s)\ds.
\end{align}
In addition, since the two points $(z_1,z_2)=((g^{\prime})_{1}^{-1}(g^{\prime}(\omega_1^{-1}(t^-_j)),\omega_1^{-1}(t^-_j))$ (which satisfies \eqref{eq:prop2.5:f}) and $(z_1,z_2)=((g^{\prime})_{3}^{-1}(g^{\prime}(\omega_2^{-1}(t^-_j)),\omega_2^{-1}(t^-_j)) $ (which satisfies \eqref{eq:prop2.5:c}) are subject to $\psi(z_1,z_2)=\beta=t^-_j$, we have
\begin{align}
\int_{(g^{\prime})_{1}^{-1}(g^{\prime}(\omega_1^{-1}(t^-_j))}^{\omega_1^{-1}(t^-_j)}\left(1-\frac{g'(s)}{g'(\omega_1^{-1}(t^-_j))}\right)\ds
&= 
\int_{(g^{\prime})_{3}^{-1}(g^{\prime}(\omega_2^{-1}(t^-_j))}^{\omega_2^{-1}(t^-_j)}\left(1-\frac{g'(s)}{g'(\omega_2^{-1}(t^-_j))}\right)\ds,\nonumber
\end{align}
which, combined with \eqref{eq:prop2.5:d} and the fact that $g'(\omega^{-1}_1(t^-_j))=g'(\omega^{-1}_2(t^-_j))$, yields
\begin{align}
\label{eq:prop2.5:e}
\omega_{2}^{-1}(\beta)-(g^{\prime})_{3}^{-1}(g^{\prime}(\omega_{2}^{-1}(\beta)))\leq\omega_1^{-1}(\beta)-(g^{\prime})_{1}^{-1}(g^{\prime}(\omega_1^{-1}(\beta))),\quad \text{for } \beta=t^-_j.
\end{align}
Hence, the two functions $(r^-_j,s^+_j)\ni\beta\mapsto \max(\min)_{(z_1,z_2)\in\mathcal{M}_{\zeta}}(z_2-z_1)$ are increasing. 

If there is some $i,j,k\geq 1$ such that $r^-_i<r^+_i=t^-_j<t^+_j=s^-_k<s^+_k$, we have $g'(\omega^{-1}_1(\beta))=g'(\omega^{-1}_2(\beta))$ for any $\beta\in[t^-_j,t^+_j]$. Then, by a similar argument that is used in deriving \eqref{addnewa.35}, one can get
\begin{align}
 \omega_{2}((g^{\prime})^{-1}_{4}(u))=\omega_1((g')^{-1}_2(u)), \quad u\in[\varphi_1^{-1}(t^-_j),\varphi_1^{-1}(t^+_j)]. 
\end{align}
This implies that for any $\beta\in(t^-_j,t^+_j)$ and $v=\varphi_1^{-1}(\beta)=g^{\prime}(\omega^{-1}_1(\beta))$,
\begin{align}
\varphi'_1(v)&=\lim_{u\rightarrow g'(\omega^{-1}_1(\beta))}\frac{\omega_1((g')^{-1}_2(u))-\beta}{u-g'(\omega^{-1}_1(\beta))}= \lim_{u\rightarrow g'(\omega^{-1}_2(\beta))}\frac{\omega_2((g')^{-1}_4(u))-\beta}{u-g'(\omega^{-1}_2(\beta))}=\varphi'_2(v),\nonumber 
\end{align}
and, for $u_1:=\varphi^{-1}_1(t^-_j)$,
\begin{align}
\varphi'_1(u_1)&=\lim_{u\rightarrow g'(\omega^{-1}_1(t^-_j))+}\frac{\omega_1((g')^{-1}_2(u))-t^-_j}{u-g'(\omega^{-1}_1(t^-_j))}= \lim_{u\rightarrow g'(\omega^{-1}_2(t^-_j))+}\frac{\omega_2((g')^{-1}_4(u))-t^-_j}{u-g'(\omega^{-1}_2(t^-_j))}=\varphi'_2(u_1),\nonumber
\end{align}
and, for $u_2:=\varphi^{-1}_1(t^+_j)$,
\begin{align}
 \varphi'_1(u_2)&=\lim_{u\rightarrow g'(\omega^{-1}_1(t^+_j))-}\frac{\omega_1((g')^{-1}_2(u))-t^+_j}{u-g'(\omega^{-1}_1(t^+_j))}=\lim_{u\rightarrow g'(\omega^{-1}_2(t^+_j))-}\frac{\omega_2((g')^{-1}_4(u))-t^+_j}{u-g'(\omega^{-1}_2(t^+_j))}=\varphi'_2(u_2),\nonumber
\end{align}
In this way, it holds that $\varphi'_1(u)=\varphi'_2(u)$ for $u\in[\varphi^{-1}_1(t^-_j),\varphi^{-1}_1(t^+_j)]$,
which, combined with arguments similar to those leading to \eqref{eq:prop2.5:e}, imply
$$\omega_{2}^{-1}(\beta)-(g^{\prime})_{3}^{-1}(g^{\prime}(\omega_{2}^{-1}(\beta)))=\omega_1^{-1}(\beta)-(g^{\prime})_{1}^{-1}(g^{\prime}(\omega_1^{-1}(\beta)),\quad \beta\in[t^-_j,t^+_j].$$
Therefore, the two functions 
$\max(\min)_{(z_1,z_2)\in\mathcal{M}_{\zeta}}(z_2-z_1)$ is increasing in $\beta$ on $(r^-_j,s^+_j)$.

If there is some $i,j,k\geq 1$ such that $s_{k}^{-}< s_{k}^{+}=t_{j}^{-}\leq t_{j}^{+}=r_{i}^{-}<r_{i}^{+}$, one can derive the desired results by using similar arguments. In this vein, one obtains the increasing property of $\max(\min)_{(z_1,z_2)\in\mathcal{M}_{\zeta}}(z_2-z_1)$ in $\beta$ over the interval $[\omega_1((g')^{-1}_2(g'(x_0))),\omega_1(x_2)]$ under Case (a). 
The proof is complete.

\subsection{Proof of Proposition \ref{prop.3.4}}
\label{sec5.10}

By \eqref{theta12} and \eqref{def.a}, we have
\begin{align}
\label{limc+}
\lim_{\mu_-\rightarrow\infty}c_+
&=\lim_{\mu_-\rightarrow\infty}\frac{\theta_2^+-\theta_2^-}{\theta_2^++\theta_1^+}=\lim_{\mu_-\rightarrow\infty}\frac{\theta_2^+-\frac{\sqrt{\mu_-^2+2q\sigma_-^2}-\mu_-}{\sigma_-^2}}{\theta_2^++\theta_1^+}=\frac{\theta_2^+}{\theta_2^++\theta_1^+}>0,\\
\label{lima1}
\lim_{\mu_-\rightarrow\infty}a_1
&=
\lim_{\mu_-\rightarrow\infty}\frac{\ln{\frac{\sqrt{\mu_-^2+2q\sigma_-^2}+\mu_-}{\sqrt{\mu_-^2+2q\sigma_-^2}-\mu_-}}}{\frac{\sqrt{\mu_-^2+2q\sigma_-^2}}{\sigma_-^2}}=
\lim_{\mu_-\rightarrow\infty}\frac{2\ln\left(\sqrt{\mu_-^2+2q\sigma_-^2}+\mu_-\right)-\ln\left(2q\sigma_-^2\right)}{\frac{\sqrt{\mu_-^2+2q\sigma_-^2}}{\sigma_-^2}}=0,\\
\label{limamu}
\lim_{\mu_-\rightarrow\infty}a_2
&=
\lim_{\mu_-\rightarrow\infty}\frac{\ln\left(\theta_2^++\frac{\sqrt{\mu_-^2+2q\sigma_-^2}+\mu_-}{\sigma_-^2}\right)-\ln\left(\theta_2^+-\frac{\sqrt{\mu_-^2+2q\sigma_-^2}-\mu_-}{\sigma_-^2}\right)}{\frac{2\sqrt{\mu_-^2+2q\sigma_-^2}}{\sigma_-^2}}=0,
\end{align}
which combined with Proposition \ref{P.0.3} implies that the function $g(x)$ is concave on $(0,a_1)$ and convex on $(a_1,\infty)$ (since only the Case (iii) of Proposition \ref{P.0.3} can happen) if $\mu_-$ is large enough. 
Then, by the proof of Theorem \ref{thm2.1}, for any $(z_1,z_2)\in\mathcal{M}_{\zeta}\neq \emptyset$, we have $0\leq z_1<a_1<a-\beta$ (note that $\lim_{\mu_-\rightarrow\infty}a_1=0$ and $a-\beta>0$) if $\mu_-$ is large.

We next use proof by contradiction to show that, for any admissible strategy $(\Tilde{z}_1,\Tilde{z}_2)$ satisfying $\beta\leq \Tilde{z}_1+\beta<a< \Tilde{z}_2$, it holds that
$(\Tilde{z}_1,\Tilde{z}_2)\notin\mathcal{M}_{\zeta}$ as $\mu_-$ is large enough. Denote by $(z_1,z_2)$ an arbitrary admissible strategy such that $\beta\leq z_1+\beta< z_2\leq a$. Suppose $(\Tilde{z}_1,\Tilde{z}_2)\in\mathcal{M}_{\zeta}$. By Remark \ref{rem2.1} and \eqref{11}, we have
\begin{align}
\label{V/V}
&
\liminf_{\mu_-\rightarrow\infty}\frac{\zeta({z_1},{z_2})}{\zeta({\Tilde{z}_1},{\Tilde{z}_2)}}
=\liminf_{\mu_-\rightarrow\infty}\frac{g'_q(\Tilde{z}_2)(z_2-z_1-\beta)}{g_q(z_2)-g_q(z_1)}
\nonumber\\
=&
\liminf_{\mu_-\rightarrow\infty}\frac{\left[(1-c_{+})g^-_q(0)\theta_2^+e^{\theta_2^+{(\Tilde{z}_2-a)}}+(g^+_q(0)-c_+g^-_q(0))\theta_1^+e^{-\theta_1^+{(\Tilde{z}_2-a)}}\right](z_2-z_1-\beta)}{(g^-_q(0)-c_-g^+_q(0))(e^{\theta^-_2(z_2-a)}-e^{\theta^-_2(z_1-a)})-(1-c_-)g^+_q(0)(e^{-\theta^-_1(z_2-a)-e^{\theta^-_1(z_1-a)})}}
\nonumber\\
=&
\liminf_{\mu_-\rightarrow\infty}\left\{\frac{(c_-e^{-\theta_2^-a}+(1-c_-)e^{\theta_1^-a})\left[(1-c_+)\theta_2^{+}e^{\theta_2^+{(\Tilde{z}_2-a)}}-c_+\theta_1^+e^{-\theta_1^+(\Tilde{z}_2-a)}\right](z_2-z_1-\beta)}{(1-c_-)e^{(\theta_1^--\theta_2^-)a}(e^{\theta_2^-z_2}-e^{\theta_2^-z_1}-e^{-\theta_1^-z_2}+e^{-\theta_1^-z_1})}\right.
\nonumber\\
&
\hspace{1.5cm}\left.+\frac{\theta_1^+e^{-\theta_2^-a}e^{-\theta_1^+(\Tilde{z}_2-a)}(z_2-z_1-\beta)}{(1-c_-)e^{(\theta_1^--\theta_2^-)a}(e^{\theta_2^-z_2}-e^{\theta_2^-z_1}-e^{-\theta_1^-z_2}+e^{-\theta_1^-z_1})}\right\}
\nonumber\\
=&
\liminf_{\mu_-\rightarrow\infty}\left\{\frac{\left[(1-c_+)\theta_2^{+}e^{\theta_2^+{(\Tilde{z}_2-a)}}-c_+\theta_1^+e^{-\theta_1^+(\Tilde{z}_2-a)}\right](z_2-z_1-\beta)}{e^{-\theta_2^-a}(e^{\theta_2^-z_2}-e^{\theta_2^-z_1}-e^{-\theta_1^-z_2}+e^{-\theta_1^-z_1})}
\right.
\nonumber\\
&\hspace{1.5cm}+\frac{\theta_1^+e^{-\theta_1^+(\Tilde{z}_2-a)}(z_2-z_1-\beta)}{(1-c_-)e^{\theta_1^-a}(e^{\theta_2^-z_2}-e^{\theta_2^-z_1}-e^{-\theta_1^-z_2}+e^{-\theta_1^-z_1})}
\nonumber\\
&\hspace{1.5cm}
\left.+\frac{c_-\left[(1-c_+)\theta_2^{+}e^{\theta_2^+{(\Tilde{z}_2-a)}}-c_+\theta_1^+e^{-\theta_1^+(\Tilde{z}_2-a)}\right](z_2-z_1-\beta)}{(1-c_-)e^{\theta_1^-a}(e^{\theta_2^-z_2}-e^{\theta_2^-z_1}-e^{-\theta_1^-z_2}+e^{-\theta_1^-z_1})}\right\}. 
\end{align}
Recall that 
\begin{align}\label{eq:prop3.2:a}
(1-c_-)e^{(\theta_1^--\theta_2^-)a}(e^{\theta_2^-z_2}-e^{\theta_2^-z_1}-e^{-\theta_1^-z_2}+e^{-\theta_1^-z_1})=g(z_2)-g(z_1)>0,
\end{align}
which implies that the denominators of three fractions on the right hand side of \eqref{V/V} are positive.
In view of \eqref{limc+} as well as the facts that $\lim_{\mu_-\rightarrow\infty}\theta_2^-=0$ and $\lim_{\mu_-\rightarrow\infty}\theta_1^-=\infty$, we have
\begin{align}
\label{eq:prop3.2:d}
\lim_{\mu_-\rightarrow\infty}\frac{\left[(1-c_+)\theta_2^{+}e^{\theta_2^+{(\Tilde{z}_2-a)}}-c_+\theta_1^+e^{-\theta_1^+(\Tilde{z}_2-a)}\right](z_2-z_1-\beta)}{e^{-\theta_2^-a}(e^{\theta_2^-z_2}-e^{\theta_2^-z_1}-e^{-\theta_1^-z_2}+e^{-\theta_1^-z_1})}
=\infty,
\end{align}
where we have used the fact that
\begin{align}
&\lim_{\mu_-\rightarrow\infty}\left[(1-c_+)\theta_2^{+}e^{\theta_2^+{(\Tilde{z}_2-a)}}-c_+\theta_1^+e^{-\theta_1^+(\Tilde{z}_2-a)}\right](z_2-z_1-\beta)
\nonumber\\
=&\left[(1-\frac{\theta_2^+}{\theta_2^++\theta_1^+})\theta_2^{+}e^{\theta_2^+{(\Tilde{z}_2-a)}}-\frac{\theta_2^+}{\theta_2^++\theta_1^+}\theta_1^+e^{-\theta_1^+(\Tilde{z}_2-a)}\right](z_2-z_1-\beta)
\nonumber\\
>&\left[(1-\frac{\theta_2^+}{\theta_2^++\theta_1^+})\theta_2^{+}-\frac{\theta_2^+}{\theta_2^++\theta_1^+}\theta_1^+\right](z_2-z_1-\beta)=0.\nonumber
\end{align}
In addition, it follows from \eqref{eq:prop3.2:a} and $\theta^+_1>0$ that
\begin{align}\label{eq:prop3.2:b}
\frac{\theta_1^+e^{-\theta_1^+(\Tilde{z}_2-a)}(z_2-z_1-\beta)}{(1-c_-)e^{\theta_1^-a}(e^{\theta_2^-z_2}-e^{\theta_2^-z_1}-e^{-\theta_1^-z_2}+e^{-\theta_1^-z_1})}>0, \quad \text{ for all }\,\mu_{-}\in\mathbb{R}_{+}.
\end{align}
We also note that $\lim_{\mu_-\rightarrow\infty}c_-=1$ due to $\lim_{\mu_-\rightarrow\infty}\theta^-_1=\infty$ and $\lim_{\mu_-\rightarrow\infty}\theta^-_2=0$. Hence, there exists $K_{0}\in\mathbb{R}_{+}$ such that $c_{-}>0$ for all $\mu_{-}\in [K_{0},\infty)$. This, together with $\theta^+_{i=1,2}>0$, \eqref{limc+} and \eqref{eq:prop3.2:a}, implies
\begin{align}\label{eq:prop3.2:c}
&
\frac{c_-\left[(1-c_+)\theta_2^{+}e^{\theta_2^+{(\Tilde{z}_2-a)}}-c_+\theta_1^+e^{-\theta_1^+(\Tilde{z}_2-a)}\right](z_2-z_1-\beta)}{(1-c_-)e^{\theta_1^-a}(e^{\theta_2^-z_2}-e^{\theta_2^-z_1}-e^{-\theta_1^-z_2}+e^{-\theta_1^-z_1})}
\nonumber\\
&\geq
\frac{c_-\left[(1-c_+)\theta_2^{+}-c_+\theta_1^+\right](z_2-z_1-\beta)}{(1-c_-)e^{\theta_1^-a}(e^{\theta_2^-z_2}-e^{\theta_2^-z_1}-e^{-\theta_1^-z_2}+e^{-\theta_1^-z_1})}=0,\quad \text{ for all }\, \mu_{-}\in [K_{0},\infty).
\end{align}
Hence, putting together with \eqref{V/V} and \eqref{eq:prop3.2:d}-\eqref{eq:prop3.2:c}, we obtain
\begin{align}
\liminf_{\mu_-\rightarrow\infty}\frac{\zeta(z_1,z_2)}{\zeta(\Tilde{z}_1,\Tilde{z}_2)}=\infty.
\end{align}
Then, there exists a constant $K>K_{0}$ such that, when $\mu_->K$, we have that $\zeta({z_1},{z_2})>\zeta({\Tilde{z}_1},{\Tilde{z}_2})$, which contradicts $(\Tilde{z}_1,\Tilde{z}_2)\in\mathcal{M}_{\zeta}$. Therefore, when $\mu_->K$, any admissible strategy $(\Tilde{z}_1,\Tilde{z}_2)$ satisfying $\beta\leq \Tilde{z}_1+\beta<a< \Tilde{z}_2$ can not be the global maximizer of $\zeta$, i.e., $(\Tilde{z}_1,\Tilde{z}_2)\notin\mathcal{M}_{\zeta}$. This 
yields that $\beta \leq z_1+\beta<z_2\leq a$ for any $(z_1,z_2)\in\mathcal{M}_{\zeta}\neq \emptyset$ when $\mu_->K$. The proof is complete.

 \section*{Acknowledgments}
The authors thank the editors and anonymous reviewers for their insightful and constructive feedback, which has significantly contributed to the improvement of this manuscript.

\end{document}